\definecolor{blueviolet}{rgb}{0.2, 0.2, 0.6}
\definecolor{webgreen}{rgb}{0,.5,0}
\definecolor{webbrown}{rgb}{.6,0,0}
\numberwithin{equation}{section}
\newtheorem{theorem}{Theorem}
\newtheorem{corollary}{Corollary}
\newtheorem{definition}{Definition}
\newtheorem{lemma}{Lemma}
\newtheorem{proposition}{Proposition}
\newtheorem{fact}{Fact}
\theoremstyle{definition}
\newcommand{\nocontentsline}[3]{}
\let\origcontentsline\addcontentsline
\newcommand\stoptoc{\let\addcontentsline\nocontentsline}
\newcommand\resumetoc{\let\addcontentsline\origcontentsline}
\newcommand{\rom}[1]{\mathtt{\uppercase\expandafter{\romannumeral #1\relax}}}
\DeclareMathOperator*{\E}{{\mathbb{E}}}
\DeclareMathOperator{\poly}{poly}
\providecommand{\bij}{\mathsf{bij}}
\newcommand{\eps}{\varepsilon}
\newtheorem*{theorem*}{Theorem}
\newtheorem*{task*}{Task}
\newtheorem*{proposition*}{Proposition}
\providecommand{\scC}{\mathsf{cC}}
\providecommand{\scD}{\mathsf{cD}}
\newcommand{\init}{\mathsf{init}}
\newcommand{\ee}{\end{equation}}
\newcommand{\calO}{\mathcal{O}}
\providecommand{\Id}{\mathbbm{1}}
\providecommand{\num}{\mathsf{num}}
\providecommand{\ssym}{\mathsf{sym}}
\providecommand{\ssym}{\mathsf{sym}}
\providecommand{\Haar}{\mathsf{Haar}}
\providecommand{\sA}{\mathsf{A}}
\providecommand{\sB}{\mathsf{B}}
\providecommand{\sL}{\mathsf{L}}
\providecommand{\sR}{\mathsf{R}}
\DeclareMathOperator{\Dom}{Dom}
\providecommand{\gsA}{{\textcolor{gray}{\mathsf{A}}}}
\providecommand{\gsA}{{\textcolor{gray}{\mathsf{A}_{\mathsf{abc}}}}}
\providecommand{\gsB}{{\textcolor{gray}{\mathsf{B}}}}
\providecommand{\gsC}{{\textcolor{gray}{\mathsf{C}}}}
\providecommand{\gsD}{{\textcolor{gray}{\mathsf{D}}}}
\providecommand{\gsE}{{\textcolor{gray}{\mathsf{E}}}}
\providecommand{\gsL}{{\textcolor{gray}{\mathsf{L}}}}
\providecommand{\gsR}{{\textcolor{gray}{\mathsf{R}}}}
\providecommand{\calD}{\mathcal{D}}
\providecommand{\calE}{\mathcal{E}}
\providecommand{\calO}{\mathcal{O}}
\providecommand{\calR}{\mathcal{R}}
\providecommand{\calS}{\mathcal{S}}
\providecommand{\calT}{\mathcal{T}}
\providecommand{\calW}{\mathcal{W}}
\DeclareMathOperator{\dist}{{dist}}
\newcommand{\expect}{\mathop{\mathbb{E}}}
\begin{document}
\fontsize{10}{12}\selectfont

\title{Random unitaries from Hamiltonian dynamics}

\author{Laura Cui}
\affiliation{California Institute of Technology, Pasadena, California 91125, USA}

\author{Thomas Schuster}
\affiliation{California Institute of Technology, Pasadena, California 91125, USA}
\affiliation{Google Quantum AI, Venice, California 90291, USA}

\author{Liang Mao}
\affiliation{California Institute of Technology, Pasadena, California 91125, USA}
\affiliation{Institute for Advanced Study, Tsinghua University, Beijing, 100084, China}

\author{Hsin-Yuan Huang}
\affiliation{California Institute of Technology, Pasadena, California 91125, USA}
\affiliation{Google Quantum AI, Venice, California 90291, USA}

\author{Fernando Brand\~ao}
\affiliation{AWS Center for Quantum Computing, Pasadena, California 91125, USA}
\affiliation{California Institute of Technology, Pasadena, California 91125, USA}

\date{\today}

\begin{abstract}
The nature of randomness and complexity growth in systems governed by unitary dynamics is a fundamental question in quantum many-body physics. This problem has motivated the study of models such as local random circuits and their convergence to Haar-random unitaries in the long-time limit. However, these models do not correspond to any family of physical time-independent Hamiltonians. In this work, we address this gap by studying the indistinguishability of time-independent Hamiltonian dynamics from truly random unitaries. On one hand, we establish a no-go result showing that for any ensemble of constant-local Hamiltonians and any evolution times, the resulting time-evolution unitary can be efficiently distinguished from Haar-random and fails to form a $2$-design or a pseudorandom unitary (PRU). On the other hand, we prove that this limitation can be overcome by increasing the locality slightly: there exist ensembles of random polylog-local Hamiltonians in one-dimension such that under constant evolution time, the resulting time-evolution unitary is indistinguishable from Haar-random, i.e.~it forms both a unitary $k$-design and a PRU. Moreover, these Hamiltonians can be efficiently simulated under standard cryptographic assumptions. 


\end{abstract}

\maketitle

\stoptoc
\section{Introduction}

Characterizing the emergence of universal chaotic and ergodic behaviors is a central goal of quantum many-body physics.
Seminal early works, for example, on spectral distributions and single-particle quantum chaos~\cite{wigner1967random,bohigas1984characterization,muller2004semiclassical}, and the eigenstate thermalization hypothesis~\cite{deutsch1991quantum,rigol2008thermalization}, have shown that small amounts of uncertainty in the parameters of a chaotic Hamiltonian can lead to universal fluctuations in its properties and observables.
%
In the last decade, tremendous further progress has been made in \emph{time-dependent} quantum systems, such as random quantum circuits, by capturing the emergence of universal chaotic behaviors using the notions of \emph{unitary $k$-designs}~\cite{emerson2003pseudo,gross2007evenly,dankert2005efficient,dankert2009exact,brandao2016local,nakata2016efficient,haah2024efficient, chen2024incompressibility,guo2024complexity,laracuente2024approximate,schuster2024random,schuster2024random,lami2025anticoncentration,grevink2025will,cui2025unitary,schuster2025strong,foxman2025random,schuster2025hardness,zhang2025designs} and \emph{pseudorandom unitaries} (PRUs)~\cite{ji2018pseudorandom,brakerski2019pseudo,metger2024simple,chen2024efficient,ma2025construct,schuster2024random,schuster2025strong,ananth2025pseudorandom}.
These objects capture how quickly the properties of a physical system can become indistinguishable from those of a completely Haar-random unitary.
Due to their elegance and broad range of physical predictions, unitary designs and PRUs have been widely applied in quantum benchmarking~\cite{emerson2005scalable,knill2008randomized,elben2023randomized,guta2020fast, huang2020predicting,zhao2021fermionic}, demonstrations of quantum advantage~\cite{arute2019quantum, morvan2023phase, abanin2025constructive}, and even fundamental questions in quantum gravity~\cite{hayden2007black,bouland2019computational,kim2020ghost,akers2022black,akers2024holographic,yang2025complexity}. More generally, they have also been employed to understand properties of physical dynamics through the lens of learnability~\cite{aharonov2021quantum,huang2021quantum,cotler2023information,schuster2024random}, information encoding and scrambling~\cite{BF13,yoshida2017efficient,landsman2019verified,blok2021quantum,BF12,nahum2017entgrowth,nahum2018operator,schuster2022many,schuster2025strong}, sampling complexity~\cite{boixo2018characterizing,bouland2019complexity,lami2025anticoncentration},  
computational power~\cite{brown2018second,haferkamp2022linear,chen2024incompressibility}, and the onset of quantum chaos~\cite{maldacena2016bound,roberts2017chaos,cotler2017chaos,pilatowsky2024hilbert,gu2024simulating,pilatowsky2024hilbert}.


Despite this success, random quantum circuits are imperfect models of physical systems in important ways.
Most prominently, they are time-dependent, and hence do not conserve energy, as any physical Hamiltonian time-evolution would.
Useful attempts to bridge this gap have focused on \emph{charge}-conserving random  circuits~\cite{kong2021charge,kong2022near,li2024efficient,li2024designs,liu2024unitary,mitsuhashi2025unitary,haah2025short}; however, such systems still miss many key features of physical Hamiltonian evolution.
For example, the conserved charge is usually especially simple and fixed; the conserved charge also does not govern the time-evolution of the system, as a physical Hamiltonian would.
%
Hence, it remains an open question to what extent unitary designs and PRUs can in fact capture the chaotic behaviors of physical time-independent Hamiltonian dynamics.
This question is fundamentally important, as most applications of random unitaries in many-body and high-energy physics are primally concerned with time-independent Hamiltonian systems.

In this work, we establish several central results on the formation of unitary designs and PRUs in time-independent Hamiltonian dynamics.
Our first result is a broad no-go theorem.
We provide a simple and efficient test to distinguish any ensemble of time-evolutions under any constant-local Hamiltonians for any lengths of time from a Haar-random unitary, using only two queries to the time-evolution and minimal classical computation.
This proves that time-evolution under constant-local Hamiltonians can never form PRUs, or unitary 2-designs with polynomially small error.
Motivated by this fact, our second result considers the behavior of Hamiltonians with a slightly higher locality.
In sharp contrast to our results on constant-local Hamiltonians, we prove that ensembles of time-evolution under random Hamiltonians of locality $\omega(\log n)$ can readily form both unitary $k$-designs and PRUs.
This holds even for constant evolution times, $t= \mathcal{O}(1)$.
We also prove several additional results on non-adaptive PRUs, which 
broaden the range of Hamiltonian ensembles to which our results apply.


Our results establish a need for nuance when applying random unitary descriptions to physical Hamiltonian dynamics.
In the long-term, this emphasizes the need for new and more precise formulations of universal chaotic behaviors that can capture the physics of local Hamiltonian systems.
Recent works suggest several promising approaches in this direction~\cite{pappalardi2022eigenstate,fava2025designs,cotler2023emergent,mark2024maximum}.
Our results on the formation of unitary designs and PRUs also complement other recent discoveries on the robust emergence of ergodic phenomena in Hamiltonians with a slightly increased locality~\cite{anschuetz2025strongly}.
From an applications perspective, our findings immediately imply fundamental limits on learning properties from time-independent Hamiltonian dynamics. 
For example, the complexity of learning the parameters of an unknown Hamiltonian~\cite{wiebe2014hamiltonian,evans2019scalable,haah2021optimal,huang2023learningb,dutkiewicz2023advantage,bakshi2024structure} must grow exponentially in the Hamiltonian locality, even in situations where the Hamiltonian itself possesses an efficient description.

\begin{figure}
    \centering
    \includegraphics[width=0.42\linewidth]{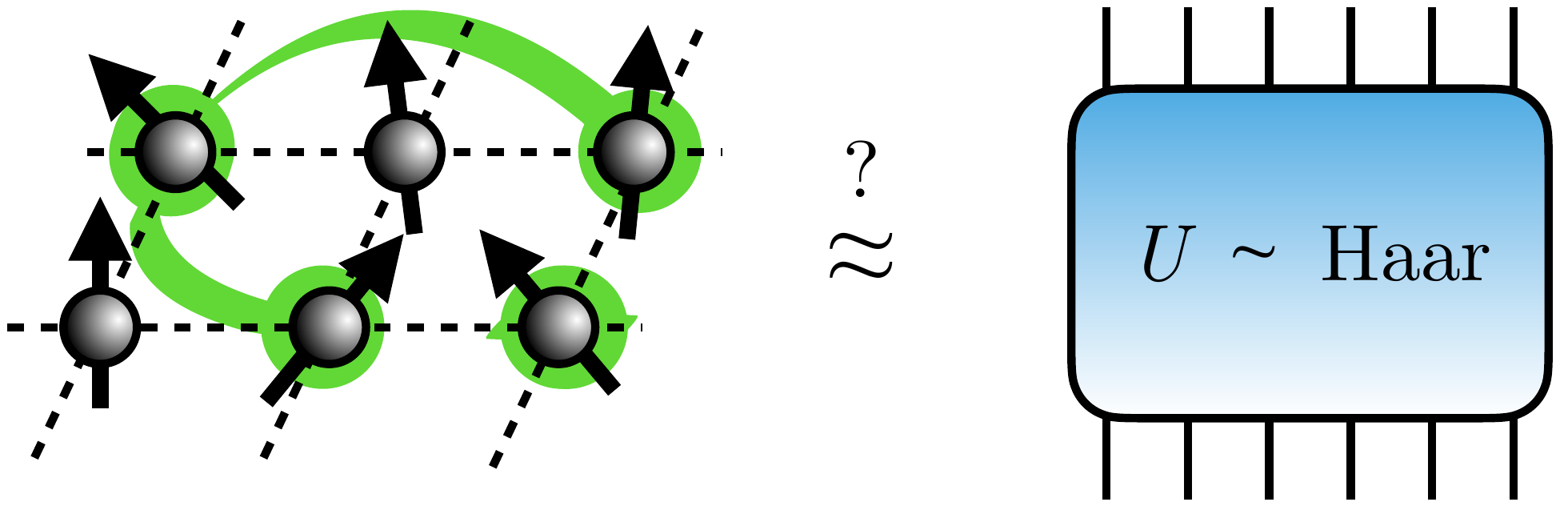}
    \caption{Our work investigates when time-independent Hamiltonian dynamics are asymptotically indistinguishable from random unitary transformations. We find that pseudorandomness cannot arise in any family of Hamiltonians with constant-local interactions, but can be achieved in quasi-local models.}
    \label{fig:dynamics random unitary}
\end{figure}


\subsection*{Relation to previous work}

Despite considerate interest in the use of random unitaries as models of physical systems~\cite{wigner1967random,bohigas1984characterization,muller2004semiclassical,deutsch1991quantum,rigol2008thermalization,hayden2007black,bouland2019computational,kim2020ghost,akers2022black,akers2024holographic,yang2025complexity,maldacena2016bound,roberts2017chaos,cotler2017chaos,pilatowsky2024hilbert,gu2024simulating,pappalardi2022eigenstate,fava2025designs,cotler2023emergent,mark2024maximum,kong2021charge,kong2022near,li2024efficient,li2024designs,liu2024unitary,mitsuhashi2025unitary,haah2025short,pilatowsky2024hilbert}, there are relatively few existing results on the realization of unitary designs and PRUs from time-independent Hamiltonian evolution.
This stems from the notorious difficulty of rigorously analyzing the dynamical properties of many-body quantum Hamiltonians.
Several works have considered time-\emph{dependent} Hamiltonians~\cite{nakata2016efficient,guo2024complexity}.
However, such models do not conserve energy, and share more features in common with random quantum circuits than with time-independent Hamiltonian systems. 
%
More relevantly, recent work has shown that time-evolution under a random time-\emph{independent} Hamiltonian, drawn from either the GUE ensemble on $n$ qubits or a pseudorandom variant of it, can become indistinguishable from a Haar-random unitary after a super-polynomially long evolution time, $t = \omega(\poly n)$~\cite{gu2024simulating}.
Our work substantially expands upon this result, by showing that time-evolution under nearly-local Hamiltonians, of locality $\omega(\log n)$ instead of $\mathcal{O}(n)$, can become indistinguishable from a Haar-random unitary after only very short evolution times, $t = \mathcal{O}(1)$ instead of $t = \omega(\poly n)$.
Our no-go theorems also highlight the precise limits of this indistinguishability for Hamiltonians of any smaller locality.






\section{Preliminaries}

Before turning to our results, we provide a short review of the definitions of unitary $k$-designs and pseudorandom unitaries (PRUs).
As aforementioned, both notions seek to capture the closeness of a random unitary ensemble $\mathcal{E}$ to the Haar ensemble, under various metrics of approximation.
A Haar-random unitary on $n$ qubits is a random unitary matrix drawn from from the Haar measure on $U(2^n)$.

An approximate unitary $k$-design is a unitary ensemble $\mathcal{E}$ that reproduces the $k$-th moment of the Haar ensemble up to a small  error.
In particular, an ensemble $\mathcal{E}$ is an approximate unitary $k$-design up to \emph{additive error} $\varepsilon$ if its $k$-th moment, $\Phi_\mathcal{E}(\cdot) \equiv U^{\otimes k} (\cdot) U^{\dagger, \otimes k}$, is close to the $k$-th moment $\Phi_H(\cdot)$ of the Haar ensemble in the diamond norm, $\lVert \Phi_\mathcal{E} - \Phi_H \rVert_{\diamond} \equiv \max_\rho \lVert \Phi_\mathcal{E}(\rho) - \Phi_H \rVert_1 \leq \varepsilon$~\cite{brandao2016local}.
This bounds the distinguishability of a random unitary drawn from $\mathcal{E}$ from a random unitary drawn from the Haar ensemble, by any quantum experiment that queries the random unitary $k$ times in parallel~\cite{cui2025unitary}.
A more recent and much stronger notion of approximation error for unitary designs is the measurable error~\cite{cui2025unitary}.
An ensemble $\mathcal{E}$ is an approximate unitary $k$-design up to \emph{measurable error} $\varepsilon$ if the expected output state of any quantum experiment that queries a random unitary drawn from $\mathcal{E}$ is close to the expected output state for the same experiment querying a Haar-random unitary up to small trace-norm error.
This bounds the distinguishability in any quantum experiment that queries the random unitary $k$ times and performs arbitrary quantum operations and measurements in between~\cite{cui2025unitary}.
Unless otherwise specified, we will utilize this strong form of measurable approximation error throughout our work.

A pseudorandom unitary (PRU) is a unitary ensemble $\mathcal{E}$ that is indistinguishable from the Haar ensemble in any \emph{efficient} quantum experiment.
In particular, an ensemble $\mathcal{E}$ is a PRU ensemble with security against any $t(n)$-time quantum adversary if it cannot be distinguished from Haar-random by any quantum experiment that runs in $t(n)$ time, where $t(n)$ is any function of $n$~\cite{ji2018pseudorandom,ma2025construct}.
Unless otherwise stated, in our work we will consider security against any polynomial-time quantum experiment, $t(n) = \poly n$.
Weaker forms of security can also be considered.
An ensemble $\mathcal{E}$ is a PRU ensemble with security against any \emph{non-adaptive} $t(n)$-time quantum adversary if it cannot be distinguished from Haar-random by any quantum experiment that runs in $t(n)$ time and queries many applications of the unitary $U$ all at once in parallel~\cite{metger2024simple,chen2024efficient}.
The standard (i.e.~adaptive) security of PRUs is analogous to the measurable error of unitary designs, and the non-adaptive security of PRUs is analogous to the additive error of unitary designs.

The standard definitions of approximate unitary $k$-designs and PRUs only capture quantum experiments that perform \emph{forward} queries to a random unitary, i.e.~those that query $U$~\cite{schuster2024random}.
Experiments that perform \emph{backward} queries, i.e.~which query both $U$ and its inverse $U^\dagger$, are not captured by these definitions~\cite{cotler2023information,schuster2024random}.
While forward-only access is sufficient for most applications of designs and PRUs, in some cases it can be desirable to capture queries to the inverse unitary as well.
To this end, one can define \emph{strong approximate unitary $k$-designs}~\cite{schuster2025strong} and \emph{strong PRUs}~\cite{ma2025construct,schuster2025strong} in an identical manner to standard designs and PRUs, but now allowing queries to the inverse (and conjugate and transpose) as well. 

\section{Impossibility of random unitaries in constant-local Hamiltonians}

We begin by addressing the most natural physical context: time-evolution under constant-local Hamiltonians.
We consider any unitary ensemble composed of time-evolution operators under any $q$-local Hamiltonians for any evolution times, 
\begin{equation}
    \mathcal{E} = \{ e^{-iHt} \, | \, (H, t) \sim \mathcal{D} \},
\end{equation}
where $\mathcal{D}$ is an arbitrary distribution over $q$-local Hamiltonians $H$ and times $t$.
For example, these include the temporal ensemble~\cite{mark2024maximum}, where one fixes the Hamiltonian $H$ and randomizes the time $t \sim [0,\infty)$.
They also include examples in which the time $t$ is fixed and the Hamiltonian itself is randomized~\cite{deutsch1991quantum}, as well as examples where both $H$ and $t$ are jointly distributed.

When the Hamiltonian $H$ is fixed, i.e.~only the time is randomized in $\mathcal{D}$, it is obvious that the ensemble $\mathcal{E}$ cannot form a unitary 1-design or PRU.
This follows because one can prepare an initial state with non-zero energy under $H$ (assuming without loss of generality that $\Tr(H) = 0$), apply a unitary drawn from $\mathcal{E}$, and then measure the final energy of the system.
If the unitary were Haar-random, the energy would equilibrate to zero with high probability.
In contrast, under any time-evolution by $H$, the energy is conserved and so remains equal to its initial value.

A more interesting setting is when the Hamiltonian $H$ is unknown a priori (i.e.~random).
This foils the naive distinguishing strategy above, since one does not known which basis to perform the initial state preparation and final measurement in.
Nonetheless, our main result on constant-local Hamiltonians shows that this difficulty can be surmounted.
We provide a simple and efficient test to distinguish any ensemble of $q$-local Hamiltonian evolutions from Haar-random using only two queries, product state preparation and read-out, and minimal classical computation.


%
This leads immediately to our no-go theorems on unitary $k$-designs (for any $k \geq 2$) and PRUs:
\begin{theorem}
    [Time-evolution under constant-local Hamiltonians cannot form unitary designs] \label{thm:nogodesign}
    The ensemble $\mathcal{E}$ cannot form a unitary 2-design for any additive error $\varepsilon \leq \mathcal{O}(1/12^q n)$ in one-dimensional systems, nor for any $\varepsilon \leq \mathcal{O}(1/12^q n^q)$ in all-to-all connected systems.
\end{theorem}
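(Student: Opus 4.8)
The plan is to construct an explicit two-query distinguisher and then convert its success probability into the claimed error lower bounds via the standard dictionary between distinguishing advantage and design/PRU error. The key physical fact is that a $q$-local Hamiltonian $H = \sum_a h_a$ has a conserved, \emph{local} quantity — the energy density near any site — and, more usefully, that $e^{-iHt}$ has an approximately light-cone-bounded structure in the sense that it (approximately) commutes with operators supported far from the support of $H$. The catch, as the text emphasizes, is that $H$ is unknown, so we cannot pick the energy eigenbasis. The workaround: rather than measure energy, detect the \emph{locality of correlations} produced by $e^{-iHt}$. Concretely, I would prepare a product state, apply the unknown unitary $U$ once, and measure a few qubits; then in a second independent run apply $U$ again and measure; compare. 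Under $e^{-iHt}$, because each $h_a$ is $q$-local, an operator on a single site is mapped under Heisenberg evolution to an operator whose ``effective support'' is controlled by the interaction (hyper)graph of $H$ — for a fixed geometry (1D: each qubit is in at most $O(q)$ terms, all-to-all $q$-local: at most $O(n^{q-1})$ terms), one can show that some small fixed set of qubits' reduced state retains a nontrivial signature. For a Haar-random $U$, by contrast, the reduced state on any small subset is exponentially close to maximally mixed with overwhelming probability. Counting the number of ``influenced'' qubits gives the $12^q n$ (1D) and $12^q n^q$ (all-to-all) scalings in the bound.

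First I would make precise the one-query test. Fix a Pauli $P$ on a single qubit $j$. Consider the expectation $\mathbb{E}_U \langle 0^n | U^\dagger P U | 0^n\rangle$ and, more to the point, the second moment $\mathbb{E}_U |\langle 0^n|U^\dagger P U|0^n\rangle|^2$, equivalently a two-copy / two-query quantity. For Haar $U$ on $2^n$ dimensions, $\mathbb{E}_{\Haar}|\langle 0^n|U^\dagger P U|0^n\rangle|^2 = O(2^{-n})$ by Weingarten calculus (it is essentially $\frac{1}{2^n+1}$). For $U = e^{-iHt}$, I want to lower-bound the \emph{average over the single-qubit Pauli $P$ and over the qubit $j$} of this quantity by something like $\Omega(1/(\text{number of terms touching } j))$: intuitively, $U^\dagger P U$ still has a large overlap with the span of operators supported on the qubits that appear in terms of $H$ containing $j$, and $|0^n\rangle$ is a product state, so some Pauli detects it. The cleanest route is to use $\sum_{P \text{ on qubit } j} |\langle 0^n|U^\dagger P U|0^n\rangle|^2 = 4 \, \Tr\big[ (U|0^n\rangle\langle 0^n|U^\dagger)_{\{j\}}^{\,2}\big] \cdot 2 - $ (purity-type identity), which equals (up to constants) the purity of the reduced state of $U|0^n\rangle$ on qubit $j$; this is always $\geq \tfrac12$ trivially, so the right object is the \emph{correlation} between two sites, or between a site and its local neighborhood, which for $e^{-iHt}$ is bounded below in terms of the causal structure. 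I would set up the exact combinatorial identity via Weingarten/Pauli expansion and then lower-bound the Hamiltonian side using that $[e^{-iHt}, P_S] = 0$ whenever $S$ is disjoint from $\supp(H)$-neighborhood of the relevant qubits, isolating the finite-dimensional subspace where the action lives.

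The main obstacle — and the step I would spend the most care on — is extracting a \emph{distribution-independent} lower bound on the Hamiltonian side that holds for \emph{every} $(H,t)$, including adversarially chosen small $t$ (where $e^{-iHt}\approx \Id$, which is obviously far from Haar but in a ``trivial'' way), large $t$, and degenerate Hamiltonians. The resolution is that the test I described never actually needs $t$ to be large: for \emph{any} $t$, $e^{-iHt}$ commutes exactly with any Pauli on a qubit outside the neighborhood of $H$'s support, so the two-point function $\langle 0^n| U^\dagger (P_j Q_{j'}) U|0^n\rangle$ for well-separated $j,j'$ factorizes, whereas for Haar $U$ it does not factorize in the same way — and crucially the \emph{deviation from factorization} for Haar is $\Theta(2^{-n})$ while for $e^{-iHt}$ it is exactly zero, giving an $\Omega(1)$ gap in the appropriate normalized statistic once we average over $O(\text{locality-many})$ choices of $(j,j',P,Q)$. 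So the test is really ``does $U$ create any correlations between well-separated regions at all?'' — a property any $q$-local $e^{-iHt}$ fails for sufficiently separated regions regardless of $t$. Finally I would translate: a distinguisher with advantage $\delta$ using two parallel forward queries implies the ensemble is not an additive-error $2$-design for $\varepsilon < \delta$, and since the distinguisher is efficient and non-adaptive it also rules out PRUs; tracking the averaging losses ($4^q$ from summing over Pauli choices on each of two blocks of size $q$, times $O(n)$ or $O(n^q)$ choices of which term, times small constants) yields $\delta = \Omega(1/(12^q n))$ in 1D and $\Omega(1/(12^q n^q))$ in the all-to-all case, which is exactly the statement.
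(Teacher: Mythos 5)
There is a genuine gap, and it sits precisely at the step you yourself flag as needing the most care. Your distinguishing mechanism is locality of correlations: you claim that for well-separated $j,j'$ the two-point function $\langle 0^n|e^{iHt}(P_jQ_{j'})e^{-iHt}|0^n\rangle$ factorizes ``for \emph{any} $t$,'' so that detecting \emph{any} long-range correlation separates $e^{-iHt}$ from Haar. This is false for generic $q$-local $H$ once $t$ is large. Lieb--Robinson only bounds the \emph{speed} of operator spreading; for $t\gtrsim n/v_{LR}$ (and the theorem must cover arbitrarily long, even exponentially long, $t$) the Heisenberg operators $e^{iHt}P_je^{-iHt}$ and $e^{iHt}Q_{j'}e^{-iHt}$ have overlapping, system-wide support, and there is no reason for the correlator to factorize on $|0^n\rangle$. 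Your parenthetical justification, that $e^{-iHt}$ commutes with Paulis ``on a qubit outside the neighborhood of $H$'s support,'' is vacuous here: in a 1D or all-to-all $q$-local $H$, every qubit lies in $\supp(H)$, so there is no such Pauli. So the quantity you propose to measure has no uniform-in-$t$ lower bound, and the argument does not close.

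The paper's proof gets a time-uniform signature from a different place: energy conservation, not spatial locality. It prepares a random stabilizer \emph{product} state $|u\rangle$, draws a random $q$-local Pauli $P_i$ appearing in $H=\sum_i h_iP_i$, measures $P_i\otimes P_i$ on two copies of $e^{-iHt}|u\rangle$, and lower-bounds $\E_{u,i}\langle u|e^{iHt}P_ie^{-iHt}|u\rangle^2$. The chain of bounds passes through the stabilizer twirl ($\E_u\langle u|Q|u\rangle^2 = 2^{-n}\Tr(Q\,\calW(Q))$ with $\calW$ suppressing Pauli weight by $(1/3)^{w}$), projects onto the Hamiltonian direction, and then uses the exact identity $\Tr(e^{iHt}P_ie^{-iHt}\,H)=\Tr(P_iH)=h_i$. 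That trace is conserved for \emph{all} $t$ even though $P_i(t)$ has spread over the whole system; this is the crucial observable your proposal is missing. Averaging $h_i^2/\sum_j h_j^2$ over the $M_q$ admissible Paulis gives $E_{\calE}\geq (1/3)^q/M_q$ with $M_q\leq 4^qn$ (1D) or $(4n)^q$ (all-to-all), which produces exactly the $12^q n$ and $12^q n^q$ denominators. (Incidentally, the $12^q$ is $3^q\cdot 4^q$, the stabilizer-twirl penalty times the Pauli count, not the $4^q\times 4^q$ you sketch.) To repair your argument you would need to replace ``factorization of separated two-point functions'' with a quantity that is exactly conserved under $e^{-iHt}$ and still efficiently samplable without knowing $H$'s eigenbasis; the energy overlap $\Tr(P_i(t)H)$ is that quantity.
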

\begin{theorem}
    [Time-evolution under constant-local Hamiltonians cannot form PRUs] \label{thm:nogoPRU}
    The ensemble $\mathcal{E}$ cannot form a PRU for any $q = \mathcal{O}(\log n)$ in one-dimensional systems, nor for any $q = \mathcal{O}(1)$ in all-to-all connected systems.
\end{theorem}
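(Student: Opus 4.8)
Here is the plan I would follow. Both theorems reduce to a single explicit two‑query test, after which the two statements are routine. Fix $\mathcal E=\{e^{-iHt}\}$; we may assume $\mathrm{Tr}(H)=0$, since an identity part of $H$ only contributes a global phase. Let $\mathcal P$ be the set of non‑identity ``$q$-local'' Paulis — those supported in a window of $q$ consecutive sites (one dimension) or of Hamming weight $\le q$ (all‑to‑all) — so that, in the normalized Hilbert–Schmidt inner product $\langle A,B\rangle:=\mathrm{Tr}(A^\dagger B)/2^n$, one has $H=\sum_{P\in\mathcal P}\langle P,H\rangle P$, and $N:=|\mathcal P|$ obeys $N\le\tfrac34 n\,4^q$ and $N\le(4n)^q$ in the two geometries. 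The test, given oracle access to an unknown $U$: sample a product state $|\psi\rangle=\bigotimes_i|\psi_i\rangle$ with each $|\psi_i\rangle$ a uniformly random single‑qubit stabilizer state, and a uniformly random $P\in\mathcal P$; query $U$ twice in parallel, each time on a fresh copy of $|\psi\rangle$, and measure the Pauli $P$ on each output, obtaining $b_1,b_2\in\{\pm1\}$; accept iff $b_1=b_2$. This uses only forward queries, product‑state preparation, product‑basis readout, and trivial classical processing, and it is oblivious to $H$ and to the distribution $\mathcal D$.

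Next I would compute the acceptance probability. Conditional independence of the two outcomes gives $\Pr[b_1=b_2\mid\psi,P,U]=\tfrac12\big(1+\langle\psi|U^\dagger PU|\psi\rangle^2\big)$, hence $\Pr[\text{accept}]=\tfrac12+\tfrac1{2N}\sum_{P\in\mathcal P}\E_\psi\langle\psi|U^\dagger PU|\psi\rangle^2$. For Haar‑random $U$ the state $U|\psi\rangle$ is Haar‑random, and the elementary identity $\E_\phi\langle\phi|P|\phi\rangle^2=1/(2^n+1)$ for non‑identity $P$ yields $\Pr_{\mathrm{Haar}}[\text{accept}]=\tfrac12+\tfrac1{2(2^n+1)}$. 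The heart of the argument is the matching lower bound on the $\mathcal E$ side:
\[
\sum_{P\in\mathcal P}\E_\psi\,\langle\psi|e^{iHt}Pe^{-iHt}|\psi\rangle^2\ \ge\ 3^{-q}\qquad\text{for every $q$-local traceless $H$ and every $t$.}
\]
To prove this I would first show, by a per‑qubit computation using $\E_{\psi_i}|\psi_i\rangle\!\langle\psi_i|^{\otimes2}=\tfrac16(I+\mathrm{SWAP})$, that $\E_\psi\langle\psi|A|\psi\rangle^2=\sum_{R\ \text{Pauli}}3^{-|R|}\langle R,A\rangle^2$ for any Hermitian $A$. Writing $\hat H:=H/\|H\|$ — a unit vector supported on Paulis of weight $\le q$ — Cauchy–Schwarz gives $\langle\hat H,A\rangle^2\le\sum_{|R|\le q}\langle R,A\rangle^2$, so $\E_\psi\langle\psi|A|\psi\rangle^2\ge 3^{-q}\langle\hat H,A\rangle^2$. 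Applying this with $A=e^{iHt}Pe^{-iHt}$ and invoking energy conservation $[\hat H,e^{iHt}]=0$, so that $\langle\hat H,e^{iHt}Pe^{-iHt}\rangle=\langle\hat H,P\rangle$, gives $\E_\psi\langle\psi|e^{iHt}Pe^{-iHt}|\psi\rangle^2\ge 3^{-q}\langle\hat H,P\rangle^2$; summing over $P\in\mathcal P$ and using $\sum_{P\in\mathcal P}\langle\hat H,P\rangle^2=\|\hat H\|^2=1$ finishes it. (If $H=0$ then $U=I$ and the left side is $\sum_{P\in\mathcal P}3^{-|P|}\ge N3^{-q}$, which is larger still.)

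Combining, the distinguishing advantage is at least $\tfrac{3^{-q}}{2N}-\tfrac1{2(2^n+1)}\ge\tfrac{3^{-q}}{4N}$ — valid once $2^n\gtrsim N3^q$, i.e.\ for all relevant $q$ — which is $\Omega(1/12^q n)$ in one dimension and $\Omega(1/12^q n^q)$ in the all‑to‑all case. \Cref{thm:nogodesign} follows because the test is a two‑copy \emph{parallel} experiment with one forward query per copy whose accepting POVM element has operator norm $\le1$, so an additive‑error‑$\varepsilon$ unitary $2$-design — and hence any $k$-design with $k\ge2$ — must have $\varepsilon$ at least this advantage, giving exactly the stated thresholds. \Cref{thm:nogoPRU} follows because the test runs in time $\poly(n)$ precisely when $N=\poly(n)$, i.e.\ $q=\mathcal O(\log n)$ in one dimension and $q=\mathcal O(1)$ all‑to‑all, and there its advantage is $1/\poly(n)$, contradicting PRU security.

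The only genuinely delicate point — and the crux of the proof — is the lower bound, specifically the inequality $\E_\psi\langle\psi|A|\psi\rangle^2\ge 3^{-q}\langle\hat H,A\rangle^2$ combined with energy conservation: this is what lets a single, Hamiltonian‑oblivious statistic certify that $U$ commutes with \emph{some} $q$-local operator without ever identifying it, and it is where both the factor $3^{-q}$ and the necessity of \emph{small} locality enter. The Haar‑side second moment, the combinatorial bounds on $N$, and the two reductions to the theorem statements are all standard.
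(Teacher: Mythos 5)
Your proof is correct and follows essentially the same route as the paper's: expand $\mathbb{E}_\psi\langle\psi|A|\psi\rangle^2$ as $\sum_R 3^{-|R|}\langle R,A\rangle^2$ via the single-qubit stabilizer twirl, drop the factor $3^{-q}$ by restricting to weight-$\leq q$ Paulis, project onto the $H$ direction by Cauchy--Schwarz (the paper's superoperator $\mathcal{P}_H$), invoke $[H,e^{iHt}]=0$, and sum over the $\leq 4^q n$ (resp.\ $(4n)^q$) candidate Paulis to extract an $\Omega(3^{-q}/N)$ distinguishing advantage against the Haar baseline $\approx 1/2^n$. The only cosmetic differences are that you phrase the projection step as Cauchy--Schwarz rather than via $\mathcal{P}_H$, write out the Haar-side second moment explicitly, and note the trivial $H=0$ case; the substance is identical, and your locality thresholds $q=\mathcal O(\log n)$ and $q=\mathcal O(1)$ match the theorem.
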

\noindent Both no-go theorems apply to any constant-local Hamiltonian time-evolution ensemble, even those with arbitrarily long evolution times.
The $q$-dependence of both theorems is optimal, following our results in the following section.
The proof of the theorems is summarized below and provided in detail in Appendix~\ref{app:nogo}.


Our distinguishing protocol is exceptionally simple, and bears a conceptual similarity to the distinguishing protocol discussed above when the Hamiltonian is fixed and known.
We draw a random stabilizer product state $\ket{u}$ and a random Pauli operator $P_i$ from the set of Pauli operators allowed to appear in $H$.
That is, if $H$ is one-dimensional, we draw $P_i$ from the set of all $q$-geometrically-local Pauli operators; there are $\mathcal{O}(4^q n)$ such operators.
If $H$ is all-to-all-connected, we draw $P_i$ from the set of all Pauli operators with weight less than or equal to $q$; there are $\mathcal{O}((4n)^q)$ such operators.
We then prepare two copies of the product state $\ket{u}$, perform time-evolution on both copies under the same random unitary drawn from $\mathcal{E}$, and measure the Pauli operator $P_i$ on both copies.
Our main result is a proof that the expected value of the two-copy operator, $P_i \otimes P_i$---i.e.~the probability that the measurements on both copies return the same value---is greater than a constant threshold value for any ensemble of constant-local Hamiltonian time-evolutions.
This allows one to distinguish constant-local Hamiltonian evolutions from Haar-random, since the same expectation value is near zero under Haar-random evolution.

Intuitively, this protocol succeeds because a random product state will have a non-negligible energy under the Hamiltonian with high probability whenever $H$ is constant-local.
This follows because each term in the Hamiltonian has non-zero expectation value in a random product basis with probability $\mathcal{O}(1/3^q)$.
After time-evolution, this non-negigible energy of the state must be stored in the expectation values of \emph{some} set of Pauli operators appearing in the Hamiltonian.
By selecting a random Pauli operator in the Hamiltonian, we can detect this energy with high probability, and distinguish the unitary time-evolution from Haar-random.
The precise dependence on $n$ and $q$ follows by multiplying the probability, $\mathcal{O}(1/3^q)$, with the inverse number of Pauli operators that can appear in the Hamiltonian (Appendix~\ref{app:nogo}).

Finally, in the following section, we provide an additional lower bound (Proposition~\ref{prop:temp}) which shows that any Hamiltonian time-evolution ensemble with polynomial evolution time requires $\mathcal{O}(nk)$ bits of randomness in the Hamiltonian.
This requires Hamiltonians of locality $q = \Omega(\log k)$ in one-dimensional systems, and $q = \Omega(\log k / \log n)$ in all-to-all connected systems.

\begin{figure}
    \centering
    \includegraphics[width=0.88\linewidth]{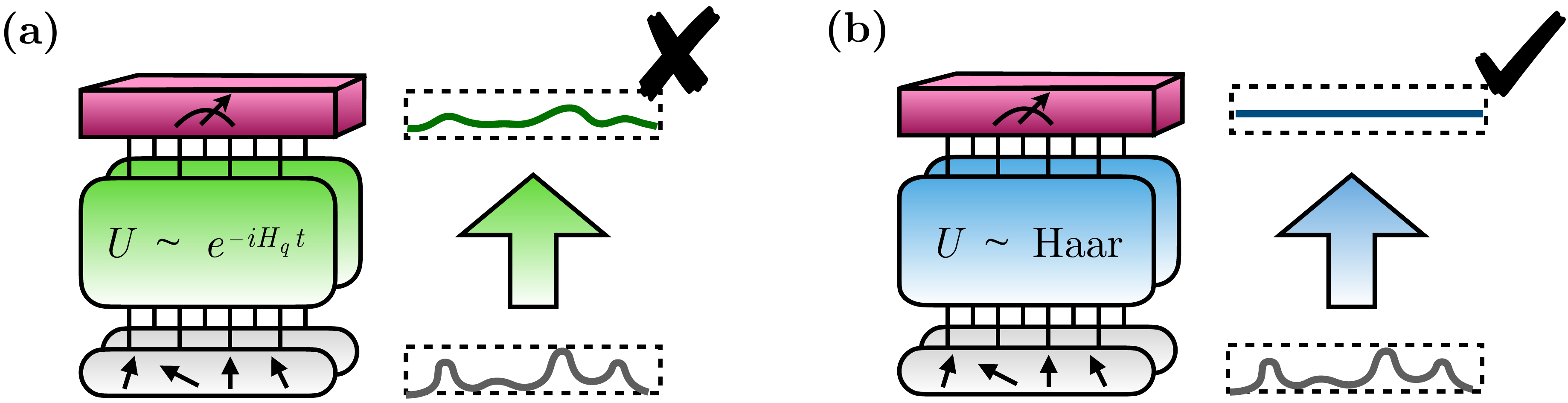}
    \caption{Illustration of our no-go result for constant-local dynamics. We show that it is possible to efficiently distinguish any ensemble $\calE$ generated by evolving under $q$-local Hamiltonians up to arbitrarily long times from a Haar-random unitary by measuring random Pauli operators of weight up to $q$ on two copies of the system, as \textbf{(a)} the output distribution of $\calE$ retains local correlations, while \textbf{(b)} the output distribution of the Haar ensemble appears completely uniform.}
    \label{fig:q-local vs haar}
\end{figure}


\section{Formation of random unitaries in nearly-local Hamiltonians}

Motivated by our no-go theorems for constant-local Hamiltonian time-evolution, we will now relax our requirements, and consider Hamiltonians whose locality increases very slowly with the system size, $q = \poly(\log n)$.
These Hamiltonians are not commonly realized in nature, where almost all systems are constant-local.
Nevertheless, the study of such  Hamiltonians with an increased locality has a long and fruitful history in high-energy, condensed matter, and mathematical physics~\cite{mezard2009information,steinacker2010emergent,maldacena2016remarks,qi2019quantum,lin2022infinite,berkooz2025cordial,swingle2024bosonic,guo2024complexity,anschuetz2025strongly,chen2024sparse}.
In principle, such Hamiltonians might emerge as effective models for systems restricted to a low-energy subspace of their full Hilbert space. 
From a theoretical perspective, they will also allow us to establish that the exponential dependence on the Hamiltonian locality $q$ in our no-go theorems is fundamental.

Our main result shows that short time-evolutions under Hamiltonians with logarithmic locality, $q = \mathcal{O}(\log n)$, are capable of forming $\varepsilon$-approximate unitary $k$-designs for any $\varepsilon = 1/\poly n$.
Meanwhile, short-time evolutions under Hamiltonian with any super-logarithmic locality, $q = \omega(\log n)$, are capable of forming PRUs.
This sharply contrasts with the behavior of Hamiltonians of lower locality, which are not capable of realizing designs or PRUs even after arbitrarily long evolution times, from Theorems~\ref{thm:nogodesign} and~\ref{thm:nogoPRU}.
In order to rigorously establish the formation of designs and PRUs, we consider the following random 1D Hamiltonian ensemble,
\begin{equation}\nonumber
    H = \left( \otimes_{i \in \text{even}} U_{i,i+1} \right)^\dagger  \left( \otimes_{i \in \text{odd}} U_{i,i+1} \right)^\dagger  \left( \sum_{i=1}^{n/\xi} \sum_{z \in \{0,1\}^{\xi}} \!\!\!\! J^i_z \dyad{z}_{i} \right)  \left( \otimes_{i \in \text{odd}} U_{i,i+1} \right)  \left( \otimes_{i \in \text{even}} U_{i,i+1} \right).
\end{equation}
Here, we divide $n$ qubits along a 1D line into $n/\xi$ patches of $\xi$ qubits each.
We label each patch with the index $i = 1,\ldots,n/\xi$.
The spectrum of the Hamiltonian is given by the central term, which is a sum of random energies $J^i_z \in [-1,1]$ in the computational basis on each patch.
The eigenbasis of the Hamiltonian is scrambled by the remaining unitary terms, which conjugate the Hamiltonian by a two-layer circuit composed of small random unitaries $U_{i,i+1}$ acting on pairs of nearest-neighbor patches~\cite{schuster2024random}.

We consider two instantiations of the random energies $J^i_z$ and the small random unitaries $U_{i,i+1}$, which will allow us to realize unitary $k$-designs and PRUs, respectively.
To realize approximate unitary $k$-designs, we draw each $U_{i,i+1}$ from a strong approximate unitary $k$-design on $2\xi$ qubits~\cite{schuster2025strong}, and each $J^i_z$ from an exact $k$-wise independent function on $\xi$ bits~\cite{fn_discretize,wegman1981new}.
To realize PRUs, we draw each small random unitary $U_{i,i+1}$ from a strong PRU on $2\xi$ qubits with security against any $\poly n$-time quantum adversary~\cite{ma2025construct,schuster2025strong}, and each $J^i_z$ from a pseudorandom function (PRF) on $\xi$ bits~\cite{zhandry2021PRF,fn_discretize}. 
In both cases, we then consider the unitary ensemble $\mathcal{E} = \{ e^{-iHt} \}$ where $H$ is drawn randomly as described and $t = \pi$.

\begin{figure}
    \centering
    \includegraphics[width=0.9\linewidth]{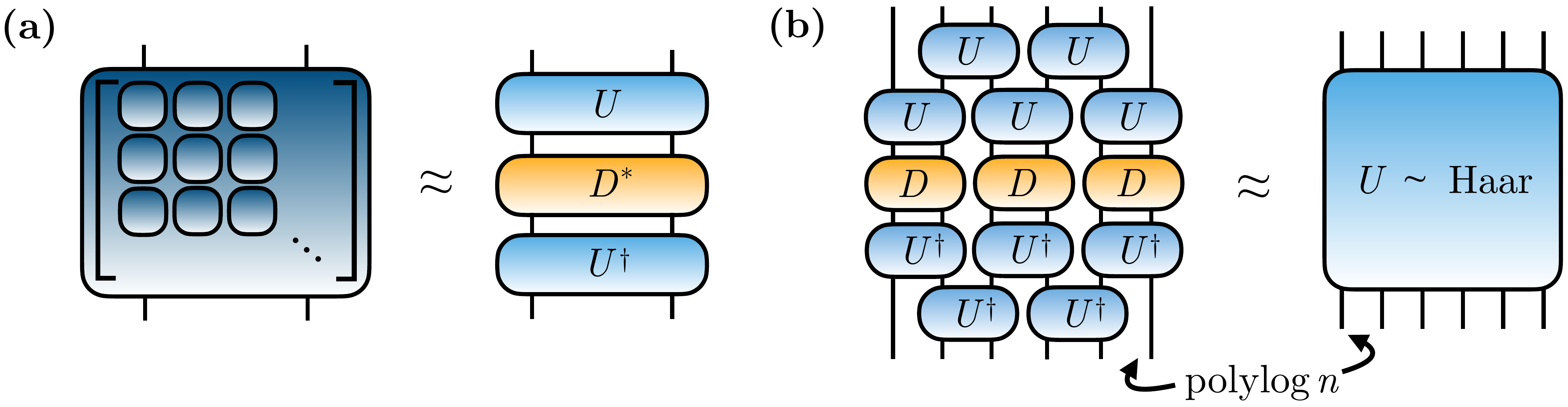}
    \caption{Graphical depiction of the asymptotic decomposition of random matrix ensembles into independent components. \textbf{(a)} Classical random matrix ensembles arising from Wigner matrices with i.i.d. elements converge to a deterministic diagonal distribution, conjugated by Haar-random eigenbasis transformations. \textbf{(b)} Our results show that Haar-random unitaries are indistinguishable from certain ensembles generated by transformations that act on subsystems of size $\mathcal{O}(\poly\log n)$.}
    \label{fig:random matrix}
\end{figure}

Our first main result is that the first time-evolution ensemble above forms an approximate unitary $k$-design with measurable error $\varepsilon$ whenever $\xi = \Omega(\log nk/\varepsilon)$.
\begin{theorem}[Unitary $k$-designs from time-evolution under nearly-local Hamiltonians] \label{thm:designs}
    Consider the random two-layer Hamiltonian ensemble  in which each small random unitary is drawn from a strong $\frac{\varepsilon}{n}$-approximate unitary $k$-design and each small random phase is an exact $k$-wise independent function.
    Then the random time-evolution $U = e^{-iHt}$ for $t = \pi$ forms an $\varepsilon$-approximate unitary $k$-design for any $\xi \geq 2 \log_2(nk/\varepsilon) + \mathcal{O}(1)$.
\end{theorem}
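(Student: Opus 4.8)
The plan is to reduce the time-evolution unitary $U = e^{-iHt}$ at $t=\pi$ to a structured circuit of the type analyzed in \cite{schuster2024random}, and then invoke (a quantitative, measurable-error version of) that analysis. The key observation is that the Hamiltonian is block-diagonalized by the two-layer circuit $W = (\otimes_{\text{odd}} U_{i,i+1})(\otimes_{\text{even}} U_{i,i+1})$: writing $D = \sum_{i,z} J^i_z \dyad{z}_i$ for the diagonal part, we have $H = W^\dagger D W$, hence $e^{-iHt} = W^\dagger e^{-iDt} W$. Since $D$ is diagonal in the computational basis and acts independently on each $\xi$-qubit patch, $e^{-iD\pi} = \bigotimes_i \operatorname{diag}(e^{-i\pi J^i_z})_{z \in \{0,1\}^\xi}$ is a tensor product of independent diagonal unitaries on each patch, whose phases $\pi J^i_z$ (mod $2\pi$) are, by the exact $k$-wise independence of the $J^i_z$, a $k$-wise independent family of phases on each patch. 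So $U$ is exactly a two-layer brickwork of strong $k$-designs sandwiching a layer of random diagonal unitaries — precisely the ``$D$-conjugated'' ensemble that \cite{schuster2024random} (and its strong/measurable-error strengthening \cite{schuster2025strong,cui2025unitary}) shows converges to a $k$-design.

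The key steps, in order, are: (i) establish the algebraic identity $e^{-iH\pi} = W^\dagger e^{-iD\pi} W$ and note that the outer unitaries $W, W^\dagger$ are themselves a product of independent strong $\frac{\varepsilon}{n}$-approximate $k$-designs on disjoint patch-pairs, so by the standard composition/tensoring lemma for strong designs, each of the two layers is a strong $O(\varepsilon/n \cdot n/\xi) = O(\varepsilon)$-approximate $k$-design on all $n$ qubits (I would track the error additively across the $n/\xi$ tensor factors and across the $O(1)$ layers, which is where the ``$+O(1)$'' and the factor $2$ in $\xi \geq 2\log_2(nk/\varepsilon)$ will come from); (ii) identify the middle layer $e^{-iD\pi}$ as a tensor product of patch-local random diagonal unitaries with $k$-wise independent phases, and argue this is ``$k$-wise equivalent'' to a layer of Haar-random diagonal unitaries on each patch — here one uses that the $k$-th moment operator of a random diagonal unitary depends only on the joint distribution of any $k$ of its phases, so $k$-wise independence suffices exactly; (iii) invoke the main technical theorem of \cite{schuster2024random} in its measurable-error form: a two-layer brickwork of (strong) $k$-designs with a layer of random diagonal phases in between forms an approximate $k$-design, with error controlled once the patch size $\xi$ exceeds $\approx 2\log_2(nk/\varepsilon)$; (iv) collect the error terms from (i) and (iii) via a triangle inequality on the measurable-error metric (which, per the preliminaries, behaves well under composition) to get total error $\leq \varepsilon$.

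The main obstacle is step (iii) together with the bookkeeping in step (i): the cited convergence result in \cite{schuster2024random} is stated for random \emph{circuits} built from exact Haar-random (or exact-design) gates, and one needs (a) a version with \emph{measurable} error rather than the weaker additive/diamond-norm error, (b) robustness to replacing the exact gates by strong $\frac{\varepsilon}{n}$-approximate designs, and (c) the precise constant — the threshold $\xi \geq 2\log_2(nk/\varepsilon) + O(1)$ — which requires tracking how the single-patch ``mixing'' error compounds across the $n/\xi$ patches and $O(1)$ layers. I expect (a) and (b) to follow from the strong-design machinery of \cite{schuster2025strong,cui2025unitary} essentially as a black box (strong designs compose and tensor with additive error accumulation), so the real work is a careful, quantitative rerun of the \cite{schuster2024random} argument: showing that after the two design layers, the residual non-uniformity of the $k$-th moment operator is suppressed by a factor exponentially small in $\xi$ (roughly $2^{-\xi} \cdot \poly(n,k)$), and demanding this be $\leq \varepsilon$ yields the stated bound on $\xi$. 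A secondary subtlety is confirming that $t=\pi$ (rather than a generic $t$) is genuinely needed only to make the phases $\pi J^i_z$ with $J^i_z \in [-1,1]$ range over a full period $[-\pi,\pi]$, so that $k$-wise independent $J^i_z$ induce phases whose low moments match the Haar-diagonal ones — any fixed $t$ of order $1$ with this full-period property would do, and I would remark on this.
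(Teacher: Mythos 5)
Your plan reduces the claim to the two-layer brickwork result of \cite{schuster2024random}, treating $U = e^{-iH\pi} = W^\dagger e^{-iD\pi} W$ as ``a two-layer brickwork of strong $k$-designs sandwiching a layer of random diagonal unitaries.'' This is the key step that fails, and the paper explicitly flags it: the outer layers here are not two \emph{independent} brickwork layers, they are $W$ and its inverse $W^\dagger$, i.e.~the \emph{same} small random unitaries appear on both sides of the diagonal layer. The gluing argument of \cite{schuster2024random} crucially relies on the independence of the successive layers being stitched together; with the palindromic structure $W^\dagger D W$ that independence is destroyed, and the paper's own text says in so many words that ``the gluing results of~\cite{schuster2024random} strictly do not apply to our setting'' because of the inverse. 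The fix in the paper is a genuinely new lemma (Lemma~\ref{lemma:gluing}), proved with the path-recording machinery, which shows directly that the \emph{conjugated} product $U_{\mathsf{bc}}^\dagger (U_{\mathsf{ab}} \otimes U_{\mathsf{cd}}) U_{\mathsf{bc}}$ (with the same $U_{\mathsf{bc}}$ on both sides) is close to a Haar-random $U_{\mathsf{abcd}}$. The actual proof then goes: (a) absorb each $U^\dagger_{i,i+1} F_{i,i+1} U_{i,i+1}$ for odd $i$ into a fresh Haar-random $U'_{i,i+1}$ via the conjugated-phase argument of \cite{gu2024simulating} (Lemma~\ref{lemma:conj-phase}) — this step is morally present in your steps (ii) and parts of (i), and is where the $k$-wise independence and $t=\pi$ are used; (b) iteratively apply the new conjugation-gluing lemma, from left to right, to the remaining structure $(\otimes_{\mathrm{even}} U_{i,i+1})^\dagger (\otimes_{\mathrm{odd}} U'_{i,i+1}) (\otimes_{\mathrm{even}} U_{i,i+1})$. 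No appeal to \cite{schuster2024random} is made or possible at step (b).

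A secondary error in step (i): the claim that ``each of the two layers is a strong $O(\varepsilon)$-approximate $k$-design on all $n$ qubits'' is false. A tensor product of $k$-designs on disjoint $\xi$-qubit patches is a $k$-design for the product group $U(2^\xi)^{\times n/\xi}$, not for $U(2^n)$; a depth-2 brickwork on $n/\xi$ patches is likewise nowhere near a global $k$-design, and the whole point of the gluing analysis is to upgrade the local designs into a global one. As stated, your derivation of the $\xi \geq 2\log_2(nk/\varepsilon) + O(1)$ threshold rests on this incorrect premise together with the inapplicable invocation of \cite{schuster2024random}, so the error accounting would need to be redone from scratch once the conjugation-gluing lemma is in hand. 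The quantitative threshold in the paper comes from summing the per-patch conjugated-phase error $\mathcal{O}(k/2^{\xi/2})$ over $n/2\xi$ patches and the conjugation-gluing error $\mathcal{O}(k^2/2^{\xi/4})$ over the $n/2\xi$ gluing steps, plus the per-gate design error $\varepsilon/n$ over $\mathcal{O}(n/\xi)$ gates.
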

\noindent Our second main result is that the second time-evolution ensemble above forms a PRU with security against any $\poly n$-time quantum adversary whenever $\xi = \omega(\log n)$.
\begin{theorem}[PRUs from time-evolution under nearly-local Hamiltonians] \label{thm:PRUs}
    Consider the random two-layer Hamiltonian ensemble  in which each small random unitary is drawn from a strong PRU ensemble and each small random phase is drawn from a strong PRF ensemble.
    Then the random time-evolution $U = e^{-iHt}$ for $t = \pi$ forms a PRU, for any $\xi = \omega(\log n)$.
\end{theorem}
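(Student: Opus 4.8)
The plan is to reduce Theorem~\ref{thm:PRUs} to Theorem~\ref{thm:designs} via a hybrid argument, using the explicit closed form of the time-evolution unitary. Since the central term of $H$ is a sum of single-patch operators, which commute, we have $e^{-iH\pi} = W^\dagger \big( \bigotimes_{i} D_i \big) W$, where $W = \big(\bigotimes_{i \in \text{odd}} U_{i,i+1}\big)\big(\bigotimes_{i \in \text{even}} U_{i,i+1}\big)$ is the depth-two brickwork of the small unitaries and $D_i = \sum_{z \in \{0,1\}^{\xi}} e^{-i\pi J^i_z} \dyad{z}$ is a diagonal unitary on patch $i$ with phases determined by $J^i_z \in [-1,1]$. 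Two features of this form matter: a single forward query to $e^{-iH\pi}$ invokes each block $U_{i,i+1}$ both forwards and backwards (hence strong PRU building blocks are the relevant primitive), and when the $J^i_z$ are uniform on $[-1,1]$ the phase $e^{-i\pi J^i_z}$ is uniform on the unit circle, so each $D_i$ becomes a Haar-random diagonal unitary on $2^\xi$ dimensions.

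First I would fix an arbitrary polynomial-time adversary $\mathcal{A}$ making $q = \poly(n)$ queries, and define an idealized ensemble $\mathcal{E}_{\text{ideal}}$ by replacing every strong PRU block $U_{i,i+1}$ with a Haar-random unitary on $2\xi$ qubits and every pseudorandom phase function $J^i$ with a uniformly random function (into the same fine $\poly(n)$-bit discretization of $[-1,1]$ used in the design construction). There are only $n/\xi = O(n)$ blocks and $O(n)$ phase functions, each queried at most $\poly(n)$ times, so replacing them one at a time — simulating the not-yet-replaced blocks and phases with the efficient strong PRU and PRF and the already-replaced ones with efficient strong approximate $k$-designs and exact $k$-wise independent functions for $k \geq q$ — bounds the distinguishing advantage between $\mathcal{E}$ and $\mathcal{E}_{\text{ideal}}$ by a negligible function, using the security of the strong PRU and strong PRF at each hybrid step (plus the negligible additional error from swapping Haar for efficient designs in the simulation).

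Second I would apply Theorem~\ref{thm:designs} to $\mathcal{E}_{\text{ideal}}$: a Haar-random unitary is an exact strong $k$-design (hence a strong $\varepsilon/n$-approximate one for any $\varepsilon$), and a uniformly random function is exactly $k$-wise independent for every $k$. Thus for any polynomial $k$ and any $\varepsilon = 1/\poly(n)$, $\mathcal{E}_{\text{ideal}}$ is an $\varepsilon$-approximate unitary $k$-design with measurable error whenever $\xi \geq 2\log_2(nk/\varepsilon) + O(1)$. Because $\xi = \omega(\log n)$, for each fixed polynomial $k$ this holds with some super-polynomially small $\varepsilon = 2^{-(\xi/2 - \log_2(nk) - O(1))} = 2^{-\omega(\log n)}$. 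Taking $k \geq q$, the measurable-error guarantee bounds the advantage with which $\mathcal{A}$ distinguishes $\mathcal{E}_{\text{ideal}}$ from Haar-random by $\varepsilon$, which is negligible. Combining with the hybrid bound via the triangle inequality shows $\mathcal{A}$ distinguishes $\mathcal{E}$ from Haar-random with negligible advantage; as $\mathcal{A}$ was an arbitrary polynomial-time adversary, $\mathcal{E}$ is a PRU. (The claimed locality $q = O(\xi) = \omega(\log n)$ follows since conjugating a single-patch term by the depth-two brickwork spreads its support over $O(1)$ patches.)

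The main obstacle I expect is the hybrid step, specifically keeping the reduction efficient: one cannot simulate Haar-random blocks in polynomial time, so they must be replaced by efficient strong approximate $k$-designs with $k$ exceeding $\mathcal{A}$'s query count, and one has to verify that this introduces only negligible extra error (again via the measurable-error design guarantee for the building blocks) and that each hybrid step still queries only a single copy of the challenge block or phase function. A secondary technical point, inherited unchanged from Theorem~\ref{thm:designs}, is ensuring the discretization of the phases is fine enough that the induced diagonal ensemble matches the Haar-random diagonal ensemble to negligible error in all relevant moments. The remainder is bookkeeping: counting that there are polynomially many building blocks, and that $\xi = \omega(\log n)$ dominates $2\log_2(nk/\varepsilon)$ for every polynomial $k$ and inverse-polynomial $\varepsilon$.
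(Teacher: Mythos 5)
Your proof is correct and follows essentially the same route as the paper. The paper proves Theorems~\ref{thm:designs} and~\ref{thm:PRUs} in a single combined argument: it first replaces each strong PRU and PRF with a truly Haar-random unitary and truly random phase by computational indistinguishability (the hybrid step you flag, which the paper elides in one line), and then bounds the measurable error of the resulting ideal ensemble via Lemma~\ref{lemma:conj-phase} and Lemma~\ref{lemma:gluing}. Your version simply modularizes this by invoking Theorem~\ref{thm:designs} (applied to Haar blocks, which are exact strong $k$-designs, and truly random phases, which are exactly $k$-wise independent) as a black box for the second half, which is a clean reorganization rather than a different idea. Two things you do better than the paper's exposition: you notice that the hybrid reduction cannot directly simulate Haar-random neighbor blocks and must substitute efficient strong $k$-designs with $k$ at least the adversary's query count, and you track the asymptotics to confirm that $\varepsilon \sim nk\,2^{-\xi/2}$ is negligible precisely when $\xi = \omega(\log n)$ and $k = \poly(n)$. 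Your parenthetical on locality is a small slip — conjugating a single patch by the two-layer brickwork gives support on $3\xi = \Theta(\xi)$ qubits, not a larger constant number of patches times $\xi$, but this only affects a cosmetic claim and not the PRU result itself.
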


Our proofs of Theorems~\ref{thm:designs} and~\ref{thm:PRUs} leverage the path-recording framework~\cite{ma2025construct} and make several significant advancement beyond previous work.
In particular, despite the appearance of the same two-layer random unitary circuit in our Hamiltonian as in~\cite{schuster2024random}, the gluing results of~\cite{schuster2024random} strictly do not apply to our setting.
This owes to the presence of the inverse unitary circuit in our Hamiltonian definition, which is necessary to ensure that the Hamiltonian is Hermitian.
Instead, we prove Theorems~\ref{thm:designs} and~\ref{thm:PRUs} by establishing a new random unitary gluing lemma that incorporates \emph{conjugation} by a Haar-random unitary and its inverse.
Namely, for any subsystems $\mathsf a, \mathsf b, \mathsf c, \mathsf d$, we prove that the product of random unitaries, $U_{\mathsf{bc}}^\dagger (U_{\mathsf{ab}} \otimes U_{\mathsf{cd}} ) U_{\mathsf{bc}}$, is indistinguishable from a random unitary on the union of the subsystems, $U_{\mathsf{abcd}}$.
This holds up to a measurable error that decays exponentially in the size of $\mathsf b$ and $\mathsf c$.
Despite the recent proliferation of gluing lemmas in the literature~\cite{grevink2025will,zhang2025designs,cui2025unitary,foxman2025random,schuster2025strong,schuster2025hardness,ananth2025pseudorandom}, this is the first gluing lemma to our knowledge to incorporate the commonplace phenomenon of unitary conjugation.
This incorporation requires a substantially new and more intricate proof approach, owing to the presence of the inverse unitary $U^\dagger_{\mathsf{bc}}$.

To apply this gluing lemma to our ensemble of interest, we decompose each Hamiltonian time-evolution as follows,
\begin{equation} \nonumber
    e^{-iH \pi} = \left( \otimes_{i \in \text{even}} U_{i,i+1} \right)^\dagger  \left( \otimes_{i \in \text{odd}} U_{i,i+1} \right)^\dagger  \left( \otimes_{i} D_i \right)  \left( \otimes_{i \in \text{odd}} U_{i,i+1} \right)  \left( \otimes_{i \in \text{even}} U_{i,i+1} \right),
\end{equation}
where $D_i \equiv e^{i\pi \sum_i \sum_z J^i_z \dyad{z}_i}$ are uniformly random phases in the computational basis on each patch.
We proceed in three steps.
First, following earlier work~\cite{gu2024simulating}, we show that the uniform random phases $D_i$, are indistinguishable from the spectrum of a Haar-random unitary on patch $i$.
This allows us to replace each $D_i$---within its conjugation by the random unitary $U_{i,i+1}$---with a small Haar-random unitary $U_i$~\cite{gu2024simulating}.
%
Second, we apply our conjugation gluing lemma $n/2\xi$ times, to replace each of the resulting $U^\dagger_{i,i+1}(U_i \otimes U_{i+1}) U_{i,i+1}$ with a Haar-random unitary $U'_{i,i+1}$. 
In total, these two steps yield the random unitary ensemble $(\otimes_{i \in \text{even}} U_{i,i+1} )^\dagger (\otimes_{i \in \text{odd}} U'_{i,i+1} ) (\otimes_{i \in \text{even}} U_{i,i+1} )$.
To complete the proof, we apply our gluing lemma $n/2\xi$ additional times, in sequence from left to right, to glue each $U'_{i,i+1}$ into a larger random unitary acting on all qubits to its left.
This results in a Haar-random unitary on the entire system of $n$ qubits, as desired.
We perform a detailed analysis of the errors incurred in this approach in Appendix~\ref{app:adaptive}, which  yields Theorems~\ref{thm:designs} and~\ref{thm:PRUs}. 
%

A natural follow-up question concerns the generality of our results in Theorems~\ref{thm:designs} and~\ref{thm:PRUs}. 
Does the fast formation of random unitaries hold for generic ensembles of $\mathcal{O}(\log n)$-local Hamiltonians, beyond the specific ensemble considered here?
We provide some evidence towards a positive answer in the following theorem, which extends our proof of the formation of unitary designs and PRUs to a much broader range of Hamiltonian spectra and evolution times.
For simplicity, we derive this extension only for additive error unitary designs and PRUs with non-adaptive security, unlike our main results which hold for measurable error unitary designs and adaptive security.
%
\begin{theorem}[Additive error designs from generic spectra and evolution times] \label{thm:generic-designs}
    Consider the random two-layer Hamiltonian ensemble in Theorem~\ref{thm:designs}, but in which each random diagonal term, $\sum_z J^i_z \dyad{z}_i$, is replaced with any fixed Hamiltonian $H_i$.
    The resulting time-evolution $U = e^{-iHt}$ forms an additive-error $\varepsilon$-approximate unitary $k$-design for any $H_i$ and any time $t$ such that $|\! \Tr(e^{-iH_i t})|^2 = o(\varepsilon/nk^2)$ for all $i$.
\end{theorem}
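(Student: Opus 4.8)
## Proof plan for Theorem~\ref{thm:generic-designs}

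The plan is to follow the same three-step strategy used for Theorem~\ref{thm:designs}, but to replace the first step—where the random diagonal phases $D_i$ were shown indistinguishable from the spectrum of a Haar-random unitary on patch $i$—with a direct argument that works for arbitrary $H_i$ in the additive-error, non-adaptive setting. Concretely, I would first recall that a non-adaptive $k$-query distinguisher sees only the $k$-th moment operator $\Phi_{\mathcal E}^{(k)}$, so it suffices to bound $\lVert \Phi_{\mathcal E}^{(k)} - \Phi_{\Haar}^{(k)} \rVert_\diamond$. Write $V \equiv e^{-iH_i t}$ for the fixed on-patch unitary, conjugated by the random on-patch unitary $U_{i,i+1}$ as in the decomposition of $e^{-iH\pi}$. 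The key observation is that the \emph{exact} statement we need in Step~1 is not ``$V$ looks Haar-random'' but rather ``$U^\dagger_{i,i+1} V \, U_{i,i+1}$, averaged over a $k$-design $U_{i,i+1}$, looks like $U^\dagger_{i,i+1} W \, U_{i,i+1}$ for Haar-random $W$ on the patch, in additive error.''

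The heart of the argument is therefore a single-patch comparison lemma: for any fixed unitary $V$ on a $d$-dimensional space and $U$ Haar-random (or drawn from a strong $k$-design) on that space, the $k$-th moment of $U^\dagger V U$ differs from the $k$-th moment of a Haar-random unitary by an amount controlled by how far the $\mathrm{U}(d)$-character data of $V$ is from that of a Haar-random unitary. The relevant quantity is exactly $|\Tr(V)|^{2} = |\Tr(e^{-iH_i t})|^2$, which enters because, upon averaging $U^{\otimes k} (\cdot) U^{\dagger\otimes k}$ over the unitary group and Weingarten-contracting, the leading non-Haar contribution to the moment operator of $U^\dagger V U$ is a sum of terms each carrying a factor of the form $\Tr(V)^{a}\Tr(V^\dagger)^{b}$ with $a+b \le k$ (plus lower-order Weingarten corrections of size $O(1/d)$, which are negligible since $d = 2^\xi = \poly(n)$ is large). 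Bounding $|\Tr(V)| = |\Tr(V^\dagger)|$ by the hypothesis $|\Tr(e^{-iH_i t})|^2 = o(\varepsilon/nk^2)$ and using $|\Tr(V^j)| \le d$ trivially for the remaining powers, one gets that a single patch contributes $o(\varepsilon/n k^2)\cdot\poly(k)$ to the diamond-norm deviation; I would organize this so that the clean bound is $O(|\Tr(e^{-iH_it})|^2 \cdot k^2)$ per patch after accounting for the $\binom{k}{2}$-type combinatorial factor from choosing which tensor legs ``see'' the trace of $V$.

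With the single-patch lemma in hand, the remaining two steps are essentially verbatim from the proof of Theorem~\ref{thm:designs}, except that everything is run at the level of additive error / non-adaptive security rather than measurable error / adaptive security—which only makes the bookkeeping easier, since additive errors compose by subadditivity of the diamond norm under composition of channels. So: (i) replace, patch by patch, the conjugated fixed unitary $U_{i,i+1}^\dagger \,e^{-iH_it}\, U_{i,i+1}$ by $U_{i,i+1}^\dagger W_i U_{i,i+1}$ with $W_i$ Haar on the patch, paying $O(|\Tr(e^{-iH_it})|^2 k^2)$ per patch and $n/\xi$ patches total; (ii) apply the conjugation gluing lemma $n/2\xi$ times to collapse each $U_{i,i+1}^\dagger(W_i\otimes W_{i+1})U_{i,i+1}$ into a single Haar-random $U'_{i,i+1}$; (iii) apply the gluing lemma $n/2\xi$ more times, left to right, to merge the $U'$'s into one Haar-random unitary on all $n$ qubits. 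Summing, the total additive error is $O\!\big((n/\xi)\cdot|\Tr(e^{-iH_it})|^2 k^2\big) + (\text{gluing errors, exponentially small in }\xi) + (\text{design error of the }U_{i,i+1})$, and the hypothesis $|\Tr(e^{-iH_it})|^2 = o(\varepsilon/nk^2)$ makes the first term $o(\varepsilon)$; the PRU version follows by the standard hybrid argument replacing ``Haar / $k$-design'' with ``PRU'' and ``exact bound'' with ``computational indistinguishability.''

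The main obstacle I anticipate is the single-patch comparison lemma: carrying out the Weingarten calculus for $U^\dagger V U$ to extract precisely which terms are controlled by $|\Tr(V)|^2$ and verifying that all remaining terms are either genuinely Haar (they cancel against $\Phi_{\Haar}^{(k)}$) or are suppressed by $1/d$, requires some care with the permutation sums and with the fact that $V$ is an arbitrary fixed unitary rather than (say) diagonal. A clean way to sidestep the full computation is to note that $U^\dagger V U$ is unitarily invariant in distribution under $V \mapsto W V W^\dagger$, hence its moment operator depends only on the conjugacy class of $V$, i.e. only on the multiset of eigenvalue-phases of $V$; then one can reduce to a comparison of class functions and bound the deviation by the $L^2$ distance of the (shifted) empirical spectral measure of $V$ from uniform, which is in turn dominated by $\sum_{j=1}^{k}|\Tr(V^j)|^2/d^2$, and the $j=1$ term $|\Tr(V)|^2/d^2$ is the binding one once $d$ is polynomially large. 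I would present the lemma in this spectral-measure form, since it also makes transparent why the condition is stated as $|\Tr(e^{-iH_it})|^2 = o(\varepsilon/nk^2)$ rather than in terms of the full spectrum.
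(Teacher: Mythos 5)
Your high-level plan—replace the fixed diagonal term with a Haar-random unitary patch-by-patch and then invoke the conjugation gluing lemma—matches the structure of the paper's argument. The difference is in how you propose to prove the single-patch replacement step. The paper establishes this step via the path-recording framework (Proposition~\ref{prop:random H moments}): it replaces the Haar eigenbasis twirl with the parallel path-recording oracle $W^{(k)}$ conjugated by a $2$-design $C$, restricts to a distinct subspace, and then shows that the only ``collision'' terms which survive the $2$-design twirl of $D \otimes D^\dagger$ are proportional to $\lvert\Tr D\rvert^2$ and $\Tr\lvert D\rvert^2 = N$. Because the twirl acts on exactly two copies, higher trace powers $\Tr(D^j)$ with $j\geq 2$ simply never appear; the $V$-independent residue is absorbed into the explicit $\mathcal{O}(k^2/N)$ term. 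You propose instead to do direct Weingarten calculus on $U^\dagger V U$ and bound the deviation by $\sum_{j}\lvert\Tr(V^j)\rvert^2/d^2$, asserting that the $j\geq 2$ contributions are $\mathcal{O}(1/d)$ and hence negligible. This is plausible but is exactly the kind of bookkeeping you would need to actually carry out: for an arbitrary fixed $V$ one can have $\lvert\Tr(V^j)\rvert^2/d^2 = \Theta(1)$ for some $j\geq 2$ (e.g.\ $V$ with eigenvalues $\pm 1$), and what saves the theorem is that these contributions enter with an additional $1/d$ from the Weingarten weights, not from the trace itself. The paper's construction sidesteps this by never producing higher cycles in the first place; your route would have to track them through the $S_{2k}$ permutation sum. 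So this is a genuinely different route to the same single-patch lemma, and it buys a more classical, representation-theoretic derivation, at the cost of a combinatorial verification that the paper's path-recording argument renders unnecessary.

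One further imprecision worth noting: in your step~(i) you write the replacement as $U_{i,i+1}^\dagger\, e^{-iH_i t}\, U_{i,i+1} \mapsto U_{i,i+1}^\dagger\, W_i\, U_{i,i+1}$, but $e^{-iH_i t}$ acts on a single $\xi$-qubit patch while $U_{i,i+1}$ acts on the pair $(i,i+1)$ of $2\xi$ qubits. The unit that actually appears in the decomposition of $e^{-iH\pi}$ is $U_{i,i+1}^\dagger\,(e^{-iH_i t}\otimes e^{-iH_{i+1}t})\,U_{i,i+1}$, so the single-patch comparison lemma should be applied to the product $D_i\otimes D_{i+1}$ on the pair, with $\lvert\Tr(D_i\otimes D_{i+1})\rvert^2 = \lvert\Tr D_i\rvert^2\lvert\Tr D_{i+1}\rvert^2$; alternatively one replaces $D_i \to W_i$ on each half separately and then immediately applies the gluing lemma, which is what the paper's step~(ii) does. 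Either way the hypothesis $\lvert\Tr(e^{-iH_i t})\rvert^2 = o(\varepsilon/nk^2)$ is more than sufficient, but the registers need to be matched correctly for the error accounting to make sense.
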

\begin{theorem}[Non-adaptive PRUs from generic spectra and evolution times] \label{thm:generic-PRUs}
    Consider the random two-layer Hamiltonian ensemble in Theorem~\ref{thm:PRUs}, but in which each random diagonal term, $\sum_z J^i_z \dyad{z}_i$, is replaced with any fixed Hamiltonian $H_i$.
    The resulting time-evolution $U = e^{-iHt}$ forms a PRU with non-adaptive security for any $H_i$ and any time $t$ such that $|\! \Tr(e^{-iH_i t})|^2 = o(1/\poly n)$ for all $i$.
\end{theorem}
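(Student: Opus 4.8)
The plan is to combine a routine computational hybrid with an information-theoretic estimate that extends Theorem~\ref{thm:generic-designs}. Non-adaptive security against any $\poly n$-time adversary involves only forward, parallel queries, and follows from two facts: (i) by security of the strong-PRU building blocks, each small strong PRU $U_{i,i+1}$ may be swapped for a strong Haar-random unitary on $2\xi$ qubits, with negligible total error since there are $O(n/\xi) = \poly n$ blocks and each swap costs at most the (strong) PRU advantage of one small unitary against a $\poly n$-time adversary; and (ii) the two-layer Hamiltonian ensemble with \emph{Haar-random} small unitaries and fixed spectra $D_i := e^{-iH_i t}$ forms an additive-error $\poly n$-design with negligible error, given $|\Tr D_i|^2 = o(1/\poly n)$. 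Fact (i) is standard, so the content is (ii), which I would establish by re-running the three-step argument behind Theorem~\ref{thm:designs} with its first step replaced by a spectral decoupling estimate.

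\emph{Step 1 (decoupling the fixed spectrum).} Write $e^{-iHt} = W^\dagger (\otimes_i D_i) W$ with $W = (\otimes_{\mathrm{odd}\,i} U_{i,i+1})(\otimes_{\mathrm{even}\,i} U_{i,i+1})$, and group the innermost layer as $\otimes_{\mathrm{odd}\,i} [\, U_{i,i+1}^\dagger (D_i\otimes D_{i+1}) U_{i,i+1}\,]$. I would show via Weingarten calculus that, for $U_{i,i+1}$ Haar-random on $2\xi$ qubits, the $k$-th moment channel of each block $U_{i,i+1}^\dagger (D_i\otimes D_{i+1}) U_{i,i+1}$ lies within $O(k^2)\,(|\Tr D_i|^2 |\Tr D_{i+1}|^2 + 2^{-\Omega(\xi)})$ in diamond norm of that of a Haar-random unitary on the patch pair: since $\E_V (V^\dagger A V)^{\otimes k}$ depends on $A$ only through the traces $\Tr(A^\sigma)$, $\sigma\in S_k$, and $A = D_i\otimes D_{i+1}$ contributes $\prod_{c}\Tr(D_i^{|c|})\Tr(D_{i+1}^{|c|})$ per permutation, the leading deviation from the Haar value is carried by the spectral form factors $|\Tr D_i|^2$, while higher traces $\Tr(D_i^m)$ are suppressed by inverse powers of the patch dimension $2^\xi$. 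Summing over the $O(n/\xi)$ patch pairs — with $k = \poly n$, so $O(nk^2)\max_i|\Tr D_i|^2$ stays negligible — the entire innermost layer is negligibly close, in additive error, to $\otimes_{\mathrm{odd}\,i} V_{i,i+1}$ with the $V_{i,i+1}$ fresh Haar-random unitaries on the patch pairs. (Equivalently, as in~\cite{gu2024simulating}, one may first collapse each $D_i$ to a single-patch Haar-random $U_i$.)

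\emph{Steps 2--3 (gluing).} We are left with $(\otimes_{\mathrm{even}\,i} U_{i,i+1})^\dagger (\otimes_{\mathrm{odd}\,i} V_{i,i+1}) (\otimes_{\mathrm{even}\,i} U_{i,i+1})$ — the even layer conjugating independent Haar unitaries on the odd blocks — which is precisely the hypothesis of our conjugation gluing lemma, taking $(\mathsf a,\mathsf b,\mathsf c,\mathsf d)$ to be the four patches straddling a given even bond ($V$'s on $\mathsf{ab}$ and $\mathsf{cd}$, $U$ on $\mathsf{bc}$). Applying it to each even bond and then sequentially from left to right, exactly as in the proof of Theorem~\ref{thm:PRUs}, merges all blocks into a single Haar-random unitary on all $n$ qubits. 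The gluing lemma holds up to a measurable error of order $2^{-\Omega(\xi)}$ per merge and there are $O(n/\xi)$ merges, so the accumulated error is negligible for $\xi = \omega(\log n)$ — more than enough for the additive, forward-query bound needed for non-adaptive security. Combining with Step 1 and fact (i) yields the claimed non-adaptive PRU property.

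\emph{Main obstacle.} The delicate point is Step 1. In Theorems~\ref{thm:designs}/\ref{thm:PRUs} one also averages over the random phases $D_i$, which makes $U_{i,i+1}^\dagger (D_i\otimes D_{i+1}) U_{i,i+1}$ agree with the Haar moments up to corrections that are trivial to bound; with $D_i = e^{-iH_it}$ fixed this averaging is gone, and one must argue that a single Haar-random change of basis already erases every feature of $D_i$ beyond its eigenvalue multiset, and then that the only low-degree symmetric functions of that multiset surviving at leading order in $2^{-\xi}$ are powers of $\Tr D_i$. Pinning the correction to $O(k^2)\,|\Tr D_i|^2$ — rather than, say, $2^{O(k\xi)}|\Tr D_i|$ — is exactly what singles out the spectral-form-factor hypothesis, and I expect this Weingarten estimate, together with carefully tracking its accumulation across the $O(n/\xi)$ blocks and the subsequent gluing errors within the path-recording framework, to be the bulk of the technical work.
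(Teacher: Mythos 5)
Your proposal follows the same decouple-then-glue structure as the paper's argument, which proves this theorem by substituting Proposition~\ref{prop:random H moments} into each odd-bond block $U_{i,i+1}^\dagger (D_i\otimes D_{i+1}) U_{i,i+1}$ and then applying the non-adaptive conjugation gluing lemma (Lemma~\ref{lemma:glue patch moments}) $n/\xi$ times. Your Steps 2--3 are essentially that Lemma; and your computational hybrid (i) replacing each strong-PRU block with a Haar-random one is correct and, while suppressed, is implicit in the paper's proof as well. The genuine difference is in Step 1. Where you propose a direct Weingarten-calculus estimate showing that the $k$-th moment channel of $U^\dagger D U$ deviates from Haar only through spectral form factors, the paper instead packages the same estimate inside the path-recording framework: Proposition~\ref{prop:random H moments} builds an approximate purification of the $k$-fold twirl of $U^\dagger D U$, inserts a projector onto the complement of $\Im(W^R)$, and bounds the resulting cut-off by a coincidence term that reduces (after a $2$-design twirl of $D$) to $\E_{C'}\big[(C')^{\otimes 2}(D\otimes D^\dagger)(C'^\dagger)^{\otimes 2}\big]$, whose leading contribution is precisely $k^2\,\abs{\Tr D}^2/N^2$. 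That calculation is exactly the ``delicate point'' you flag, and it confirms your expected $O(k^2)\abs{\Tr D}^2$ scaling rather than the pessimistic $2^{O(k\xi)}\abs{\Tr D}$ alternative; so your obstacle is real but the answer comes out favorably. Two smaller remarks. First, your parenthetical aside that one may ``equivalently, as in~\cite{gu2024simulating}, first collapse each $D_i$ to a single-patch Haar-random $U_i$'' does not hold here: that collapse requires the eigenvalues to be CUE-distributed, whereas for a fixed spectrum all you have is that $\abs{\Tr D_i}^2$ is small, which is precisely why Theorem~\ref{thm:generic-PRUs} carries this spectral-form-factor hypothesis rather than allowing a full replacement by Haar. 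Second, your gluing step should invoke the non-adaptive Lemma~\ref{lemma:glue patch moments} (additive error) rather than the adaptive one used in Theorem~\ref{thm:PRUs}, although the adaptive statement is of course strictly stronger and would also suffice.
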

\noindent Our proofs of both theorems are provided in Appendix~\ref{app:non-adaptive}.

We conclude this section with a few remarks.
First, we emphasize that the Hamiltonians we consider are efficiently described, by only a $\poly n$ number of random variables.
They are also efficient to simulate on a quantum device for any evolution time.
Both of these properties follow immediately from our use of strong unitary $k$-designs~\cite{schuster2025strong}, $k$-wise independent functions~\cite{wegman1981new}, strong PRUs~\cite{ma2025construct,schuster2025strong}, and PRFs~\cite{zhandry2021PRF}, all of which have efficient descriptions and can be efficiently simulated.
This is not guaranteed for general $\omega(\log n)$-local Hamiltonians, which may require $\omega(\poly n)$ parameters to specify.

Second, one might wonder: Is it necessary for the Hamiltonian of the system to be completely random, or might a smaller amount of randomness suffice?
Indeed, a common setting for thermalization considers the scenario where the Hamiltonian is fixed, and the time is randomized instead.
Unfortunately, we find that nearly all of the randomness of a unitary design composed of Hamiltonian time-evolution must come from uncertainty in the Hamiltonian itself:
\begin{proposition}
    [Impossibility of unitary designs from efficient temporal ensembles] \label{prop:temp}
    Consider any unitary ensemble composed of time-evolution by at most $N_H$ many Hamiltonians for evolution times between $0$ and $T$.
    Let $h \equiv \max_H \lVert H \rVert_\infty$ denote the maximum spectral norm of any Hamiltonian in the ensemble.
    Then the maximum evolution time $T$ must be at least $T = \Omega(2^{nk}/(N_H h (k+1)!))$.
\end{proposition}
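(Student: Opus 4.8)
The plan is to argue by a counting / volume argument: a unitary $k$-design on $n$ qubits must contain ``many'' distinct unitaries, because the Haar $k$-th moment operator $\Phi_H$ cannot be well-approximated by the moment operator of an ensemble supported on too few points. Concretely, if $\mathcal E$ is an $\varepsilon$-approximate $k$-design (in any of the senses discussed above, since the additive/diamond notion is the weakest and is implied by the others) and $\mathcal E$ is supported on $N$ distinct unitaries $V_1,\dots,V_N$, then $N$ must be at least roughly the dimension of the $k$-th order moment space, which is the dimension of the commutant of the diagonal action of $U(d)$ on $(\mathbb C^d)^{\otimes k}$ — i.e.\ at least $\dim \Phi_H \geq$ something like $d^{2k}/(k!)^2$ with $d = 2^n$, for $k \leq d$. (A clean way to see a bound of this flavor: $\Phi_H$ is a projector onto a space of dimension $\geq \binom{d^2+k-1}{k} \geq (d^2/k)^k / k!$, while $\Phi_{\mathcal E} = \frac1N\sum_j \Phi_{V_j}$ is an average of $N$ rank-one-ish channels in the relevant sense, so if $N$ is too small $\Phi_{\mathcal E}$ cannot have large enough rank to be $\varepsilon$-close to $\Phi_H$.) The upshot is $N = \Omega(2^{2nk}/(k!)^2) = \Omega(2^{nk}/(k+1)!)$ after absorbing constants, which is the $2^{nk}/(k+1)!$ factor appearing in the statement.

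Next I would count how many distinct unitaries a temporal-type ensemble can actually produce. By hypothesis $\mathcal E$ is supported on $\{ e^{-iH_a t} : a \in [N_H],\ t \in [0,T] \}$. For a fixed Hamiltonian $H_a$ with $\lVert H_a\rVert_\infty \leq h$, the curve $t \mapsto e^{-iH_a t}$ is $h$-Lipschitz in operator norm: $\lVert e^{-iH_a t} - e^{-iH_a t'} \rVert \leq h |t - t'|$. So an $\eta$-net (in operator norm) of this curve over $t \in [0,T]$ has size $O(hT/\eta)$, and over all $a$ the whole support of $\mathcal E$ is covered by an $\eta$-net of size $O(N_H h T/\eta)$. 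I then need a robustness statement: if $\mathcal E$ is a genuine $\varepsilon$-approximate $k$-design, then the discretized ensemble supported on the net is still an $O(\varepsilon + k\eta)$-approximate $k$-design — this holds because replacing each $V$ by a nearby $V'$ with $\lVert V - V'\rVert \leq \eta$ changes $\Phi_V$ by $O(k\eta)$ in diamond norm (telescoping over the $k$ tensor factors). Choosing $\eta$ a small constant multiple of $1/k$ (or simply $\eta = \Theta(1)$, since $\varepsilon$ is presumably already small), the discretized ensemble is still an $O(\varepsilon)$-design supported on $O(N_H h T k)$ points. Combining with the lower bound from the previous paragraph gives $N_H h T k = \Omega(2^{nk}/(k+1)!)$, hence $T = \Omega(2^{nk}/(N_H h (k+1)!))$, possibly after adjusting the $(k+1)!$ to absorb the extra factor of $k$.

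The main obstacle — and the step I would be most careful about — is the cardinality lower bound on the support of an approximate $k$-design: getting the dependence to come out as exactly $2^{nk}/(k+1)!$ rather than something weaker requires the right notion of ``dimension'' and the right approximation argument. The cleanest route is probably to test $\Phi_{\mathcal E}$ against a carefully chosen input: feed in (a purification of) the maximally mixed state on $k$ systems, so that the output of $\Phi_H$ is (proportional to) the projector onto the symmetric-commutant subspace, whose rank $r$ is $\binom{d^2+k-1}{k}$ for $k \leq d$; meanwhile $\Phi_{\mathcal E}$ applied to the same input is an average of $N$ states each of rank $\leq$ something much smaller than $r/N$ when $N$ is small, forcing the trace distance to be $\geq 1 - N\cdot(\text{small}) > \varepsilon$ unless $N = \Omega(r)$. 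Making ``rank of $\Phi_V$ of this input'' precise (it is at most $d^k$, or with more care even smaller) and checking $r/d^k = \Omega(2^{nk}/(k+1)!)$ is the crux. One subtlety worth flagging: this argument naturally handles $k \leq 2^n$; for the regime of interest (polynomial $k$) this is automatic. Everything else — the Lipschitz net, the telescoping perturbation bound, the final arithmetic — is routine.
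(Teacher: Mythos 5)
Your plan is essentially the paper's proof: discretize the time axis into $T/\tau$ steps, use a Lipschitz/telescoping bound of the form $\lVert (e^{-iHt})^{\otimes k} - (e^{-iH\ell\tau})^{\otimes k}\rVert \leq k\lVert H\rVert_\infty\,|t-\ell\tau|$ to show the moment output is $O(k h \tau)$-close to a state of rank at most $N_H T/\tau$, and then invoke a rank-versus-flat-spectrum lower bound against the Haar moment. The one place the paper's argument is cleaner than your sketch, and exactly at the spot you yourself flag as ``the crux,'' is the choice of test input: rather than (a purification of) the maximally mixed state, the paper feeds $\ket{0^n}^{\otimes k}$ into $\Phi_{\mathcal E}$. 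With that input, the Haar output is \emph{exactly} the maximally mixed state on the symmetric subspace of $(\mathbb{C}^{2^n})^{\otimes k}$, so its rank is $\binom{2^n+k-1}{k}$ with a flat spectrum, and each discretized $V_j^{\otimes k}\ket{0^n}^{\otimes k}$ is a pure state, so the discretized ensemble output trivially has rank at most $N_H T/\tau$. This removes the ambiguity in your sketch between ``each component has rank $1$'' and ``rank $\leq d^k$,'' and the standard bound $\|\rho-\sigma\|_1 \geq 2(1-r/r_0)$ for $\rho$ of rank $r$ against a flat-spectrum $\sigma$ of rank $r_0>r$ then gives the stated $T = \Omega(2^{nk}/(N_H h (k+1)!))$ directly, with the extra factor of $k$ from the net size absorbed into the $(k+1)!$ exactly as you anticipated.
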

\noindent The required time is exponential in the number of qubits $n$ for any time-evolution ensemble that involves a subexponential number of Hamiltonians.
The ensemble can only be efficient if there are at least $N_H = \Omega(2^{nk})$ Hamiltonians, in which case the denominator in lower bound on $T$ cancels the numerator. 

Our proof of Proposition~\ref{prop:temp} follows from a simple argument.
Any approximate unitary design must contain at least $\mathcal{O}(nk)$ bits of randomness~\cite{brandao2016local}.
Meanwhile, randomizing the evolution time over a window $[0,T]$ can produce at most $\mathcal{O}(\log_2(\lVert H \rVert_\infty T))$ bits of randomness, since there are at most $\lVert H \rVert_\infty T$ unitaries $e^{-iHt}$ that are distinguishable from all other such unitaries.
For any normalized Hamiltonian and polynomial evolution time, this number of bits is at most $\mathcal{O}(\log n)$, which is exponentially smaller than the required bits of randomness $\mathcal{O}(nk)$.
Hence, the remaining $\mathcal{O}(nk)$ bits of randomness must come from a lack of knowledge of an extensive number of parameters of the Hamiltonian itself.

\section{Discussion}

Our results raise several interesting questions for future work.
First, while our results rule out unitary designs and PRUs from constant-local Hamiltonian dynamics, can state designs and pseudorandom states still emerge from constant-local Hamiltonian dynamics?
Our no-go theorems do not extend to random states, since a random Hamiltonian ensemble could be chosen adversarially to have precisely zero energy with respect to a fixed initial state.
Second, can one define an appropriate analog of unitary designs and PRUs to capture constant-local Hamiltonian time-evolutions?
In particular, the Scrooge ensemble provides a promising approach for time-evolved quantum states in certain scenarios~\cite{jozsa1994lower,mark2024maximum}; however, it is unclear how to extend such notions to unitary time-evolutions.
Our results also suggest that uncertainty in the parameters of the Hamiltonian itself is essential to generating enough randomness to form universal random ensembles, but it is not clear how to incorporate this insight in the context of the Scrooge ensembles.
Answering these questions could yield important practical applications, for example in analog quantum experiments and simulations, and fundamental new insights on the nature of complexity and randomness in physical systems.

\section*{Acknowledgments}

We are grateful to Daniel Mark, John Preskill, and Adam Shaw for insightful discussions. T.S. acknowledges support from the Walter Burke Institute for Theoretical Physics at Caltech. T.S. and H.H. acknowledge support from the U.S. Department of Energy, Office of Science, National Quantum Information Science Research Centers, Quantum Systems Accelerator. The Institute for Quantum Information and Matter is an NSF Physics Frontiers Center.

\clearpage
\bibliographystyle{ieeetr}
\bibliography{refs}

\resumetoc



\let\oldaddcontentsline\addcontentsline
\renewcommand{\addcontentsline}[3]{}
\let\addcontentsline\oldaddcontentsline

\onecolumngrid
\newpage
\appendix

\noindent 
\textbf{\LARGE{}Appendices}
\vspace{2em}


\tableofcontents

\section{Review of Haar-random unitaries, unitary designs, and pseudorandom unitaries}

In this section, we provide a more detailed review of the definitions and key properties of Haar-random unitaries, unitary designs, and pseudorandom unitaries (PRUs).

\subsection{Haar-random unitaries}

We now review properties of Haar-random unitaries, which form a uniformly and maximally random distribution against which we compare other ensembles. In this context, the Haar measure is given by the unique translation-invariant measure on the unitary group $U(N)$.
\begin{definition}[Moments of the unitary group]
    Given a linear operator $X$ acting on $nk$ qubits, the $k$-th moment with respect to $U(2^n)$ is defined via the twirl over the unitary group:
    \begin{equation}
    \Phi_{H}(X) = \int dU \, U^{\otimes k} X (U^\dagger)^{\otimes k},
    \end{equation}
    where we have left out the implicit dependence on $k$.
\end{definition}
This definition captures the expectation over how $X$ transforms when we apply $k$ copies of a Haar-random unitary $U$. By linearity, this is equivalent to considering an arbitrary entangled input state $\ket{\psi}$ on $nk + m$ qubits. The structure of these moments can be determined using representation theoretic properties of the unitary group. By definition, these expectation values must be invariant under any unitary change of basis applied to each individual copy. Applying Schur-Weyl duality \cite{fulton2013representation, goodman2009symmetry} yields that the moments must be of the following form:
\begin{fact}[Explicit form in terms of permutations]
    For any linear operator $X$ acting on $nk$ qubits, the $k$-th moment with respect to the unitary group can be written in the form 
    \begin{equation}
    \Phi_{H}(X) = \sum_{\sigma, \tau \in S_k} c_{\sigma, \tau} \Tr(\sigma X) \cdot \tau,
    \end{equation}
    where the coefficients $c_{\sigma,\tau}$ depend on $k$ and the Hilbert space dimension $2^n$.
\end{fact}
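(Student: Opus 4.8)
The plan is to invoke Schur--Weyl duality to pin down the structure of the twirl $\Phi_H$. First I would observe that $\Phi_H$ is a linear map on operators acting on $(\CC^{2^n})^{\otimes k}$, and that by the bi-invariance of the Haar measure it satisfies $\Phi_H\big( (V^{\otimes k}) X (V^{\dagger\otimes k}) \big) = (V^{\otimes k}) \Phi_H(X) (V^{\dagger\otimes k})$ for every $V \in U(2^n)$; likewise $\Phi_H$ is idempotent and its output is automatically $U(2^n)$-invariant in the same sense. Hence $\Phi_H$ is the orthogonal projection (in Hilbert--Schmidt inner product) onto the commutant of the diagonal action $V \mapsto V^{\otimes k}$ of $U(2^n)$ on $(\CC^{2^n})^{\otimes k}$.

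Next I would apply Schur--Weyl duality: the commutant of $\{V^{\otimes k} : V \in U(2^n)\}$ is exactly the algebra spanned by the permutation operators $\{W_\sigma : \sigma \in S_k\}$, where $W_\sigma$ permutes the $k$ tensor factors. Therefore the output $\Phi_H(X)$ is always a linear combination $\sum_{\tau \in S_k} a_\tau(X)\, W_\tau$ for some scalars $a_\tau(X)$ depending linearly on $X$. By linearity and another application of the invariance, each coefficient functional $X \mapsto a_\tau(X)$ must itself be invariant under $X \mapsto V^{\otimes k} X V^{\dagger \otimes k}$, so it is a linear functional on the commutant; since the commutant is spanned by $\{W_\sigma\}$, such a functional is determined by its values on the $W_\sigma$, i.e.\ $a_\tau(X) = \sum_{\sigma \in S_k} c_{\sigma,\tau} \Tr(W_\sigma^{\dagger} X)$ for constants $c_{\sigma,\tau}$ depending only on $k$ and $2^n$. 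Absorbing $W_\sigma^\dagger = W_{\sigma^{-1}}$ into the summation index and writing $\sigma$ for the permutation operator in the trace gives $\Phi_H(X) = \sum_{\sigma,\tau} c_{\sigma,\tau} \Tr(\sigma X)\, \tau$, which is the claimed form.

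The only mild subtlety — and the step I would be most careful about — is justifying that the coefficient functionals factor through traces against permutations rather than being arbitrary linear functionals on the whole operator space. This follows because $\Phi_H = \Phi_H \circ \Phi_H$: applying $\Phi_H$ to $X$ first projects onto the commutant, and $\Phi_H$ restricted to the commutant is the identity, so $a_\tau(X) = a_\tau(\Phi_H(X))$ depends on $X$ only through $\Phi_H(X)$, which in turn is a trace-pairing computation against the $W_\sigma$ basis. (Concretely, one can take $\{c_{\sigma,\tau}\}$ to be, up to a change of basis, the entries of the inverse of the Gram matrix $G_{\sigma,\tau} = \Tr(W_\sigma^\dagger W_\tau) = (2^n)^{\#\text{cycles}(\sigma^{-1}\tau)}$, which is invertible whenever $2^n \geq k$; in the degenerate small-dimension case one uses a pseudo-inverse, but this does not affect the stated form.) I would not carry out the Gram-matrix inversion in detail here, since the fact only asserts existence of the $c_{\sigma,\tau}$; the representation-theoretic argument above suffices.
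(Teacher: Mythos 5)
Your argument is correct and follows the same route the paper indicates: the paper states this fact with only a one-line appeal to Schur--Weyl duality and Weingarten calculus, and your write-up simply fleshes out that citation, identifying $\Phi_H$ as the orthogonal projection onto the commutant, using Schur--Weyl to identify the commutant with the span of permutation operators, and noting that the coefficient functionals factor through traces against permutations (with the Gram/Weingarten matrix supplying the coefficients). No gap; this is a faithful expansion of what the paper is invoking.
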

In principle, the coefficients $c_{\sigma, \tau}$ can be computed exactly using combinatorial formulas derived from the Weingarten calculus of the unitary group \cite{collins2006integration}. However, the main property of the moments that we use is their relationship to the symmetric group on $k$ copies of the physical systems. In particular, the techniques in our work rely on the key observation that a large fraction of the moments are supported on the \emph{distinct subspace}, which is spanned by the symmetrized vectors corresponding to elements of $[N]^k_\mathrm{dist}$:


%
\begin{definition}[Distinct subspace~\cite{metger2024simple}] \label{def:dist subspace}
    The projector onto the \emph{distinct subspace} is given by the operator
    \vspace{-2mm}\begin{equation}
        \Pi^{\mathrm{dist}} = \sum_{\mathbf{x} \in [N]^k_\mathrm{dist}} \ketbra*{x},
    \end{equation}
    where $[N]^k_\mathrm{dist} = \{\mathbf{x} = (x^{(1)}, x^{(2)}, ..., x^{(k)}): x^{(i)} \neq x^{(j)} \text{ for } i \neq j\}.$ The subspace has dimension $\mathfrak{D} = (2^n)!/(2^n-k)!$ which obeys $1-k^2/2^n \leq \mathfrak{D}/2^{nk} \leq 1$.
\end{definition}
Notably, nearly all bitstrings are distinct when $k^2 \ll 2^n$. Thus for any input state on the physical system, ``most'' of the moments are also supported on this subspace. This fact was originally applied to analyze state designs in \cite{brakerski2019pseudo}, and since then has been extended to various frameworks for analyzing unitary ensembles \cite{metger2024simple,ma2025construct,cui2025unitary}. 

In addition, is often useful to apply these restrictions to a \emph{locally} distinct subspace corresponding to some subsystems of subextensive size, similar to the techniques of \cite{cui2025unitary}. In particular, for subsystems of size $\xi$, the corresponding local distinct subspace is given as the space spanned by the symmetrized vectors corresponding to elements of $[2^\xi]^k_\mathrm{dist}$ on each subsystem:
\begin{definition}[Local distinct subspace \cite{cui2025unitary}] \label{def:restrict local distinct}
    For any system of $n$ qubits divided into patches of $\xi$ qubits each,
    let $\text{\emph{loc-dist}} = \{ \mathbf{x} : x_a^{(i)} \neq x_a^{(j)} \text{{ for all }} a \text{{ and }} i \neq j \}$ denote the set of locally distinct $\mathbf{x}$.
    The \emph{local distinct subspace} corresponding to $S$ is defined by the projector
    \vspace{-2mm}
    \begin{equation}
        \Pi^{\mathrm{dist}}_{\mathrm{loc}} =  \sum_{x \in \text{\emph{loc-dist}}} \dyad{x}.  \vspace{-1mm}       
    \end{equation}
    The local distinct subspace has dimension $\mathfrak{D}_{\text{\emph{loc}}} = ((2^\xi)!/(2^\xi-k)!)^{n/\xi}$ which obeys $1-nk^2/2^\xi \xi \leq \mathfrak{D}_{\text{\emph{loc}}} \leq 1$.
\end{definition}
In the following sections, we discuss how the properties of the $k$-wise twirls over different ensembles relate to notions of indistinguishability from Haar-random.

\subsection{Unitary designs}

\subsubsection{Additive error and parallel indistinguishability}

The most well-studied notion of statistical distance for unitary ensembles is additive error, which is given by the difference between the moments in diamond norm: 
%
%
\begin{definition}[Unitary $k$-design up to additive error]
    An ensemble of unitaries $\mathcal{E}$ is a \emph{unitary $k$-design} if it reproduces the first $k$ moments of the Haar measure. In addition, given some $\varepsilon > 0$, the ensemble $\mathcal{E}$ is an approximate unitary $k$-design up to \emph{additive error} $\varepsilon$ if
    \begin{equation}
    \norm{\Phi_\mathcal{E} - \Phi_H}_\diamond \leq \varepsilon,
    \end{equation}
    where we have used the abbreviated notation
    \begin{equation}
    \Phi_\mathcal{E}(X) = \E_{U \sim \mathcal{E}} U^{\otimes k} X (U^\dagger)^{\otimes k}
    \end{equation}
    to denote the $k$-th moment over the unitary ensemble $\mathcal E$.
\end{definition}
\noindent Recall that the diamond norm is defined via $\lVert \Phi - \Phi' \rVert_\diamond = \max_\rho \lVert \Phi(\rho) - \Phi'(\rho) \rVert_1$, where the maximization is over all input states $\rho$ on $nk+m$ qubits, including an arbitrarily large ancillary system of size $m$. This norm thus measures the maximum distinguishability of $\Phi_\mathcal{E}$ and $\Phi_H$, which apply $k$ copies of a random unitary $U$ in parallel.
It follows that the additive error definition is equivalent to the maximum distinguishability of $\mathcal E$ from the Haar ensemble in any quantum experiment that queries $k$ copies of the unitary $U$ in parallel.
%
%
\begin{fact}[Additive error is equivalent to parallel indistinguishability]
    An ensemble $\mathcal{E}$ is an approximate unitary $k$-design up to additive error $\varepsilon$ if and only if for any quantum algorithm making a single query to $U^{\otimes k}$, i.e.~$k$ parallel queries to $U$, the output states when $U$ is sampled from $\mathcal{E}$ versus the Haar ensemble are $\varepsilon$-close in trace distance.
\end{fact}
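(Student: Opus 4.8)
\noindent\emph{Proof approach.} The plan is to read off both implications directly from the definition of the diamond norm, $\norm{\Phi_\mathcal{E} - \Phi_H}_\diamond = \max_\rho \norm{(\Phi_\mathcal{E}\otimes\mathrm{id}_m)(\rho) - (\Phi_H\otimes\mathrm{id}_m)(\rho)}_1$ (maximization over density operators $\rho$ on $nk+m$ qubits for arbitrary ancilla size $m$), together with two standard facts about CPTP maps: the trace norm is non-increasing under any CPTP map (data processing), and the diamond norm is stable under tensoring with an identity channel, $\norm{(\Phi_\mathcal{E}-\Phi_H)\otimes\mathrm{id}_m}_\diamond = \norm{\Phi_\mathcal{E}-\Phi_H}_\diamond$. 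The first step is to write the most general quantum algorithm making a single query to $U^{\otimes k}$ as a composition $\mathcal{A}_{\mathrm{out}}\circ(\Phi\otimes\mathrm{id}_m)\circ\mathcal{A}_{\mathrm{in}}$, where $\mathcal{A}_{\mathrm{in}}$ is a state-preparation channel producing some $\rho$ on the $nk$ query qubits plus $m$ workspace qubits, $\Phi = U^{\otimes k}(\cdot)(U^\dagger)^{\otimes k}$ is the query acting on the query register, and $\mathcal{A}_{\mathrm{out}}$ is an arbitrary post-processing channel that may include the final measurement, recording outcomes in a classical register. Since $\Phi_\mathcal{E}$ is a convex mixture of unitary channels it is itself CPTP, so the experiment's output states $\rho^{\mathrm{out}}_\mathcal{E} = (\mathcal{A}_{\mathrm{out}}\circ(\Phi_\mathcal{E}\otimes\mathrm{id}_m)\circ\mathcal{A}_{\mathrm{in}})(\dyad{0})$ and $\rho^{\mathrm{out}}_H$ are well defined.

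For the ``only if'' direction, I would argue that pre-composing with $\mathcal{A}_{\mathrm{in}}$ and post-composing with $\mathcal{A}_{\mathrm{out}}$ cannot increase the trace-norm distance between the two branches, and that tensoring with the workspace identity does not change the diamond norm; chaining these gives $\norm{\rho^{\mathrm{out}}_\mathcal{E} - \rho^{\mathrm{out}}_H}_1 \leq \norm{(\Phi_\mathcal{E}-\Phi_H)\otimes\mathrm{id}_m}_\diamond = \norm{\Phi_\mathcal{E}-\Phi_H}_\diamond \leq \varepsilon$, so every such algorithm has $\varepsilon$-close output states.

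For the ``if'' direction, I would exhibit the specific algorithm that takes $\mathcal{A}_{\mathrm{in}}$ to prepare a near-optimal input state $\rho_\star$ on $nk+m$ qubits — one approaching the maximum defining $\norm{\Phi_\mathcal{E}-\Phi_H}_\diamond$ — and $\mathcal{A}_{\mathrm{out}} = \mathrm{id}$ (or the Helstrom measurement realizing the trace-norm distance, should the output be required to be classical). Its two output states are $(\Phi_\mathcal{E}\otimes\mathrm{id}_m)(\rho_\star)$ and $(\Phi_H\otimes\mathrm{id}_m)(\rho_\star)$, which by hypothesis are $\varepsilon$-close; taking the supremum over $\rho_\star$ gives $\norm{\Phi_\mathcal{E}-\Phi_H}_\diamond \leq \varepsilon$. (The standard purification argument that $m=nk$ already achieves the maximum is not needed for the statement as phrased.)

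Main obstacle: there is essentially no mathematical difficulty here — the content is purely bookkeeping, matching the operational ``single-query algorithm'' picture to the diamond-norm definition. The only points requiring care are verifying that the most general single-query algorithm indeed factors through an arbitrary query state $\rho$ on $nk+m$ qubits (so that the supremum in the diamond norm is exactly the one being optimized by the distinguisher) and invoking the correct monotonicity/stability properties of the diamond norm, in particular that enlarging the workspace or applying further processing cannot help the distinguisher. One should also pin down the normalization convention (trace norm versus trace distance, the factor of $1/2$) so that the value of $\varepsilon$ agrees on the two sides of the equivalence.
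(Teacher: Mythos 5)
Your proposal is correct and follows the same reasoning the paper uses, which states this Fact without a formal proof as an immediate consequence of the diamond-norm definition (already including the ancilla maximization) and the operational picture of a single parallel query. Both directions you give — data processing plus diamond-norm stability for ``only if,'' and the near-optimal input state for ``if'' — are exactly what is implicit in the paper's remark; the normalization caveat you flag (trace norm versus trace distance) is a genuine imprecision in the paper's own statement, not in your argument.
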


\subsubsection{Measurable error and adaptive indistinguishability}

Certain settings in quantum complexity theory and cryptography demand indistinguishability under more general quantum experiments that can query $k$ copies of $U$. It is known that this condition is strictly stronger than indistinguishability under queries to $U^{\otimes k}$, i.e., $k$ parallel queries to $U$ \cite{metger2024simple, ma2025construct}. 

In particular, we consider quantum experiments making $k$ queries to $U$ which can apply the unitary in sequence and with arbitrary quantum operations in between each query. This is much more powerful as it enables the quantum experiment to learn some properties about $U$, then adaptively probe the unitary $U$ based on the properties it has learned. The formulation of a general quantum experiment that makes $k$ queries to $U$ is given as follows.
\begin{definition}[Quantum experiments with $k$ queries to $U$] \label{def: qu expt}
    A quantum experiment with $k$ queries to a unitary consists of:
    \begin{enumerate}
        \item An initial state preparation $\ket{\psi_0}$ on registers $A \otimes B$, where $A$ has dimension $2^n$ and $B$ is an auxiliary register of dimension $2^m$. Without loss of generality, we can set $\ket{\psi_0} = \ket{0^n} \otimes \ket{0^m}$.
        \item For $i = 1, \ldots, k$: 
        \begin{itemize}
            \item Apply a unitary $T_i$ to registers $A \otimes B$.
            \item Apply the unknown unitary $U$ to register $A$.
        \end{itemize}
        \item Apply a final unitary $T_{k+1}$ and measure to obtain classical outcome.
    \end{enumerate}
    %
\end{definition}
To capture indistinguishability under the most powerful quantum experiments that make up to $k$ queries to a unitary $U$, we require a corresponding stronger notion of approximation error, the measurable error \cite{cui2025unitary}. This quantity is given by the maximum distinguishability between a random unitary ensemble and the Haar ensemble over all possible 
$k$-query quantum experiments. Moreover, whenever this error is large, there exists a quantum experiment that can distinguish the two ensembles. 
\begin{definition}[Unitary $k$-design up to measurable error]
    Let $\varepsilon > 0$. An ensemble of unitaries $\mathcal{E}$ is an approximate unitary $k$-design up to measurable error $\varepsilon$ if for any quantum experiment with $k$ queries to $U$, the output states when $U$ is sampled from $\mathcal{E}$ versus the Haar ensemble are $\varepsilon$-close in trace distance, i.e.~
    \begin{equation}
        \sup_{T_1 \cdots T_{k+1}} \norm{\rho_\mathcal{E} - \rho_H}_1 \leq \varepsilon,
    \end{equation}
    where we have used the notation 
    \begin{equation}
        \rho_\mathcal{E} = \expect_{U \sim \mathcal{E}} \left[ T_{k+1} [U \otimes \mathbbm{1}_m] T_k \cdots T_2 [U \otimes \mathbbm{1}_m] T_1 |0^{n+m} \rangle\! \langle 0^{n+m} | T_1^\dagger [U^\dagger \otimes \mathbbm{1}_m] T_2^\dagger \cdots T_k^\dagger [U^\dagger \otimes \mathbbm{1}_m] T_{k+1}^\dagger 
        \right]
    \end{equation}
    to denote the expected output state of a general quantum experiment that queries $U$ $k$ times.
\end{definition}
%
In addition, it is shown in \cite{cui2025unitary} that these experiments are equivalent to ones which use only $k$ parallel queries, followed by a postselection on a Bell state on $nk$ qubits. We refer to Section IV.B of the Supplementary Material \cite{cui2025unitary} for more details.

\subsection{Pseudorandom unitaries}

In the context of quantum cryptography, pseudorandom unitary (PRU) ensembles are families of unitaries which can be efficiently constructed but are \emph{computationally} indistinguishable from Haar-random \cite{ji2018pseudorandom} by any adversary with only polynomial resources:
\begin{definition}[Pseudorandom unitary ensemble]
    A family of unitaries defined via $\mathcal U_n = \{U_\alpha \in U(2^n)\}_{\alpha}$ is pseudorandom with \emph{non-adaptive} security if it satisfies the following:
    \begin{enumerate}
        \item (Efficient implementation) There exists an efficient algorithm $Q$ such that for any $U_\alpha \in \mathcal{U}_n$ and $\ket{\psi} \in \mathbb{C}^{2^n}$, $Q(\alpha, \ket{\psi}) = U_\alpha\ket{\psi}$.
        \item (Computational indistinguishability) For $U_\alpha \sim \mathcal{U}_n$ and $k \in \mathrm{poly}(n)$, $U_\alpha$ is computationally indistinguishable from $k$ copies of a Haar-random unitary operator. More precisely, for any efficient quantum algorithm $\calT$ which uses at most $k$ copies of $U_\alpha$,
        \begin{equation}
        \abs{\expect\limits_{\alpha}[\calT[U_\alpha] (|0\rangle^k) = 1] - \mathop{\expect}\limits_{U \sim \Haar} [\calT[U] (|0\rangle^k) = 1]} = \mathrm{negl}(n).
        \label{condition:pr}
        \end{equation}
    \end{enumerate}
\end{definition}
Here $\mathrm{negl}(n)$ denotes an inverse superpolynomial scaling $1/\omega(\poly n)$. While the efficient implementation condition is straightforward to check given a description of $\mathcal{U}_n$ and a protocol to implement $U_\alpha$, it is more difficult to directly bound the computational distinguishability of $\mathcal{U}_n$ from the Haar ensemble. However, it is sufficient to show that it is indistinguishable from a Haar-random unitary transformation even with access to arbitrary generalized measurements:
\begin{fact}[Sufficient condition for computational indistinguishability]
    Suppose for any $k = O(\poly n)$, $\mathcal{U}_n$ forms an approximate unitary $k$-design up to $\epsilon = 1/\omega(\poly n)$. Then $U_\alpha \sim \mathcal{U}_n$ is also \emph{computationally} indistinguishable from a Haar-random unitary.
\end{fact}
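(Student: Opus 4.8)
The plan is to reduce computational indistinguishability of $\mathcal{U}_n$ from the Haar ensemble to the measurable-error $k$-design hypothesis, exploiting the elementary observation that an efficient adversary can only issue polynomially many queries to $U$. Concretely, I would take an arbitrary efficient distinguisher $\calT$ for the PRU game, argue that it is exactly a quantum experiment with $k = \poly(n)$ queries to $U$ in the sense of Definition~\ref{def: qu expt}, invoke the design hypothesis for that particular $k$ to bound the trace distance between the expected output states of $\calT$ in the two worlds, and finally translate this into a bound on $\calT$'s distinguishing advantage. This establishes the computational-indistinguishability condition~\eqref{condition:pr} (the efficient-implementation condition of a PRU being independent of the design hypothesis).

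In more detail: (i) Since $\calT$ runs in time $t(n) = \poly(n)$ and each oracle call consumes at least one time step, $\calT$ makes at most $k = k(n) \le t(n) = \poly(n)$ queries; padding each branch of the computation with trivial query rounds acting on a discarded scratch register does not change the output distribution for any fixed $U$ (and in particular shows that a measurable-error $\varepsilon$-approximate $k$-design is also one for every $k' \le k$), so we may assume $\calT$ makes exactly $k = \poly(n)$ queries. Note that the operations $\calT$ interleaves between queries need not be efficient for this argument, since the measurable-error guarantee already covers \emph{all} $k$-query experiments. (ii) $\calT$ is then literally an instance of Definition~\ref{def: qu expt} whose final measurement outputs the guess bit, so the hypothesis yields $\norm{\rho_{\mathcal{U}_n} - \rho_{\Haar}}_1 \le \varepsilon$ with $\varepsilon = \varepsilon(k,n) = 1/\omega(\poly n)$, where $\rho_{\mathcal{U}_n}$ and $\rho_{\Haar}$ are the pre-measurement states in the two worlds. (iii) Writing $M$ for the POVM element of outcome $1$ and using $\norm{M}_\infty \le 1$, $\abs{\expect_\alpha[\calT[U_\alpha] = 1] - \expect_{U \sim \Haar}[\calT[U] = 1]} = \abs{\Tr(M(\rho_{\mathcal{U}_n} - \rho_{\Haar}))} \le \norm{\rho_{\mathcal{U}_n} - \rho_{\Haar}}_1 \le \varepsilon = 1/\omega(\poly n) = \mathrm{negl}(n)$. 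As $\calT$ was arbitrary, this is exactly condition~\eqref{condition:pr}.

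There is no genuinely hard step here — the content is a careful unwinding of definitions — but two points deserve attention and constitute the only real obstacle. First, the quantifiers must be ordered correctly: the adversary fixes its polynomial running time first, which pins down a single polynomial $k(n)$; only then do we apply the $k$-design hypothesis, which for that fixed $k$ produces a fixed superpolynomially-decaying $\varepsilon$, hence a negligible advantage. Second, one must use the measurable-error notion rather than additive error: an adaptive adversary is not a parallel-query experiment, and it is precisely the measurable-error definition (equivalently, adaptive indistinguishability from Haar-random) that makes step (ii) legitimate. Running the same argument from the weaker additive-error/parallel hypothesis would instead yield only the non-adaptive security of the PRU definition.
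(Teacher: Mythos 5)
Your proof is correct. The paper states this as a \emph{Fact} without giving a proof---it is a definitional unwinding, and the argument you supply is the natural and standard one: an efficient adversary makes at most polynomially many queries, so it is literally an instance of Definition~\ref{def: qu expt} with $k = \poly(n)$; the design hypothesis for that $k$ bounds the trace distance between the output states of the two worlds; and H\"older's inequality against the outcome-$1$ POVM element converts this into a negligible distinguishing advantage. Your two cautionary remarks are well taken and both correct: the quantifier ordering (fix the adversary's polynomial first, then invoke the hypothesis at that $k$) is the thing that could be gotten wrong, and, since the paper's default error metric is the measurable error, the hypothesis does cover adaptive adversaries; you are also right that if one instead reads the hypothesis as additive error, the same argument yields only the non-adaptive security appearing in condition~\eqref{condition:pr}. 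The only minor inefficiency is the final bound $\abs{\Tr(M(\rho_{\mathcal{U}_n}-\rho_{\Haar}))} \le \norm{\rho_{\mathcal{U}_n}-\rho_{\Haar}}_1$, which can be sharpened to $\tfrac12\norm{\rho_{\mathcal{U}_n}-\rho_{\Haar}}_1$ using $0 \le M \le \Id$, but this does not affect the conclusion.
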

It is common to consider settings which allow either at most $k$ parallel or $k$ sequential queries, corresponding to closeness in additive or measurable error, respectively \cite{metger2024simple}. The existence of PRUs in the adaptive setting remained an open question until very recently, when it was resolved in \cite{ma2025construct} assuming the existence of quantum-secure pseudorandom functions on bitstrings. In the following section, we provide an overview of techniques introduced in \cite{ma2025construct} in order to analyze queries to, or applications of, random unitary transformations.


\section{Review of the path-recording framework}

In this section we summarize the path-recording framework, which was introduced in \cite{ma2025construct} as a framework for analyzing the security of PRUs, including the usual adaptive setting as well as a stronger version where the adversary has access to both $U$ and $U^\dagger$.

\subsection{Relation state registers}

In order to analyze the ensembles of interest, it is often useful to consider additional working registers which can be used to approximately reconstruct the twirling operation. In particular, we introduce \emph{relation state} registers to record how the basis states are shuffled by the Haar twirl. Here we define a relation $R$ as a multiset of ordered pairs $\{(x_1, y_1), \dots, (x_k, y_k)\},$ where $(x_i, y_i) \in [N]^2$. The \emph{size} $|R|$ equals the number of pairs counting multiplicities.
\begin{definition}[Sets of relations]
Let $\mathcal{R}$ denote the set of all relations and $\mathcal{R}_k$ the set of all length-$k$ relations. Then define
\begin{align}
    \Dom(R) &= \{x \in [N]: \exists y \text{ such that } (x,y) \in R\},\\
    \Im(R) &= \{y \in [N]: \exists x \text{ such that } (x,y) \in R\},\\
    \Dom_J(R) &= \{x_J \in [2^{\abs{J}}]: \exists x, y \text{ such that } (x, y) \in R \text{ and } \forall j \in J, x^{(j)} = x_J^{(j)}\},\\
    \Im_J(R) &= \{y_J \in [2^{\abs{J}}]: \exists x, y \text{ such that } (x, y) \in R \text{ and } \forall j \in J, y^{(j)} = y_J^{(j)}\},
\end{align}
where $J \subseteq [n]$ corresponds to a subset of the physical registers.
\end{definition}
Each relation $R$ corresponds to a \emph{relation state} in the symmetric subspace.
\begin{definition}[Relation states] 
\label{def:relation-states}
For relation $R = \{(x_1,y_1),\dots,(x_k,y_k)\}$, define
\begin{align}
    \ket{R} = \frac{\sum_{\pi \in S_k} \ket{x_{\pi(1)},y_{\pi(1)},\dots,x_{\pi(k)},y_{\pi(k)}}}{\sqrt{k! \cdot \prod_{(x,y) \in [N]^2} \num(R,(x,y))!}},
\end{align}
where $\num(R,(x,y))$ denotes the multiplicity of pair $(x,y)$ in $R$.
\end{definition}
The relation states form an orthonormal basis for the symmetric subspace of $(\mathbb{C}^{N^2})^{\otimes k}$. When all pairs in $R$ are distinct, the normalization simplifies to $1/\sqrt{k!}$.
In addition, we often consider relation state registers which may be of \emph{variable} length:
\begin{definition}[Variable-length registers]
    For $k \geq 0$, let $\mathsf{R}^{(k)}$ be a register with Hilbert space $\mathcal{H}_{\mathsf{R}^{(k)}} = (\mathbb{C}^N \otimes \mathbb{C}^N)^{\otimes k}$. Define the variable-length register $\mathsf{R}$ with infinite-dimensional Hilbert space
    \begin{align}
        \mathcal{H}_{\mathsf{R}} = \bigoplus_{t=0}^\infty \mathcal{H}_{\mathsf{R}^{(k)}} = \bigoplus_{t=0}^\infty (\mathbb{C}^N \otimes \mathbb{C}^N)^{\otimes k}.
    \end{align}
\end{definition}
Moreover, we can decompose $\mathsf{R}^{(k)} = (\mathsf{R}^{(k)}_{\mathsf{X}}, \mathsf{R}^{(k)}_{\mathsf{Y}})$ where $\mathsf{R}^{(k)}_{\mathsf{X}} = \ket{x_1,\dots,x_k}$ and $\mathsf{R}^{(k)}_{\mathsf{Y}} = \ket{y_1,\dots,y_k}$. Due to the direct sum structure, relation states of different lengths are orthogonal.

\subsection{Restrictions on relation states}

We now discuss restricted sets of relation states. These play an analogous role in our analysis of the path-recording oracle to that of the distinct subspace projection described in Definition~\ref{def:dist subspace} for the analysis of the Haar twirl in previous work \cite{brakerski2019pseudo,metger2024simple,cui2025unitary}.
\begin{definition}[Restricted relation sets]
We define variants of restricted relation sets as follows.
\begin{itemize}
    \item $\mathcal{R}_k^{\text{inj}}$: injective relations where $(y_1,\dots,y_k) \in [N]^k_{\text{dist}}$
    \item $\mathcal{R}_k^{\text{bij}}$: bijective relations where $(x_1,\dots,x_k), (y_1,\dots,y_k) \in [N]^k_{\text{dist}}$
\end{itemize}
We also define $\mathcal{R}^{\text{inj}} = \bigcup_{t=0}^N \mathcal{R}_k^{\text{inj}}$ and $\mathcal{R}^{\text{bij}} = \bigcup_{t=0}^N \mathcal{R}_k^{\text{bij}}$.
\end{definition}
We now introduce projection operators which act on the space of relation states.
\begin{definition}[Projectors and extensions]
Define the projector onto relation states of length $k$:
\begin{align}
    \Pi^{\mathcal{R}}_k = \sum_{R \in \mathcal{R}_k} \ketbra{R} = \Pi^{N^2,k}_{\ssym},
\end{align}
where $\Pi^{N^2,k}_{\ssym}$ projects onto the symmetric subspace of $(\mathbb{C}^{N^2})^{\otimes k}$. 
%
\end{definition}

%

We introduce additional notation for constructions involving two variable-length registers such as $\mathsf{L}$ and $\mathsf{R}$ which correspond to forward and inverse applications of the same unitary.

\begin{definition}[Length projectors] \label{notation:pi-leq-t}
For integers $\ell,r \geq 0$, let $\Pi_{\ell,r}$ project onto $\mathcal{H}_{\mathsf{L}^{(\ell)}} \otimes \mathcal{H}_{\mathsf{R}^{(r)}}$. For integer $t \geq 0$, let $\Pi_{\leq k}$ project onto $\bigoplus_{\ell,r \geq 0: \ell + r \leq k} \mathcal{H}_{\mathsf{L}^{(\ell)}} \otimes \mathcal{H}_{\mathsf{R}^{(r)}}$.
\end{definition}

\begin{definition}[Length-restricted operators]
For operator $B$ acting on registers $\mathsf{L}$ and $\mathsf{R}$, define
\begin{align}
    B_{\ell,r} &= B \cdot \Pi_{\ell,r},\\
    B_{\leq k} &= B \cdot \Pi_{\leq k}.
\end{align}
We adopt the convention that $B_{\leq k}^\dagger = (B_{\leq k})^\dagger$.
\end{definition}
\begin{definition}[Restricted order pairs of relation sets]
    Let $\calR^{2,\dist}$ be the set of all ordered pairs of relations $(L,R) \in \calR^2$ where $L \cup R = \{(x_1, y_1), \ldots, (x_t, y_t))\}$ is a bijective relation, i.e., $x_1, \ldots, x_t$ are distinct and $y_1, \ldots, y_t$ are distinct.
\end{definition}
\begin{definition}[Bijective-relation projectors] \label{def:bij-proj}
    Define the projectors
    \begin{align}
        \Pi^{\bij}_{\gsL \gsR} \coloneq \sum_{(L,R) \in \mathcal{R}^{2,\dist}} \ketbra*{L}_{\gsL} \otimes \ketbra*{R}_{\gsR}, \quad\quad \Pi^{\bij}_{\leq k, \gsL \gsR} \coloneq \Pi^{\bij}_{\gsL \gsR} \cdot \Pi_{\leq k, \gsL \gsR} = \Pi_{\leq k, \gsL \gsR} \cdot \Pi^{\bij}_{\gsL \gsR},
    \end{align}
    where the projector $\Pi_{\leq k, \gsL \gsR}$ is the maximum-length projector given by \cref{notation:pi-leq-t}.
\end{definition}
%
In addition, we state the following properties, which are often useful for bounding statistical distances between quantum states.
\begin{fact} \label{lemma:trace to lp 2}
    For any pure states $\ket{u}, \ket{v}$ with  $\langle u | u \rangle, \langle v | v \rangle \leq 1$,
    \begin{equation} \label{eq: state to op bound}
        \big\lVert \dyad{u} - \dyad{v} \big\rVert_1 \leq 2 \big\lVert \ket{u} - \ket{v} \big\rVert_2,
    \end{equation}
    where the $\norm{\cdot}_2$ on the RHS denotes the vector $2$-norm.
\end{fact}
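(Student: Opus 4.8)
The plan is to use the standard ``add and subtract'' decomposition of the difference of rank-one operators, followed by the triangle inequality for the trace norm and the elementary identity that a rank-one operator $\ket{a}\bra{b}$ has trace norm $\lVert \ket{a} \rVert_2 \lVert \ket{b} \rVert_2$ (its only nonzero singular value).

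First I would write
\[
  \dyad{u} - \dyad{v} = \ket{u}\bigl(\bra{u} - \bra{v}\bigr) + \bigl(\ket{u} - \ket{v}\bigr)\bra{v},
\]
which one verifies by expanding the right-hand side and cancelling the cross term $\ket{u}\bra{v}$. Applying the triangle inequality for $\lVert \cdot \rVert_1$ then gives
\[
  \lVert \dyad{u} - \dyad{v} \rVert_1 \leq \big\lVert \ket{u}(\bra{u} - \bra{v}) \big\rVert_1 + \big\lVert (\ket{u} - \ket{v})\bra{v} \big\rVert_1 .
\]

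Next I would evaluate each term on the right using the rank-one trace-norm identity above, obtaining $\lVert \ket{u}(\bra{u} - \bra{v}) \rVert_1 = \lVert \ket{u} \rVert_2 \, \lVert \ket{u} - \ket{v} \rVert_2$ and similarly $\lVert (\ket{u} - \ket{v})\bra{v} \rVert_1 = \lVert \ket{u} - \ket{v} \rVert_2 \, \lVert \ket{v} \rVert_2$. Invoking the hypothesis $\langle u | u \rangle, \langle v | v \rangle \leq 1$, i.e.\ $\lVert \ket{u} \rVert_2, \lVert \ket{v} \rVert_2 \leq 1$, both prefactors are at most $1$, and summing the two bounds yields $\lVert \dyad{u} - \dyad{v} \rVert_1 \leq 2 \lVert \ket{u} - \ket{v} \rVert_2$, as claimed.

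There is essentially no obstacle here; the only point requiring mild care is that $\ket{u}, \ket{v}$ are \emph{sub}-normalized rather than unit vectors, so one should not invoke the tight identity $\lVert \dyad{\psi} - \dyad{\phi} \rVert_1 = 2\sqrt{1 - |\langle \psi | \phi \rangle|^2}$ valid for normalized states, and should instead keep the norms $\lVert \ket{u} \rVert_2, \lVert \ket{v} \rVert_2$ explicit until the final step where the hypothesis bounds them by $1$.
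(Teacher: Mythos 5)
Your proof is correct and is the standard argument for this fact; the paper states it without proof, so there is no alternative route in the paper to compare against. The decomposition $\dyad{u} - \dyad{v} = \ket{u}(\bra{u} - \bra{v}) + (\ket{u} - \ket{v})\bra{v}$, triangle inequality, and the rank-one trace-norm identity $\lVert \ket{a}\bra{b} \rVert_1 = \lVert a \rVert_2 \lVert b \rVert_2$ together with the subnormalization hypothesis give exactly the stated bound.
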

\begin{lemma}[Gentle measurement lemma] \label{lemma:gentle measurement}
    \begin{equation} \label{eq: gentle measurement lemma}
    \big\lVert \Pi \rho \Pi - \rho \big\rVert_1 \leq 2 \sqrt{ 1 - \Tr( \Pi \rho ) }.
\end{equation}
\end{lemma}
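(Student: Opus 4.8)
The plan is to reduce the stated operator inequality to two elementary one-sided estimates. Writing $\rho - \Pi \rho \Pi = (\Id - \Pi)\rho + \Pi \rho (\Id - \Pi)$ and applying the triangle inequality for the trace norm, it suffices to prove $\lVert (\Id - \Pi)\rho \rVert_1 \leq \sqrt{1 - \Tr(\Pi \rho)}$ and $\lVert \Pi \rho (\Id - \Pi) \rVert_1 \leq \sqrt{1 - \Tr(\Pi \rho)}$; summing the two bounds yields the factor of $2$.

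For the first estimate I would factor $\rho = \sqrt{\rho}\,\sqrt{\rho}$ and apply the noncommutative Cauchy--Schwarz inequality $\lVert AB \rVert_1 \leq \lVert A \rVert_2 \lVert B \rVert_2$, with $\lVert \cdot \rVert_2$ the Hilbert--Schmidt norm, to $A = (\Id - \Pi)\sqrt{\rho}$ and $B = \sqrt{\rho}$. Here $\lVert B \rVert_2^2 = \Tr\rho \leq 1$, while $\lVert A \rVert_2^2 = \Tr\!\big( \sqrt{\rho}\, (\Id - \Pi)\sqrt{\rho} \big) = \Tr\rho - \Tr(\Pi\rho) \leq 1 - \Tr(\Pi\rho)$, using idempotence of $\Id - \Pi$ and cyclicity of the trace. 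The second estimate is handled identically after the split $\Pi \rho (\Id-\Pi) = \big( \Pi\sqrt{\rho} \big)\big( \sqrt{\rho}(\Id-\Pi) \big)$, where $\lVert \Pi\sqrt{\rho}\rVert_2^2 = \Tr(\Pi\rho) \leq 1$ and $\lVert \sqrt{\rho}(\Id-\Pi)\rVert_2^2 = \Tr\rho - \Tr(\Pi\rho) \leq 1 - \Tr(\Pi\rho)$.

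An alternative route, reusing tools already introduced above, is to treat pure $\rho = \dyad{\psi}$ first (with $\langle \psi | \psi \rangle \leq 1$): taking $\ket{v} = \Pi\ket{\psi}$ gives $\Pi\rho\Pi = \dyad{v}$ and $\langle v | v \rangle \leq \langle \psi | \psi \rangle \leq 1$, so \cref{lemma:trace to lp 2} yields $\lVert \rho - \Pi\rho\Pi \rVert_1 \leq 2\lVert (\Id-\Pi)\ket{\psi}\rVert_2 = 2\sqrt{\langle \psi | (\Id-\Pi) | \psi \rangle} \leq 2\sqrt{1 - \Tr(\Pi\rho)}$; one then extends to general $\rho = \sum_i p_i \dyad{\psi_i}$ by the triangle inequality together with concavity of $\sqrt{\cdot}$, pulling the average inside the root as $\sum_i p_i \sqrt{1 - \langle \psi_i | \Pi | \psi_i \rangle} \leq \sqrt{1 - \Tr(\Pi\rho)}$. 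I anticipate no genuine obstacle, since this is a textbook estimate; the one point meriting care is that $\rho$ is frequently subnormalized ($\Tr\rho \leq 1$) in the path-recording analysis, which is why the bounds above are written with $\Tr\rho \leq 1$ throughout and why the stated inequality, in terms of $1 - \Tr(\Pi\rho)$, remains valid --- indeed slightly loose --- in that regime.
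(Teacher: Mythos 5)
The paper states this lemma without proof (it is standard gentle-measurement lore, going back to Winter), so there is no paper proof to compare against; both routes you offer are correct, textbook arguments.

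Your first route is the cleanest self-contained one. The operator identity $\rho - \Pi\rho\Pi = (\Id-\Pi)\rho + \Pi\rho(\Id-\Pi)$ is easily verified by expansion, the H\"older/Cauchy--Schwarz step $\lVert AB\rVert_1 \leq \lVert A\rVert_2\lVert B\rVert_2$ with the factorization $\rho = \sqrt{\rho}\sqrt{\rho}$ is exactly right, and the Hilbert--Schmidt norms evaluate as you state using $(\Id-\Pi)^2 = \Id-\Pi$ and $\Pi^2 = \Pi$ together with cyclicity. You even implicitly prove a slightly sharper bound for the second piece, $\lVert \Pi\rho(\Id-\Pi)\rVert_1 \leq \sqrt{\Tr(\Pi\rho)}\sqrt{1-\Tr(\Pi\rho)}$, before loosening to $\sqrt{1-\Tr(\Pi\rho)}$; that looseness is fine because the target has the factor $2$. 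Your second route correctly reuses the paper's Fact~\ref{lemma:trace to lp 2} for the pure case and extends by convexity of the trace norm together with concavity of $\sqrt{\cdot}$, and you are right that the extension to a subnormalized $\rho$ goes through (since the weights sum to $\Tr\rho \leq 1$, Jensen still gives $\sum_i p_i\sqrt{a_i} \leq \sqrt{\Tr\rho}\,\sqrt{\sum_i p_i a_i} \leq \sqrt{1-\Tr(\Pi\rho)}$). Either proof is acceptable; the first is preferable here as it avoids invoking a spectral decomposition and yields the bound in one pass.
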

\begin{lemma}[Sequential gentle measurement (Lemma 2.3 of \cite{ma2025construct})]  \label{lem:seq-gentleM-pure}
    Let $\ket{\psi}$ be a normalized state, $P_1,\dots,P_k$ be projectors, and $U_1,\dots,U_k$ be unitaries.
    \begin{align}
        \norm{U_k \ldots U_1 \ket{\psi} -  P_k U_k \ldots P_{1} U_1 \ket{\psi}}_2 \leq k \sqrt{1 - \norm{P_k U_k \ldots P_{1} U_1 \ket{\psi}}_2^2}.
    \end{align}
\end{lemma}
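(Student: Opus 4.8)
The plan is to prove the bound by a telescoping (hybrid) argument that inserts the projectors $P_1, \dots, P_k$ one at a time. For $j = 0, 1, \dots, k$ define the hybrid states
\begin{equation}
    \ket{h_j} \;=\; U_k U_{k-1} \cdots U_{j+1} \,\big( P_j U_j\, P_{j-1} U_{j-1} \cdots P_1 U_1 \big)\ket{\psi},
\end{equation}
so that $\ket{h_0} = U_k \cdots U_1 \ket{\psi}$ is the fully unprojected state and $\ket{h_k} = P_k U_k \cdots P_1 U_1 \ket{\psi}$ is the fully projected state. It is convenient to also name the partially projected states $\ket{\phi_j} = P_j U_j \cdots P_1 U_1 \ket{\psi}$, with $\ket{\phi_0} = \ket{\psi}$, together with their squared norms $p_j = \norm{\ket{\phi_j}}_2^2$; note $p_0 = 1$.

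The first step is to record the monotonicity of the $p_j$: since each $P_j$ is a projector and each $U_j$ unitary, $p_j = \norm{P_j U_j \ket{\phi_{j-1}}}_2^2 \leq \norm{U_j \ket{\phi_{j-1}}}_2^2 = p_{j-1}$, hence $1 = p_0 \geq p_1 \geq \cdots \geq p_k$; in particular $p_{j-1} \leq 1$ and $p_j \geq p_k$ for every $j \leq k$. The second step evaluates a single telescoping increment: one has $\ket{h_{j-1}} - \ket{h_j} = U_k \cdots U_{j+1}\,(\Id - P_j)\, U_j \ket{\phi_{j-1}}$, and since $U_k,\dots,U_{j+1}$ are unitary and $\Id - P_j$ is an idempotent projector,
\begin{equation}
    \norm{\ket{h_{j-1}} - \ket{h_j}}_2^2 \;=\; \norm{(\Id - P_j) U_j \ket{\phi_{j-1}}}_2^2 \;=\; \norm{U_j \ket{\phi_{j-1}}}_2^2 - \norm{P_j U_j \ket{\phi_{j-1}}}_2^2 \;=\; p_{j-1} - p_j .
\end{equation}
Combining with the monotonicity facts, $\norm{\ket{h_{j-1}} - \ket{h_j}}_2 = \sqrt{p_{j-1}-p_j} \leq \sqrt{1 - p_k}$ for each $j$. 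The final step is the triangle inequality over the $k$ increments,
\begin{equation}
    \norm{U_k \cdots U_1 \ket{\psi} - P_k U_k \cdots P_1 U_1 \ket{\psi}}_2 \;=\; \norm{\ket{h_0} - \ket{h_k}}_2 \;\leq\; \sum_{j=1}^k \norm{\ket{h_{j-1}} - \ket{h_j}}_2 \;\leq\; k\sqrt{1 - p_k},
\end{equation}
which is the claim since $p_k = \norm{P_k U_k \cdots P_1 U_1 \ket{\psi}}_2^2$. (If a sharper constant is wanted, Cauchy--Schwarz applied to $\sum_j \sqrt{p_{j-1}-p_j}$ together with $\sum_j (p_{j-1}-p_j) = 1-p_k$ improves the prefactor to $\sqrt{k}$, but $k$ already suffices; an equivalent route is a one-line induction on $k$, peeling off $U_k$ and using $p_{k-1} \geq p_k$.)

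There is no substantive obstacle here -- the lemma is essentially this bookkeeping. The only points requiring care are that the intermediate states $\ket{\phi_j}$ are sub-normalized, so one must track their squared norms rather than assume they equal $1$, and the direction of the monotonicity $p_j \geq p_k$, which is exactly what lets each telescoped term be bounded by $\sqrt{1-p_k}$ uniformly in $j$ rather than by a $j$-dependent quantity.
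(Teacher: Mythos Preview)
Your proof is correct; the paper does not supply its own proof of this lemma but simply cites it from~\cite{ma2025construct}, and the telescoping/hybrid argument you give is exactly the standard proof that appears there. The parenthetical remark that Cauchy--Schwarz sharpens the prefactor to $\sqrt{k}$ is also correct and a nice bonus.
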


\subsection{The path-recording oracle}

We now present definitions and useful technical results for the path-recording oracle framework introduced in \cite{ma2025construct}. This construction builds upon a line of work \cite{zhandry2019record} from the cryptography literature on using purification oracles to analyze notions of computational security against quantum adversaries. 
\begin{definition}[Oracle adversaries]
\label{def:oracle-adversary}
A $k$-query oracle adversary $\calT$ is parameterized by a sequence of $(n+m)$-qubit unitaries $(T_1,\dots,T_{k+1})$ acting on registers $(\sA,\sB)$, where $\mathsf{A}$ is the $n$-qubit query register and $\sB$ is an $m$-qubit ancilla, and a sequence of oracle queries $U_1, \ldots, U_k$ where each $U_i \in \{U, U^\dagger\}$. The state after $k$ queries is
\begin{align}
    \ket{\calT_k^{U}}_{\gsA\gsB} = T_{k+1} [U_k \otimes \mathbbm{1}_m] \cdots T_2 [U_1 \otimes \mathbbm{1}_m] T_1 \ket{0^{n+m}}_{\gsA\gsB}.
\end{align}
\end{definition}
The path-recording oracle $V$ proposed in \cite{ma2025construct} efficiently simulates a Haar-random unitary $U$ under both queries to $U$ and $U^\dagger$. Informally, $V$ can be understood as a unitary oracle which constructs an approximate purification for the Haar twirl, such that tracing out over the auxiliary working registers results in a state which indistinguishable from the output of the Haar twirl. To define the path-recording oracle $V$, we separately construct the left and right parts, or $V^L$ and $V^R$, respectively.
\begin{definition}[Left and right parts of $V$] \label{def:V-sym-PRO}
    Let $V^L$ be the linear operator that acts as follows. For $x \in [N]$ and $(L,R) \in \mathcal{R}^{2,\leq N-1}$,
    \begin{equation}
        V^L \cdot \ket{x}_{\gsA} \ket{L}_{\gsL} \ket{R}_{\gsR} = \sum_{\substack{y \in [N]:\\ y\not\in \Im(L \cup R)}} \frac{1}{\sqrt{N - \abs{\Im(L \cup R)}}} \ket{y}_{\gsA} \ket{L \cup \{(x,y)\}}_{\gsL} \ket{R}_{\gsR}.
    \end{equation}
    Define $V^R$ to be the linear operator such that for all $y \in [N]$ and $(L,R) \in \mathcal{R}^{2,\leq N-1}$,
    \begin{equation}
        V^R \cdot \ket{y}_{\gsA} \ket{L}_{\gsL} \ket{R}_{\gsR} = \sum_{\substack{x \in [N]:\\ x\not\in \Dom(L \cup R)}} \frac{1}{\sqrt{N - \abs{\Dom(L \cup R)}}} \ket{x}_{\gsA} \ket{L}_{\gsL} \ket{R \cup \{(x, y)\} }_{\gsR}.
    \end{equation}
    By construction, $V^L$ and $V^R$ take states in $\Id_{\gsA} \otimes \Pi^{\calR^2}_{\leq i, \gsL \gsR}$ to $\Id_{\gsA} \otimes \Pi^{\calR^2}_{\leq i+1, \gsL \gsR}$.
\end{definition}
\begin{fact}[Claim 14 \cite{ma2025construct}]
$V^L$ and $V^R$ are partial isometries.
\end{fact}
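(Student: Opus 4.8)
The plan is to verify directly the defining property of a partial isometry: I will show that each of $V^L$ and $V^R$ acts isometrically on the span of the basis vectors on which it is specified, and vanishes on the orthogonal complement (by convention, as is implicit in its definition) — which together are exactly the statement of being a partial isometry. I describe the argument for $V^L$; the case of $V^R$ is identical after exchanging the roles of $\Dom$ and $\Im$, of the registers $\gsL$ and $\gsR$, and of the two coordinates of each pair.

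Let $\mathcal{S}$ denote the span of the family $\{\ket{x}_{\gsA}\ket{L}_{\gsL}\ket{R}_{\gsR} : x\in[N],\ (L,R)\in\mathcal{R}^{2,\leq N-1}\}$, which is orthonormal because relation states of different lengths lie in orthogonal direct summands and, within a fixed length, form an orthonormal basis of the symmetric subspace (Definition~\ref{def:relation-states}). It then suffices to check that $V^L$ preserves inner products among these basis vectors. Expanding the two output superpositions and using $\langle y | y'\rangle = \delta_{y,y'}$ together with $\langle R|R'\rangle = \delta_{R,R'}$, the overlap of $V^L\ket{x}_{\gsA}\ket{L}_{\gsL}\ket{R}_{\gsR}$ with $V^L\ket{x'}_{\gsA}\ket{L'}_{\gsL}\ket{R'}_{\gsR}$ reduces to $\delta_{R,R'}$ times a sum over $y$ avoiding both $\Im(L\cup R)$ and $\Im(L'\cup R)$ of terms $\langle L\cup\{(x,y)\} \,|\, L'\cup\{(x',y)\}\rangle$, each divided by $\sqrt{N-|\Im(L\cup R)|}\,\sqrt{N-|\Im(L'\cup R)|}$.

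The key combinatorial point is that for such $y$ the unique pair with second coordinate $y$ in $L\cup\{(x,y)\}$ is $(x,y)$ and in $L'\cup\{(x',y)\}$ is $(x',y)$, so equality of these two multisets forces $x=x'$, and then $L=L'$; hence $\langle L\cup\{(x,y)\} \,|\, L'\cup\{(x',y)\}\rangle = \delta_{x,x'}\delta_{L,L'}$. Only terms with $x=x'$ and $L=L'$ survive; the two normalizations then coincide and the sum runs over the $N-|\Im(L\cup R)|$ admissible values of $y$, a nonempty set since $|\Im(L\cup R)|\leq |L\cup R| \leq N-1$, so it evaluates to $1$. The overlap is therefore $\delta_{x,x'}\delta_{L,L'}\delta_{R,R'}$, precisely the inner product of the two inputs; hence $V^L$ is an isometry on $\mathcal{S}$, and being zero on $\mathcal{S}^\perp$ it is a partial isometry with initial space $\mathcal{S}$. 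For $V^R$ one instead reads off $x=x'$ directly from the $\gsA$ register and uses $\langle L|L'\rangle = \delta_{L,L'}$, the sum being nonempty because $|\Dom(L\cup R)|\leq|L\cup R|\leq N-1$.

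The only real obstacle I anticipate is the multiset and length bookkeeping: one must track that relation states are indexed by multisets of pairs (with repeated pairs absorbed into the normalization of Definition~\ref{def:relation-states}), that appending a pair with a fresh image value cannot coincide with an existing pair, and that overlaps of relation states of unequal length vanish automatically from the direct-sum structure. Everything past this point is the one-line orthonormality computation above.
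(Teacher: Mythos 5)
Your proof is correct, and it is essentially the only natural argument: compute $\langle u'|V^{L\dagger}V^L|u\rangle$ on the orthonormal family $\ket{x}_{\gsA}\ket{L}_{\gsL}\ket{R}_{\gsR}$, observe the $\gsA$ output and the $\gsR$ register force $y=y'$ and $R=R'$, and then use the combinatorial fact that appending a fresh image value $y\notin\Im(L)\cup\Im(L')$ makes $L\cup\{(x,y)\}=L'\cup\{(x',y)\}$ equivalent to $x=x'$ and $L=L'$, after which the sum over the $N-|\Im(L\cup R)|$ admissible $y$ telescopes against the normalization. Note that the paper itself supplies no proof of this statement: it is imported verbatim as Claim~14 of~\cite{ma2025construct}, so there is no in-paper argument against which to contrast your derivation. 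Your treatment of the convention that $V^L$ annihilates the orthogonal complement of its specified domain, and your check that $|\Im(L\cup R)|\leq N-1$ keeps the normalization finite, are exactly the bookkeeping points worth making explicit.
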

\begin{definition}[Path-recording oracle $V$]
\label{def:symmetric-V}
    The path-recording oracle is the operator $V$ defined as
    \begin{align}
        V &= V^L \cdot (\Id - V^R \cdot V^{R,\dagger}) + (\Id - V^L \cdot V^{L,\dagger}) \cdot V^{R,\dagger}.
    \end{align}
    By construction, $V$ and $V^\dagger$ take states in $\Id_{\gsA} \otimes \Pi^{\calR^2}_{\leq i, \gsL \gsR}$ to $\Id_{\gsA} \otimes \Pi^{\calR^2}_{\leq i+1, \gsL \gsR}$ for any integer $i \geq 0$.
\end{definition}
\begin{fact}[Claim 15 in \cite{ma2025construct}]
$V$ is a partial isometry.
\end{fact}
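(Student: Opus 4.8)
The statement to be proved is that the path-recording oracle $V$ from Definition~\ref{def:symmetric-V} is a partial isometry; equivalently, that $V^\dagger V$ is a projector (onto the span of the domain on which $V$ acts nontrivially). The plan is to reduce the claim to the already-established facts that $V^L$ and $V^R$ are partial isometries (Claim 14 of~\cite{ma2025construct}) together with the compatibility of their domains and images under the length grading. Concretely, I would first record that both $V^L V^{L,\dagger}$ and $V^R V^{R,\dagger}$ are the orthogonal projectors onto $\Im(V^L)$ and $\Im(V^R)$ respectively, and that both $V^{L,\dagger} V^L$ and $V^{R,\dagger} V^R$ equal the projector $\Id_{\gsA} \otimes \Pi^{\calR^2}_{\leq N-1,\gsL\gsR}$ onto the common domain (states supported on relations of length at most $N-1$).

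The key algebraic step is to analyze $V = V^L P^\perp_R + P^\perp_L V^{R,\dagger}$, where $P_L \coloneq V^L V^{L,\dagger}$ and $P_R \coloneq V^R V^{R,\dagger}$ are projectors and $P^\perp_\bullet = \Id - P_\bullet$. Computing
\begin{align}
V^\dagger V = P^\perp_R V^{L,\dagger} V^L P^\perp_R + P^\perp_R V^{L,\dagger} P^\perp_L V^{R,\dagger} + V^R P^\perp_L V^L P^\perp_R + V^R P^\perp_L P^\perp_L V^{R,\dagger},
\end{align}
the cross terms should vanish: the term $P^\perp_R V^{L,\dagger} P^\perp_L V^{R,\dagger}$ contains $V^{L,\dagger} P^\perp_L = V^{L,\dagger}(\Id - V^L V^{L,\dagger}) = V^{L,\dagger} - (V^{L,\dagger}V^L)V^{L,\dagger} = 0$ using that $V^{L,\dagger}V^L$ acts as the identity on the range of $V^{L,\dagger}$ (partial isometry); similarly the other cross term vanishes because $V^L P^\perp_R$ appears and $\Im(V^R) \subseteq \Im(V^L)$... — this last containment is the crucial structural input and is the point I expect to be the main obstacle, so let me say more. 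One must verify that on any relation-state input, the new image states produced by $V^L$ and by $V^R$ live in compatible subspaces so that $P^\perp_L P^\perp_R$, $P^\perp_R P^\perp_L$, and the surviving diagonal terms assemble into a single projector. The cleanest route is to pass to the length-graded pieces: restricted to inputs in $\Id_{\gsA}\otimes\Pi^{\calR^2}_{\leq i,\gsL\gsR}$, both $V^L P^\perp_R$ and $P^\perp_L V^{R,\dagger}$ map into $\Id_{\gsA}\otimes\Pi^{\calR^2}_{\leq i+1,\gsL\gsR}$ (as stated in Definitions~\ref{def:V-sym-PRO} and~\ref{def:symmetric-V}), and within each graded block one checks orthogonality of the two images directly on the relation-state basis — a state $\ket{y}\ket{L\cup\{(x,y)\}}\ket{R}$ produced by the $V^L$-branch versus a state $\ket{x'}\ket{L}\ket{R\cup\{(x',y')\}}$ produced by the $V^{R,\dagger}$-branch differ in whether the newly added pair sits in the $\gsL$ or $\gsR$ register, hence are orthogonal by the direct-sum structure of variable-length registers.

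Assembling these observations, $V^\dagger V = P^\perp_R (V^{L,\dagger} V^L) P^\perp_R + V^R (V^{R,\dagger} V^R) ... $ wait — more carefully, after the cross terms drop one is left with $V^\dagger V = P^\perp_R \Pi_{\mathrm{dom}} P^\perp_R + V^R P^\perp_L V^L P^\perp_R + \dots$; I would finish by showing the middle diagonal contributions also reduce, using $P^\perp_L V^L = 0$ and its adjoint $V^{L,\dagger} P^\perp_L = 0$, so that in fact $V^\dagger V = P^\perp_R \Pi_{\mathrm{dom}} + \Pi_{\mathrm{dom}} P_R$ — no: the correct endpoint is that $V^\dagger V$ equals the projector onto $\Id_{\gsA}\otimes\Pi^{\calR^2}_{\leq N-1,\gsL\gsR}$, which one verifies by checking it acts as the identity on that subspace and annihilates its complement. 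The honest statement of the obstacle is that one must be careful that $P_L$ and $P_R$ do not commute in general, so the manipulation has to be organized so that every term that survives is explicitly a projector times $\Pi_{\mathrm{dom}}$; doing this requires the domain/image bookkeeping of Definition~\ref{def:V-sym-PRO} (the ``$y \notin \Im(L\cup R)$'' and ``$x \notin \Dom(L\cup R)$'' constraints), which guarantee that the two branches of $V$ have orthogonal ranges and hence that their sum is a partial isometry on the combined domain. I would present the argument in the graded form to keep all the index constraints visible, and invoke the corresponding lemma of~\cite{ma2025construct} for the routine verification that the pieces fit together.
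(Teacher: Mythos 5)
The paper itself gives no proof of this statement; it is imported verbatim as Claim 15 of~\cite{ma2025construct}, so there is nothing in the paper to compare against. Judged on its own terms, your approach of expanding $V^\dagger V = P_R^\perp V^{L,\dagger} V^L P_R^\perp + P_R^\perp V^{L,\dagger} P_L^\perp V^{R,\dagger} + V^R P_L^\perp V^L P_R^\perp + V^R P_L^\perp V^{R,\dagger}$ (with $P_L = V^L V^{L,\dagger}$, $P_R = V^R V^{R,\dagger}$) is viable, and your treatment of the first cross term is correct: $V^{L,\dagger} P_L^\perp = V^{L,\dagger} - (V^{L,\dagger}V^L)V^{L,\dagger} = 0$ by the partial-isometry identity for $V^L$.

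Several of the remaining steps do not hold up. The second cross term $V^R P_L^\perp V^L P_R^\perp$ vanishes simply because $P_L^\perp V^L = 0$, the adjoint of the identity you already used; the inclusion ``$\Im(V^R)\subseteq\Im(V^L)$'' that you flag as the crucial input is false and in any case not what is needed. More seriously, your proposed endpoint $V^\dagger V = \Id_{\gsA}\otimes\Pi^{\calR^2}_{\leq N-1,\gsL\gsR}$ is incorrect: a state $\ket{\psi}\in\Im(V^R)$ is sent to $P_L^\perp V^{R,\dagger}\ket{\psi}$, and $P_L^\perp$ can shrink its norm, so $V$ is \emph{not} isometric on the entire length-$\leq N-1$ subspace. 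The intermediate display $V^\dagger V = P_R^\perp\Pi_{\mathrm{dom}} + \Pi_{\mathrm{dom}}P_R$ is also wrong (not even self-adjoint). Finally, what you label as ``routine verification'' and defer to~\cite{ma2025construct} is in fact the essential content: after the cross terms drop, one has $V^\dagger V = P_R^\perp\,\Pi_{\mathrm{dom}}\,P_R^\perp + V^R P_L^\perp V^{R,\dagger}$, and one must argue (i) each of these two terms is itself a projector, which requires observing that $\Pi_{\mathrm{dom}}$ commutes with $P_L$ and $P_R$ because $V^L$, $V^R$ are length-graded (they shift the direct-sum index $\ell+r$ by exactly one), so that the first term equals $\Pi_{\mathrm{dom}}P_R^\perp$ and the second equals $V^R\Pi_{\mathrm{dom}}P_L^\perp V^{R,\dagger}$; and (ii) the two projectors are mutually orthogonal, since the first is supported on $\Im(V^R)^\perp$ and the second on $\Im(V^R)$. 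Equivalently and more transparently, $V_1\coloneq V^L P_R^\perp$ and $V_2\coloneq P_L^\perp V^{R,\dagger}$ are each partial isometries (by the same commutation argument) with $\Dom(V_1)\subseteq\Im(V^R)^\perp$, $\Dom(V_2)\subseteq\Im(V^R)$, $\Im(V_1)\subseteq\Im(V^L)$, $\Im(V_2)\subseteq\Im(V^L)^\perp$; orthogonality of both domains and images makes $V=V_1+V_2$ a partial isometry. This grading-commutation step is the gap in your write-up.
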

%

\subsection{The partial path-recording oracle}
Another useful construction from \cite{ma2025construct} is the partial path-recording oracle $W$, which is a restricted version of the full path-recording oracle $V$. The operator $W$ only acts nontrivially on a subspace and maps the orthogonal subspace to zero. The subspace is defined based on $\calR^{2,\dist}$.
Similar to $V$, the partial path-recording oracle $W$ contains a left part $W^L$ and a right part $W^R$.
\begin{definition}[$W^L$ and $W^R$]
\label{def:ternary-W-action}
    Define $W^L$ to be the linear map such that for any $(L,R) \in \calR^{2,\dist}$ and $x \in [N]$ such that $x\not\in \Dom(L \cup R)$,
    \begin{align}
        W^L \cdot \ket{x}_{\gsA} \ket{L}_{\gsL} \ket{R}_{\gsR} = \frac{1}{\sqrt{N-\abs{L \cup R}}} \sum_{\substack{y \in [N]:\\ y\not\in \Im(L \cup R)}} \ket{y}_{\gsA} \ket{L \cup \{(x,y)\}}_{\gsL} \ket{R}_{\gsR}. \label{eq:WL-def}
    \end{align}
    Similarly, define $W^R$ be the linear map such that for any $(L,R) \in \calR^{2,\dist}$ and $y \in [N]$ such that $y\not\in \Im(L \cup R)$,
    \begin{align}
        W^R \cdot \ket{y}_{\gsA} \ket{L}_{\gsL} \ket{R}_{\gsR} = \frac{1}{\sqrt{N-\abs{L \cup R}}} \sum_{\substack{x \in [N]:\\ x\not\in \Dom(L \cup R)}} \ket{x}_{\gsA} \ket{L}_{\gsL} \ket{R \cup \{(x,y)\}}_{\gsR}. \label{eq:WR-def}
    \end{align}
\end{definition}
\begin{definition} \label{def:symmetric-W}
    The partial path-recording oracle is the operator $W$ defined as
    \begin{equation}
        W = W^L + W^{R,\dagger}.
    \end{equation}
\end{definition}
\begin{lemma}[Fact 5 in \cite{ma2025construct}] \label{fact:WLWR-space-leqi}
    For any integer $i \geq 0$, $W^L, W^R$ map states in the subspace associated to the projector $\Id_{\gsA} \otimes \Pi^{\bij}_{\leq i, \gsL \gsR}$ into the subspace associated with the projector $\Id_{\gsA} \otimes \Pi^{\bij}_{\leq i+1, \gsL \gsR}$.
\end{lemma}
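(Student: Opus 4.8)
The plan is to reduce the statement to a check on basis states together with a short bookkeeping argument about relation sizes. First I would note that, since the projector only constrains the relation registers, every vector in the range of $\Id_{\gsA}\otimes\Pi^{\bij}_{\leq i,\gsL\gsR}$ is a linear combination of basis vectors $\ket{x}_{\gsA}\ket{L}_{\gsL}\ket{R}_{\gsR}$ with $x\in[N]$ arbitrary and $(L,R)\in\calR^{2,\dist}$ such that $|L|+|R|\leq i$. By linearity of $W^L$ (and of $W^R$) it then suffices to show that $W^L$ sends each such basis vector into the range of $\Id_{\gsA}\otimes\Pi^{\bij}_{\leq i+1,\gsL\gsR}$, and likewise for $W^R$.

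I would then split into two cases according to whether a new pair can be appended. If $x\in\Dom(L\cup R)$, then by the convention that the partial path-recording oracle annihilates everything outside its defining domain, $W^L\cdot\ket{x}_{\gsA}\ket{L}_{\gsL}\ket{R}_{\gsR}=0$, which lies trivially in the target subspace. If $x\notin\Dom(L\cup R)$, then by \cref{eq:WL-def},
\begin{equation}
    W^L\cdot\ket{x}_{\gsA}\ket{L}_{\gsL}\ket{R}_{\gsR}=\frac{1}{\sqrt{N-|L\cup R|}}\sum_{\substack{y\in[N]:\\ y\notin\Im(L\cup R)}}\ket{y}_{\gsA}\,\ket{L\cup\{(x,y)\}}_{\gsL}\,\ket{R}_{\gsR},
\end{equation}
so it is enough to show every summand lies in the target subspace, which comes down to two checks. \emph{Bijectivity:} since $(L,R)\in\calR^{2,\dist}$ the relation $L\cup R$ is bijective, and $(L\cup\{(x,y)\})\cup R=(L\cup R)\cup\{(x,y)\}$ stays bijective because $x\notin\Dom(L\cup R)$ keeps all first coordinates distinct while $y\notin\Im(L\cup R)$ keeps all second coordinates distinct; hence $(L\cup\{(x,y)\},R)\in\calR^{2,\dist}$. \emph{Length:} appending a single pair to $L$ gives $|L\cup\{(x,y)\}|=|L|+1$, so $|L\cup\{(x,y)\}|+|R|=|L|+|R|+1\leq i+1$, and since relation states of different lengths are orthogonal (the direct-sum structure underlying \cref{notation:pi-leq-t}) this component sits in one of the allowed blocks $\ell+r\leq i+1$. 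Combining the two observations, each summand lies in the range of $\Id_{\gsA}\otimes\Pi^{\bij}_{\leq i+1,\gsL\gsR}$. The case of $W^R$ is handled identically using \cref{eq:WR-def}, after exchanging the roles of $\Dom\leftrightarrow\Im$, $x\leftrightarrow y$, and $\gsL\leftrightarrow\gsR$.

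I do not expect a genuine obstacle: the lemma is essentially a sanity check that $W^L$ and $W^R$ act as ``append one pair to a bijective relation.'' The only points that require a little care are invoking the convention that $W^L,W^R$ kill the orthogonal complement of their defining domain (so that the $x\in\Dom(L\cup R)$ branch, resp.\ $y\in\Im(L\cup R)$, is disposed of cleanly), and being precise that appending a pair with a fresh domain (resp.\ image) element increases the relation size by exactly one, so that the resulting length bound is $\leq i+1$ rather than something weaker.
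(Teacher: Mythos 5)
Your proof is correct. The paper itself does not give an argument for this statement—it is stated as a citation to Fact 5 of~\cite{ma2025construct}—so there is no proof in this paper to compare against. Your argument is the canonical one: reduce by linearity to basis states $\ket{x}_{\gsA}\ket{L}_{\gsL}\ket{R}_{\gsR}$ in the domain, note that $W^L$ (being a partial isometry) kills the orthogonal complement of its defining domain so the branch $x\in\Dom(L\cup R)$ is trivial, and on the nontrivial branch verify that $x\notin\Dom(L\cup R)$ and $y\notin\Im(L\cup R)$ preserve bijectivity of $L\cup R\cup\{(x,y)\}$ while the length increases by exactly one, landing in $\Pi^{\bij}_{\leq i+1,\gsL\gsR}$. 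The symmetric argument for $W^R$ via \cref{eq:WR-def} with $\Dom\leftrightarrow\Im$ and $\gsL\leftrightarrow\gsR$ is also correct. No gaps.
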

\begin{fact}[Claims 9 and 11 in \cite{ma2025construct}]
    $W^L, W^R, W$ are partial isometries.
\end{fact}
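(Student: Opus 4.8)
\emph{Proof plan.} The plan is to verify for each of $W^L$, $W^R$, and $W$ the defining property of a partial isometry $M$ — that $M^\dagger M$ is an orthogonal projector — by a direct calculation on the relation-state basis. I would do $W^L$ first (with $W^R$ following by a mirror-image argument), and then bootstrap to $W = W^L + W^{R,\dagger}$.

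\textbf{The building blocks $W^L$ and $W^R$.} Fix a basis vector $\ket{x}_{\gsA}\ket{L}_{\gsL}\ket{R}_{\gsR}$ with $(L,R)\in\calR^{2,\dist}$ and $x\notin\Dom(L\cup R)$ — these are exactly the inputs on which $W^L$ acts nontrivially — and compute the norm of its image. Because $L\cup R$ is bijective, appending a pair $(x,y)$ with $y\notin\Im(L\cup R)$ keeps $(L\cup\{(x,y)\},R)$ in $\calR^{2,\dist}$; in particular $L\cup\{(x,y)\}$ still has all pairs distinct, so each $\ket{L\cup\{(x,y)\}}_{\gsL}$ is a unit vector and the vectors indexed by distinct $y$ are mutually orthogonal. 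Using $|\Im(L\cup R)| = |L\cup R|$ (bijectivity again), the number of admissible $y$ is exactly $N - |L\cup R|$, so $\lVert W^L\ket{x}_{\gsA}\ket{L}_{\gsL}\ket{R}_{\gsR}\rVert^2 = (N-|L\cup R|)^{-1}(N-|L\cup R|) = 1$. For orthogonality across inputs: if the images of $\ket{x}\ket{L}\ket{R}$ and $\ket{x'}\ket{L'}\ket{R'}$ share a term, then $y = y'$, $R = R'$, and $L\cup\{(x,y)\} = L'\cup\{(x',y)\}$ as relations; since $y \notin \Im(L)$ and $y\notin\Im(L')$, the pair with second coordinate $y$ is unique on each side, forcing $x = x'$ and then $L = L'$. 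Hence $W^{L,\dagger}W^L$ is the projector $\Pi_0$ onto $\mathrm{span}\{\ket{x}_{\gsA}\ket{L}_{\gsL}\ket{R}_{\gsR} : (L,R)\in\calR^{2,\dist},\ x\notin\Dom(L\cup R)\}$, i.e.\ $W^L$ is a partial isometry. The same computation with $\gsL \leftrightarrow \gsR$ and $\Dom \leftrightarrow \Im$ shows $W^R$ is a partial isometry; in particular $W^RW^{R,\dagger}$ is the projector $\Pi_1$ onto the range of $W^R$.

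\textbf{Assembling $W$.} Expand $W^\dagger W = W^{L,\dagger}W^L + W^RW^L + W^{L,\dagger}W^{R,\dagger} + W^RW^{R,\dagger}$. The crucial point is that the two cross terms vanish: any term of $W^L\ket{x}_{\gsA}\ket{L}_{\gsL}\ket{R}_{\gsR}$ carries $\gsA$-register value $y$ with $y \in \Im(L\cup\{(x,y)\}\cup R)$, and $W^R$ annihilates every vector whose $\gsA$-value lies in the image of the recorded relation, so $W^RW^L = 0$ and therefore $W^{L,\dagger}W^{R,\dagger} = (W^RW^L)^\dagger = 0$ as well. Thus $W^\dagger W = \Pi_0 + \Pi_1$, and it remains to check $\Pi_0 \perp \Pi_1$. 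This holds because every relation-state basis vector in the support of $\Pi_0$ has $\gsA$-value outside $\Dom(L\cup R)$, while every relation-state basis vector occurring in the range of $W^R$ has the form $\ket{x}_{\gsA}\ket{L}_{\gsL}\ket{R\cup\{(x,y)\}}_{\gsR}$ with $x \in \Dom(R\cup\{(x,y)\})$; the two subspaces are therefore spanned by disjoint families of orthonormal basis vectors. Hence $W^\dagger W = \Pi_0 + \Pi_1$ is a projector and $W$ is a partial isometry.

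\textbf{Main obstacle.} None of the steps is deep, so the real work is the bookkeeping with relation states: tracking multiplicities carefully enough to see that bijectivity of $L\cup R$ is exactly what makes the extended relation states unit vectors \emph{and} pins the number of one-pair extensions to $N - |L\cup R|$, and noticing that applying $W^L$ flips the ``freshness'' of the $\gsA$-register value — the single fact responsible both for the vanishing of the cross term $W^RW^L$ and for the orthogonality of $\Pi_0$ and $\Pi_1$.
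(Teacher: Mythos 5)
Your argument is correct, but be aware that the paper itself does not supply a proof of this statement — it is imported verbatim as Claims 9 and 11 of~\cite{ma2025construct} — so there is no in-text proof to compare against. Your reconstruction is the natural one and matches the structure suggested by the surrounding material (in particular Lemma 8 cited just below, which records exactly your decomposition $\Pi^{\Dom(W)} = \Pi^{\Dom(W^L)} + \Pi^{\Im(W^R)}$): bijectivity of $L\cup R$ gives both the unit norm of each $\ket{L\cup\{(x,y)\}}$ and the count $N-|L\cup R|$ of admissible $y$; orthogonality of images follows from the uniqueness of the pair with second coordinate $y$; the cross term $W^R W^L$ vanishes because $W^L$ converts a fresh $\gsA$-value into a stale one, which falls in the kernel of $W^R$; and $\Pi^{\Dom(W^L)}\perp\Pi^{\Im(W^R)}$ by the disjointness of the underlying basis-vector supports. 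One minor imprecision in the last step: the range of $W^R$ need not itself equal a span of relation-state basis vectors, only be \emph{contained} in the span of basis vectors whose $\gsA$-value lies in $\Dom(L\cup R)$; that containment is all the orthogonality argument needs, so the conclusion stands.
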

It is useful to introduce projectors of the following form, which we use restrict to valid relation states corresponding to the action of $V$ or $W$:
\begin{definition}
    For a partial isometry $G$, let $\Dom(G)$ and $\Im(G)$ denote its domain and image. Let $\Pi^{\Dom(G)} = G^\dagger \cdot G$ and $\Pi^{\Im(G)} = G \cdot G^\dagger$ denote the orthogonal projectors onto $\Dom(G)$ and $\Im(G)$.
\end{definition}
\begin{lemma}[Fact 8 in \cite{ma2025construct}]\label{claim:W-partial-isometry}
    The domain and image of the partial isometry $W$ are given by
    \begin{align}
        \Pi^{\Dom(W)} &= \Pi^{\Dom(W^L)} + \Pi^{\Im(W^R)}, \label{eq:expand-DW}\\
        \Pi^{\Im(W)} &= \Pi^{\Dom(W^R)} + \Pi^{\Im(W^L)}. \label{eq:expand-IW}
    \end{align}
\end{lemma}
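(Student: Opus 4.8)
\textbf{Proof proposal for Lemma (``Fact 8 in \cite{ma2025construct}'': domain and image of $W$).}

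The plan is to compute $\Pi^{\Dom(W)} = W^\dagger W$ and $\Pi^{\Im(W)} = W W^\dagger$ directly from the decomposition $W = W^L + W^{R,\dagger}$, using the explicit actions of $W^L$ and $W^R$ on the basis of bijective relation states $\{\ket{x}_{\gsA}\ket{L}_{\gsL}\ket{R}_{\gsR} : (L,R)\in\calR^{2,\dist}\}$ given in \cref{def:ternary-W-action}. The key structural input is that $W^L$ and $W^R$ are partial isometries (stated above), so that $W^{L,\dagger}W^L = \Pi^{\Dom(W^L)}$, $W^L W^{L,\dagger} = \Pi^{\Im(W^L)}$, and similarly for $W^R$. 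Expanding,
\begin{equation}
    W^\dagger W = \big(W^{L,\dagger} + W^R\big)\big(W^L + W^{R,\dagger}\big) = W^{L,\dagger}W^L + W^R W^{R,\dagger} + W^{L,\dagger}W^{R,\dagger} + W^R W^L.
\end{equation}
So the lemma reduces to two claims: (i) $W^{L,\dagger}W^L = \Pi^{\Dom(W^L)}$ and $W^R W^{R,\dagger} = \Pi^{\Im(W^R)}$ are orthogonal projectors summing to a projector (i.e.~they have orthogonal images), and (ii) the cross terms $W^{L,\dagger}W^{R,\dagger}$ and $W^R W^L$ vanish. The analogous expansion with the factors reversed gives $\Pi^{\Im(W)}$ in terms of $\Pi^{\Dom(W^R)} = W^{R,\dagger}W^R$ and $\Pi^{\Im(W^L)} = W^L W^{L,\dagger}$, plus cross terms $W^L W^R$ and $W^{R,\dagger}W^{L,\dagger}$.

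For the cross terms, I would argue on basis states. Applying $W^L$ to $\ket{x}\ket{L}\ket{R}$ produces a superposition of states with a pair appended to the $\gsL$-register, so the $\gsL\gsR$ lengths are $(|L|+1,|R|)$; applying $W^L$ again requires the new state to still lie in $\Dom(W^L)$, and then appends another pair to $\gsL$. But $W^{R,\dagger}$ is the adjoint of $W^R$, which appends to $\gsR$; thus $W^{R,\dagger}$ \emph{removes} a pair from the $\gsR$-register. The composition $W^R W^L$ (appending to $\gsR$ then — reading right to left — appending to $\gsL$; wait, order: $W^R W^L$ means first $W^L$ then $W^R$) maps $(|L|,|R|)\to(|L|+1,|R|)\to(|L|+1,|R|+1)$. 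I claim this is zero because of the disjointness/distinctness bookkeeping: after $W^L$ appends $(x,y)$ with $y\notin\Im(L\cup R)$, the state still has a \emph{distinct} image-list, so it lies in the domain of $W^R$ only through the $(L,R)\in\calR^{2,\dist}$ condition — and here one must check carefully that $W^R$, which by its definition in \cref{def:ternary-W-action} requires $y'\notin\Im(L\cup R)$ for its \emph{input}, actually annihilates the image of $W^L$. The cleanest route is the length-and-support argument from \cite{ma2025construct}: $W^L$ only increments $\gsL$-length, $W^R$ only increments $\gsR$-length, and $W^{R,\dagger}$ only decrements $\gsR$-length, so $W^R W^L$ and $W^{L,\dagger}W^{R,\dagger}$ either change both registers' lengths in a way that exits the bijective subspace, or — more precisely — one shows the ranges of $W^L$ and $W^{R,\dagger}$ (as maps out of the bijective subspace) are orthogonal, making all four cross terms vanish identically. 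I would establish this by noting $W^{L,\dagger}W^{R,\dagger}$ acts as: first remove a pair from $\gsR$ via $W^{R,\dagger}$, then remove a pair from $\gsL$ via $W^{L,\dagger}$; composed the other way $W^R W^L$ adds to both. Checking orthogonality of $\Im(W^L)$ and $\Im(W^{R,\dagger})$ within the bijective subspace is the natural lemma, and it follows because a state in $\Im(W^L)$ has $\gsL$-length one more than its preimage while a state in $\Im(W^{R,\dagger})$ has $\gsR$-length one less — these are incompatible conditions on a fixed input subspace, so the products are zero.

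For claim (i), I would show $\Pi^{\Dom(W^L)}$ and $\Pi^{\Im(W^R)}$ are orthogonal projectors with orthogonal supports, using \cref{fact:WLWR-space-leqi}: both map the bijective length-$\leq i$ subspace into length-$\leq i+1$, but $\Dom(W^L)$ consists of states $\ket{x}_{\gsA}\ket{L}\ket{R}$ with $x\notin\Dom(L\cup R)$ (so the $\gsA$-register label is ``fresh'' for a forward step) whereas $\Im(W^R)$ consists of states where the $\gsA$-register label $x$ \emph{does} appear in $\Dom(L\cup R)$ (it was just recorded). These are manifestly orthogonal families of basis states, so their sum is a projector, equal to $\Pi^{\Dom(W)}$. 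The symmetric statement handles $\Pi^{\Im(W)}$ with the roles of $\Dom/\Im$ and $x/y$ swapped. \textbf{The main obstacle} I anticipate is the careful verification that the cross terms vanish: one must be precise about which states $W^L$ and $W^R$ annihilate (they are only \emph{partial} isometries, zero outside their specified domains), and confirm that $\Im(W^L)\subseteq \ker(W^R)$ and $\Im(W^{R,\dagger})\subseteq\ker(W^{L,\dagger})$ rather than merely that the length bookkeeping looks incompatible — a subtlety because the domains are defined by the $\Dom$/$\Im$-support conditions, not purely by length. Once the support conditions defining $\Dom(W^L), \Im(W^L), \Dom(W^R), \Im(W^R)$ are written out explicitly as sets of $(x;L,R)$ triples, this becomes a finite case-check on basis states, and the four cross terms collapse to zero, giving \eqref{eq:expand-DW} and \eqref{eq:expand-IW}.
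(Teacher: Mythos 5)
The paper does not actually prove this lemma — it is quoted verbatim as ``Fact 8'' of the cited reference \cite{ma2025construct} and used as a black box — so there is no internal proof to compare your attempt against. Taking your proposal on its own merits: the strategy is correct, and it is the natural one. Expanding
\begin{equation}
    W^\dagger W = W^{L,\dagger}W^L + W^R W^{R,\dagger} + W^{L,\dagger}W^{R,\dagger} + W^R W^L
\end{equation}
and using that $W^L, W^R$ are partial isometries identifies the first two terms with $\Pi^{\Dom(W^L)}$ and $\Pi^{\Im(W^R)}$; the entire lemma then reduces to the two cross terms vanishing and the two projectors being orthogonal.

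Where you hedge at length (``length bookkeeping looks incompatible \ldots a subtlety \ldots finite case-check''), the argument is actually a one-liner once you look at the $\gsA$-register support conditions, and you would do well to state it directly rather than circle it. From \cref{def:ternary-W-action}: a state in $\Im(W^L)$ has $\gsA$-label $y$ with $(x,y)$ freshly appended to $\gsL$, hence $y\in\Im(L\cup R)$, while $\Dom(W^R)$ requires $y\notin\Im(L\cup R)$; thus $W^R W^L = 0$, and $W^{L,\dagger}W^{R,\dagger} = (W^R W^L)^\dagger = 0$. Symmetrically, a state in $\Im(W^R)$ has $\gsA$-label $x\in\Dom(L\cup R)$ while $\Dom(W^L)$ requires $x\notin\Dom(L\cup R)$, so $W^L W^R = 0$ and its adjoint vanishes, handling $WW^\dagger$. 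The same two complementary membership conditions on the $\gsA$-label ($x\notin\Dom(L\cup R)$ versus $x\in\Dom(L\cup R)$, respectively $y\notin\Im(L\cup R)$ versus $y\in\Im(L\cup R)$) give the orthogonality $\Pi^{\Dom(W^L)}\perp\Pi^{\Im(W^R)}$ and $\Pi^{\Dom(W^R)}\perp\Pi^{\Im(W^L)}$ that you need in part (i), so the sums are projectors, and since $W$ is a partial isometry they equal $\Pi^{\Dom(W)}$ and $\Pi^{\Im(W)}$. In short: your outline is sound and would compile into a correct proof, but the ``main obstacle'' you flag is not really there; the support conditions resolve it immediately and you should not rely on a length-counting argument, which by itself is indeed insufficient.
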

%
%
\begin{lemma}[$W$ is a restriction of $V$ (Claim 17 of \cite{ma2025construct})]
\label{claim:relate-W-and-V}
We have
    \begin{align}
        W &= V \cdot \Pi^{\Dom(W)}, \label{eq:relate-W-and-V-goal1}\\
        W^\dagger &= V^\dagger \cdot \Pi^{\Im(W)}. \label{eq:relate-W-and-V-goal2}
    \end{align}
\end{lemma}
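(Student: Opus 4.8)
\textbf{Proof proposal for Lemma~\ref{claim:relate-W-and-V} ($W = V \cdot \Pi^{\Dom(W)}$).}

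The plan is to verify the two identities directly by comparing the actions of both sides on a basis of relation states, exploiting the definitions of $V$, $W$, and the projectors onto their domains. I would first reduce to proving just the forward identity $W = V \cdot \Pi^{\Dom(W)}$, since the second identity $W^\dagger = V^\dagger \cdot \Pi^{\Im(W)}$ follows by an entirely symmetric argument with the roles of $V^L, W^L, \Dom$ interchanged with $V^R, W^R, \Im$ (or, alternatively, by taking adjoints and using Lemma~\ref{claim:W-partial-isometry} together with the fact that $\Pi^{\Im(W)}$ commutes appropriately with $V^\dagger$). So the core task is a single operator identity.

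To prove $W = V \cdot \Pi^{\Dom(W)}$, I would expand $\Pi^{\Dom(W)} = \Pi^{\Dom(W^L)} + \Pi^{\Im(W^R)}$ from Lemma~\ref{claim:W-partial-isometry}, and expand $V = V^L \cdot (\Id - V^R V^{R,\dagger}) + (\Id - V^L V^{L,\dagger}) \cdot V^{R,\dagger}$ from Definition~\ref{def:symmetric-V}. Then I would evaluate $V \cdot \Pi^{\Dom(W)}$ on the two types of basis states: (i) states of the form $\ket{x}_{\gsA}\ket{L}_{\gsL}\ket{R}_{\gsR}$ with $(L,R) \in \calR^{2,\dist}$ and $x \notin \Dom(L\cup R)$, which lie in the range of $\Pi^{\Dom(W^L)}$, and (ii) states $\ket{y}_{\gsA}\ket{L}_{\gsL}\ket{R}_{\gsR}$ with $(L,R) \in \calR^{2,\dist}$ and $y \notin \Im(L\cup R)$, which lie in the range of $\Pi^{\Im(W^R)}$. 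On case (i): since $(L,R)$ is bijective and $x \notin \Dom(L\cup R)$, one checks $V^{R,\dagger}$ either annihilates this state or produces a state outside the bijective subspace, so the relevant surviving term of $V$ is $V^L \cdot (\Id - V^R V^{R,\dagger})$; then I would argue $V^R V^{R,\dagger}$ acts as zero (or that the correction is orthogonal) on this state because the state is not in $\Im(V^R)$ when we are on the distinct subspace, leaving $V^L$ acting as in \eqref{eq:WL-def} — which is exactly $W^L$, and also exactly $W$ since $W^{R,\dagger}$ annihilates states in $\Dom(W^L)$. Normalization matches because $|\Im(L \cup R)| = |L \cup R|$ for bijective $(L,R)$. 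Case (ii) is the mirror image, yielding $W^{R,\dagger}$. Summing, $V \cdot \Pi^{\Dom(W)}$ agrees with $W^L + W^{R,\dagger} = W$ on all basis states; and on the orthogonal complement of $\Dom(W)$ both sides vanish.

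The main obstacle I anticipate is the careful bookkeeping in case (i) of showing that the ``correction'' term $(\Id - V^L V^{L,\dagger}) \cdot V^{R,\dagger}$ and the term $V^L V^R V^{R,\dagger}$ genuinely drop out when restricted to the bijective/distinct subspace — i.e.\ that the subtractions built into $V$ (which exist precisely to make $V$ a well-defined partial isometry on the full, non-distinct relation space) are inert on $\Dom(W)$. This requires using that for $(L,R) \in \calR^{2,\dist}$, applying $V^{R,\dagger}$ to remove a pair and then $V^R$ to re-add one cannot return to the same distinct configuration in a way that overlaps the original — a statement about the combinatorics of bijective relations and the disjointness of domains/images — and similarly that the image projectors $\Pi^{\Im(V^R)}$, $\Pi^{\Im(V^L)}$ restrict correctly. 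Once that structural fact is pinned down, matching the $1/\sqrt{N - |L\cup R|}$ normalization factors against the $1/\sqrt{N - |\Im(L\cup R)|}$ in Definition~\ref{def:V-sym-PRO} is immediate from bijectivity, and the rest is routine. I would lean on the already-established Facts that $V^L, V^R, V, W^L, W^R, W$ are partial isometries (Claims 9, 11, 14, 15 of~\cite{ma2025construct}) to avoid re-deriving well-definedness.
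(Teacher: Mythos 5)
The paper does not actually supply a proof of this lemma; it is imported verbatim as Claim~17 of~\cite{ma2025construct}, so there is no in-paper argument to compare against. Evaluating your proposal on its own terms: the overall strategy (decompose $\Pi^{\Dom(W)} = \Pi^{\Dom(W^L)} + \Pi^{\Im(W^R)}$ and match $V$ against $W$ on each orthogonal summand) is the right one, and your case~(i) is essentially correct. For $\ket{x}\ket{L}\ket{R}$ with $(L,R)$ bijective and $x \notin \Dom(L\cup R)$, indeed $V^{R,\dagger}$ annihilates the state (no pair $(x,\cdot)$ appears in $R$), which kills both terms of $V$ that involve $V^{R,\dagger}$, leaving $V^L$; and $V^L$ matches $W^L$ because $|\Im(L\cup R)|=|L\cup R|$ for a bijective relation. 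Also $W^{R,\dagger}$ kills these states because $\Dom(W^L)\perp\Im(W^R)$.

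Case~(ii), however, contains a genuine gap. You identify the second summand as ``states $\ket{y}\ket{L}\ket{R}$ with $(L,R)\in\calR^{2,\dist}$ and $y\notin\Im(L\cup R)$, which lie in the range of $\Pi^{\Im(W^R)}$.'' Those states in fact lie in $\Dom(W^R)$, not $\Im(W^R)$; the latter is spanned by the entangled superpositions $W^R\ket{y}\ket{L}\ket{R}=\frac{1}{\sqrt{N-|L\cup R|}}\sum_{x\notin\Dom(L\cup R)}\ket{x}\ket{L}\ket{R\cup\{(x,y)\}}$, which are not computational basis states. (Your two families are not even disjoint: if $y$ lies outside both $\Dom(L\cup R)$ and $\Im(L\cup R)$, the state $\ket{y}\ket{L}\ket{R}$ falls under both.) Consequently ``case~(ii) is the mirror image'' fails, and a basis-state-by-basis-state check of $\Im(W^R)$ is not available. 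The correct treatment takes a generic element $W^R\ket{\phi}$ with $\ket{\phi}\in\Dom(W^R)$, uses $W^R=V^R$ on $\Dom(W^R)$ (normalizations agree by bijectivity), and then simplifies
\begin{equation*}
    V\,V^R = V^L\bigl(V^R - V^R V^{R,\dagger}V^R\bigr) + \bigl(\Id - V^L V^{L,\dagger}\bigr)V^{R,\dagger}V^R = \bigl(\Id - V^L V^{L,\dagger}\bigr)\,\Pi^{\Dom(V^R)},
\end{equation*}
via the partial-isometry identity $V^R V^{R,\dagger}V^R=V^R$, after which one checks $V^L V^{L,\dagger}\ket{\phi}=0$ (since $y\notin\Im(L\cup R)$ means no pair $(\cdot,y)$ is in $L$) and $\ket{\phi}\in\Dom(V^R)$, giving $V W^R\ket{\phi}=\ket{\phi}=W^{R,\dagger}W^R\ket{\phi}$. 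That algebraic step is the missing idea your ``mirror image'' glosses over.
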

%
%
We now present a key property of $W$ which enables us to analyze approximate purifications on a distinct subspace, after twirling by exact unitary $2$-designs.%
\begin{lemma}[Twirling by unitary $2$-design (Lemma 9.2 in \cite{ma2025construct})] \label{lem:twirling-strongPRU}
For any unitary $2$-design $\mathfrak{C}$, and any integer $0 \leq t \leq N-1$, we have
\begin{align}
    \norm{  \E_{C,D \sim \mathfrak{C}} (C_{\gsA} \otimes Q[C,D]_{\gsL \gsR})^\dagger \cdot \Big( \Pi^{\bij}_{\leq t, \gsL \gsR} - \Pi^{\Dom(W)}_{\leq t, \gsA \gsL \gsR}\Big) \cdot (C_{\gsA} \otimes Q[C,D]_{\gsL \gsR}) }_{\infty} &\leq 6t \sqrt{\frac{t}{N}},\\
    \norm{  \E_{C,D \sim \mathfrak{C}} (D^\dagger_{\gsA} \otimes Q[C,D]_{\gsL \gsR})^\dagger \cdot \Big( \Pi^{\bij}_{\leq t, \gsL \gsR} - \Pi^{\Im(W)}_{\leq t, \gsA \gsL \gsR}\Big) \cdot (D^\dagger_{\gsA} \otimes Q[C,D]_{\gsL \gsR}) }_{\infty} &\leq 6t \sqrt{\frac{t}{N}}.
\end{align}
\end{lemma}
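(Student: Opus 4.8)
The plan is to reduce the operator-norm bound to the probability of a single ``collision'' event, which can then be controlled using only the first moment of $\mathfrak C$ together with the covariance of the companion operator $Q[C,D]$. The two displayed inequalities are interchanged by the symmetry $\gsL\leftrightarrow\gsR$, $C\leftrightarrow D^\dagger$, $W^L\leftrightarrow W^R$, $\Dom\leftrightarrow\Im$, so it suffices to prove the first. By \cref{fact:WLWR-space-leqi} both $W^L$ and $W^R$ send bijective relation states to bijective relation states, so $\Pi^{\Dom(W^L)}$ and $\Pi^{\Im(W^R)}$ each lie below $\Id_\gsA\otimes\Pi^\bij$; combined with $\Pi^{\Dom(W)}=\Pi^{\Dom(W^L)}+\Pi^{\Im(W^R)}$ (\cref{claim:W-partial-isometry}), this shows that $\Delta\coloneqq\Pi^\bij_{\le t,\gsL\gsR}\otimes\Id_\gsA-\Pi^{\Dom(W)}_{\le t,\gsA\gsL\gsR}$ is an orthogonal projector. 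Writing $B\coloneqq C_\gsA\otimes Q[C,D]$, which is unitary, we have $B^\dagger\Delta B\succeq 0$, hence $\norm{\E_{C,D}\,B^\dagger\Delta B}_\infty=\sup_{\ket{\psi}:\,\norm{\psi}_2=1}\E_{C,D}\norm{\Delta B\ket{\psi}}_2^2$; and since $B$ commutes with the length projector, one may take $\ket{\psi}$ in the length-$\le t$ sector.

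Next I would pin down the subspace onto which $\Delta$ projects. From \cref{def:ternary-W-action}, the domain of $W^L$ is spanned by the bijective basis states $\ket{x}_\gsA\ket{L}_\gsL\ket{R}_\gsR$ with $x\notin\Dom(L\cup R)$, and every vector in $\Im(W^R)$ has its $\gsA$-register bitstring lying in the recorded domain; hence $\Delta\preceq\Pi_{\mathrm{coll}}\coloneqq\sum_{a\in[N]}\dyad{a}_\gsA\otimes\Pi^{a\in\Dom}_{\gsL\gsR}$, where $\Pi^{a\in\Dom}$ projects onto relation states whose domain contains $a$ and $\Pi_{\mathrm{coll}}$ is a genuine projector because distinct $a$ are orthogonal on $\gsA$. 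Now $Q[C,D]$ is precisely the relation-register companion that makes the forward-query path-recording oracle covariant, and it relabels the recorded \emph{domain} bitstrings through $D$ alone --- the draw \emph{independent} of the $C$ acting on $\gsA$. Thus $Q^\dagger\Pi^{a\in\Dom}Q$ carries no $C$-dependence, so averaging over $C$ --- using only the first-moment identity $\E_C\,C^\dagger\dyad{a}C=\Id_\gsA/N$ --- collapses the query register: $\E_C\,B^\dagger\Pi_{\mathrm{coll}}B=\tfrac1N\,\Id_\gsA\otimes Q^\dagger\!\big(\textstyle\sum_a\Pi^{a\in\Dom}\big)Q$. On the length-$\le t$ sector $\sum_a\Pi^{a\in\Dom}$ is at most the number-of-pairs operator, which is $\preceq t\cdot\Id$ and commutes with $Q$, so $\E_C\,B^\dagger\Pi_{\mathrm{coll}}B\preceq(t/N)\,\Id$ there; evaluating $\bra{\psi}\cdot\ket{\psi}$ and then averaging over $D$ gives $\E_{C,D}\norm{\Delta B\ket{\psi}}_2^2\le t/N$. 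Since $t/N\le 6t\sqrt{t/N}$ for every $0\le t\le N$, the first inequality follows, and the second is its mirror image.

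The step needing genuine care is verifying, against the symmetrized relation-state basis of \cref{def:relation-states}, that $Q[C,D]$ really does act on the domain registers through $D$ only --- so that the $C$-average over $\gsA$ decouples as above --- and in tracking how the normalizations $1/\sqrt{N-|L\cup R|}$ and the support restrictions in \cref{def:V-sym-PRO,def:ternary-W-action} behave under this relabeling. If that decoupling holds only approximately, the fallback is to impose the ``freshness'' condition defining $\Dom(W)$ one recorded pair at a time and invoke the gentle-measurement lemma (\cref{lemma:gentle measurement}, or its sequential form \cref{lem:seq-gentleM-pure}) together with the full $2$-design hypothesis --- which is precisely where the square root in $6t\sqrt{t/N}$ would originate.
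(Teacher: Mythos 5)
This lemma is cited from \cite{ma2025construct} (as Lemma 9.2) and is not re-proved in the present paper, so there is no in-paper proof to compare against; I will assess the proposal on its own merits.

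The setup in your first two paragraphs is sound: the decomposition $\Pi^{\Dom(W)}=\Pi^{\Dom(W^L)}+\Pi^{\Im(W^R)}$, the fact that both summands sit below $\Id_{\gsA}\otimes\Pi^{\bij}$, hence that $\Delta$ is a projector, and the bound $\Delta\preceq\Pi_{\mathrm{coll}}$ (since $\Dom(W^L)$ is exactly the set of basis states with $x\notin\Dom(L\cup R)$ and $\Im(W^R)\subseteq\Pi_{\mathrm{coll}}$). However, the step that carries the whole argument --- the claim that $Q[C,D]$ ``relabels the recorded domain bitstrings through $D$ alone'' so that the $C$-average decouples by the first-moment identity --- is incorrect, and the argument fails because of it. Tracking the two-sided covariance $D_{\gsA}V(C_{\gsA}\otimes Q[C,D])\approx Q[C,D]V$ (\cref{claim:two-sided-invariance}) on a single fresh query forces $Q[C,D]$ to act on each $\gsL$-pair as $C\otimes D^{T}$ and on each $\gsR$-pair as $\bar C\otimes D^{\dagger}$ (up to conventions): the \emph{domain} coordinate transforms through $C$ (for $\gsL$-records) or $\bar C$ (for $\gsR$-records), never through $D$. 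Consequently the $C$-average does not collapse $\gsA$ to $\Id/N$ independently of $\gsL\gsR$; the relevant twirl is $\E_{C}(C\otimes C)^{\dagger}(\cdot)(C\otimes C)$ or $\E_{C}(C\otimes\bar C)^{\dagger}(\cdot)(C\otimes\bar C)$ on $\gsA$ together with a domain slot of the relation register, and the latter has a fixed point along the maximally entangled direction $\ket\Omega_{\gsA,\gsR\text{-dom}}$. That direction has operator-norm contribution $\mathcal{O}(1)$, not $\mathcal{O}(t/N)$, so the weakening $\Delta\preceq\Pi_{\mathrm{coll}}$ --- which drops the subtraction of $\Pi^{\Im(W^R)}$ --- is genuinely too lossy: $\Im(W^R)$ is precisely the subspace containing those $\gsA$--$\gsR$-entangled states, and subtracting it is what makes the bound possible.

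Two further diagnostics corroborate this. If the $C$-average really did decouple as you propose, the lemma would hold for any unitary $1$-design and would carry the bound $t/N$; but \cite{ma2025construct} require a $2$-design and obtain $6t\sqrt{t/N}$, both strong signals that the second moment of $\mathfrak C$ and a per-query square-root loss (of gentle-measurement type, exactly as in your fallback) are genuinely used. Your ``fallback'' paragraph therefore identifies the right mechanism --- enforce the freshness one recorded pair at a time and invoke \cref{lemma:gentle measurement} or \cref{lem:seq-gentleM-pure} together with the full $2$-design --- but it is only a one-sentence sketch, and that is where the actual content of the proof lies. As written, the main argument is not a proof of the lemma.
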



\subsection{Approximate purification of the Haar twirl}

We proceed to describe the action of the path-recording oracle on the system.
\begin{definition}[Global state after queries to $V, V^\dagger$]
    For a $k$-query oracle adversary $\calT$ that can perform queries to $U, U^\dagger$, where $b_i \in \{0, 1\}$ correspond to $U, U^\dagger$, respectively, and any $0 \leq i \leq k$, let
    \begin{align}
        \ket{\calT_i^{V}}_{\gsA \gsB \gsL \gsR} = \prod_{i = 1}^k &\Bigg( \Big( (1-b_i) \cdot V_{\gsA \gsL \gsR} + b_i \cdot V_{\gsA \gsL \gsR}^\dagger \Big) \cdot A_{i,\gsA \gsB} \Bigg)  \ket{0^{n+m}}_{\gsA \gsB} \otimes \ket{\varnothing}_{\gsL} \ket{\varnothing}_{\gsR}
    \end{align}
    denote the global state on registers $\sA,\sB,\sL,\sR$ after $\calT$ makes $i$ queries to $V$.
\end{definition}

We will also consider the global purified state after queries to $W, W^\dagger$, where we twirl the input and the output states by two independent random unitaries sampled from a unitary $2$-design. For this purpose, it is convenient to define the purification of two random unitaries $C, D$ which modify the basis of the input and output to the (partial) path-recording oracle.

\begin{definition}
\label{def:init-D-state}
    For any distribution $\mathfrak{C}$ over $n$-qubit unitaries, define the state
    \begin{align}
    \ket{\init(\mathfrak{C})}_{\gsC \gsD} \coloneq \int_{C,D} \sqrt{ d\mu_{\mathfrak{C}}(C) d\mu_{\mathfrak{C}}(D)} \ket{C}_{\gsC} \otimes \ket{D}_{\gsD},
\end{align}
where $\mu_{\mathfrak{C}}(C)$ is the probability measure according to which $C$ is sampled from $\mathfrak{C}$.
\end{definition}

\begin{definition}[Controlled $C,D$ and $Q$]
\label{def:controlled-CDQ}
    Define the following operators
    \begin{align}
        &\mathsf{cC} \coloneq \int_{C} C_{\gsA} \otimes \ketbra*{C}_{\gsC}, \quad \mathsf{cD} \coloneq \int_{D} D_{\gsA} \otimes \ketbra*{D}_{\gsD},\\
    &\mathsf{cQ} \coloneq \int_{C, D} Q[C, D]_{\gsL {\color{gray} ,} \gsR} \otimes \ketbra*{C}_{\gsC} \otimes \ketbra*{D}_{\gsD}.
\end{align}
\end{definition} 

\begin{definition}[Global state after queries to twirled $W, W^\dagger$]
\label{def:twirled-W-state}
    For a $k$-query adversary $\calT$ that can perform queries to $U, U^\dagger$, where $b_i \in \{0, 1\}$ correspond to $U, U^\dagger$, let
    \begin{align}
        \ket{\calT_0^{W,\mathfrak{C}}} &= \ket{0^n}_{\gsA} \ket{0^m}_{\gsB} \ket{\varnothing}_{\gsL} \ket{\varnothing}_{\gsR} \ket{\mathsf{init}(\mathfrak{C})}_{\gsC \gsD}.
    \end{align}
    For $i$ from $1$ to $t$, let
    \begin{align}
        \ket{\calT_i^{W,\mathfrak{C}}} = &\Big( (1-b_i) \cdot (\scD \cdot W \cdot \scC) + b_i \cdot (\scD \cdot W \cdot \scC)^\dagger \Big) \cdot A_i \cdot \ket{\calT^{W,\mathfrak{C}}_{i-1}}.
    \end{align}
\end{definition}
Finally, we present key properties of the path-recording oracle $V$ which enable them to be used to simulate the action of Haar-random unitaries.
\begin{lemma}[$W$ is indistinguishable from $V$ after twirling (Lemma 9.3 in \cite{ma2025construct})] \label{lem:closeness-AWD-and-PhiVt}
Let $\mathfrak{C}$ be any exact unitary $2$-design. For any $k$-query oracle adversary $\calT$ that can query $\mathcal{O}, \mathcal{O}^\dagger$,
    \begin{align}
        \norm{\Tr_{- \sA \sB} \ketbra*{\calT^{W, \mathfrak{C}}_k}_{\gsA \gsB \gsL \gsR \gsC \gsD}, \Tr_{- \sA \sB} \ketbra*{\calT^{V}_k}_{\gsA \gsB \gsL \gsR}}_1 \leq \frac{18k}{N^{1/8}}. \label{eq:TD-phi-psi}
    \end{align}
\end{lemma}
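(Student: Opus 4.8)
The plan is to interpose a ``twirled-$V$'' hybrid and then exploit the identity $W = V\cdot\Pi^{\Dom(W)}$ from \cref{claim:relate-W-and-V}. Define $\ket{\calT_k^{V,\mathfrak{C}}}$ by the recursion of \cref{def:twirled-W-state}, but with every occurrence of $\scD\cdot W\cdot\scC$ (resp.\ its adjoint) replaced by $\scD\cdot V\cdot\scC$ (resp.\ its adjoint), started from the same initial state $\ket{0^n}_{\sA}\ket{0^m}_{\sB}\ket{\varnothing}_{\sL}\ket{\varnothing}_{\sR}\ket{\init(\mathfrak{C})}_{\sC\sD}$. By the triangle inequality and monotonicity of trace distance under partial trace, it then suffices to prove (A) that the reduced states of $\ket{\calT_k^{V,\mathfrak{C}}}$ and $\ket{\calT_k^{V}}$ on $\sA\sB$ coincide exactly, and (B) that $\norm{\ket{\calT_k^{W,\mathfrak{C}}} - \ket{\calT_k^{V,\mathfrak{C}}}}_2$ is small; \cref{lemma:trace to lp 2} then converts (B) into a bound on the trace distance of the reduced states, and adding the two contributions gives the claim.

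For (A): conditioned on a single branch $\ket{C}_{\sC}\ket{D}_{\sD}$ of $\ket{\init(\mathfrak{C})}$, the controlled operators $\scC$ and $\scD$ make the effective action of $\scD\cdot V\cdot\scC$ on the query register read ``apply $C$, query the path-recording oracle, apply $D$''. By the path-recording theorem of \cite{ma2025construct}, tracing out $\sL\sR$ from the plain $V$-dynamics reproduces the exact output of the adversary querying a Haar-random unitary; hence tracing out $\sL\sR$ from the branch-$(C,D)$ dynamics reproduces the exact output of the adversary querying $D\,U\,C$ with $U$ Haar-random, which equals the Haar output by left/right invariance of the Haar measure --- here it is essential that the \emph{same} $C,D$ appear in all $k$ queries, so that $U\mapsto DUC$ is a genuine change of variables. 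Averaging over the branches with weights $d\mu_{\mathfrak{C}}(C)\,d\mu_{\mathfrak{C}}(D)$ leaves this unchanged, which proves (A); note this step uses nothing about $\mathfrak{C}$ beyond its being a probability distribution.

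For (B): \cref{claim:relate-W-and-V} gives $\scD\,W\,\scC = (\scD\,V\,\scC)\cdot\big(\scC^{\dagger}\Pi^{\Dom(W)}\scC\big)$ at a forward query and $(\scD\,W\,\scC)^{\dagger} = (\scD\,V\,\scC)^{\dagger}\cdot\big(\scD\,\Pi^{\Im(W)}\scD^{\dagger}\big)$ at an inverse query; in both cases the trailing operator is an orthogonal projector, call it $\Pi_i$. Unrolling the recursions shows that $\ket{\calT_k^{W,\mathfrak{C}}}$ is exactly $\ket{\calT_k^{V,\mathfrak{C}}}$ with $\Pi_i$ inserted immediately after the $i$-th adversary unitary $A_i$. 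Telescoping over these $k$ insertions, and using that everything applied afterwards --- the adversary unitaries and the partial isometries $\scD\,V\,\scC$ --- is norm-non-increasing, yields $\norm{\ket{\calT_k^{W,\mathfrak{C}}} - \ket{\calT_k^{V,\mathfrak{C}}}}_2 \le \sum_{i=1}^{k}\norm{(\Id-\Pi_i)\,A_i\ket{\calT_{i-1}^{W,\mathfrak{C}}}}_2$, which is the same mechanism as \cref{lem:seq-gentleM-pure} adapted to partial isometries. To bound each term, I would use that the $\sC,\sD$ registers are touched only by controlled operators, so $\ket{\calT_{i-1}^{W,\mathfrak{C}}}$ decomposes over branches as $\int\!\sqrt{d\mu_{\mathfrak{C}}(C)\,d\mu_{\mathfrak{C}}(D)}\,\ket{\psi^{(i-1)}_{C,D}}_{\sA\sB\sL\sR}\ket{C}_{\sC}\ket{D}_{\sD}$, and the covariance of the path-recording oracle identifies the branch-$(C,D)$ dynamics with the plain $W$-dynamics up to the relabeling $Q[C,D]$ on $\sL\sR$ (\cref{def:controlled-CDQ}) and basis changes on $\sA$. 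Substituting this in, the per-step error becomes the expectation value, against a fixed sub-normalized ``core'' state, of $\E_{C,D\sim\mathfrak{C}}\big(C_{\sA}\otimes Q[C,D]_{\sL\sR}\big)^{\dagger}\big(\Id - \Pi^{\Dom(W)}_{\le i}\big)\big(C_{\sA}\otimes Q[C,D]_{\sL\sR}\big)$ for forward queries (and the analogous expression with $D^{\dagger}$ and $\Pi^{\Im(W)}$ for inverse queries). Writing $\Id - \Pi^{\Dom(W)}_{\le i} = (\Id - \Pi^{\bij}_{\le i}) + (\Pi^{\bij}_{\le i} - \Pi^{\Dom(W)}_{\le i})$, the first summand annihilates the core state because the $W$-dynamics keeps the relation registers bijective (\cref{fact:WLWR-space-leqi}), and the second is bounded in operator norm by $6i\sqrt{i/N}\le 6k\sqrt{k/N}$ via \cref{lem:twirling-strongPRU} --- the one place the exact $2$-design property of $\mathfrak{C}$ is used. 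Hence each term is at most $6k\sqrt{k/N}$, so a crude telescoping already gives $\norm{\ket{\calT_k^{W,\mathfrak{C}}} - \ket{\calT_k^{V,\mathfrak{C}}}}_2 \le k\sqrt{6k\sqrt{k/N}}$, and tracking the length truncations more carefully (in the only relevant regime $k \ll N = 2^n$) sharpens this to the linear-in-$k$ bound that, through \cref{lemma:trace to lp 2}, yields the stated $18k/N^{1/8}$.

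The hard part will be the covariance bookkeeping in the last step: one must verify that the branch-$(C,D)$ twirled $W$-dynamics really collapses, via the $\sL\sR$-relabeling $Q[C,D]$, to precisely the twirled operator-norm quantities that \cref{lem:twirling-strongPRU} controls, and propagate the length-$\le i$ truncations correctly through the telescoping. The gentle-measurement/telescoping skeleton and the invocations of \cref{claim:relate-W-and-V} and \cref{lem:twirling-strongPRU} are, by comparison, routine.
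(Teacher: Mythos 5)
This lemma is imported verbatim from Lemma~9.3 of \cite{ma2025construct}; the present paper supplies no proof of its own, so your proposal is being reviewed against the cited argument. Your skeleton --- interpose a twirled-$V$ hybrid, relate twirled-$W$ to twirled-$V$ via $W = V\cdot\Pi^{\Dom(W)}$ (\cref{claim:relate-W-and-V}) and sequential gentle measurement controlled by \cref{lem:twirling-strongPRU}, and separately relate twirled-$V$ to plain $V$ --- is the right architecture, and your step (B) is essentially the intended mechanism.

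Step (A), however, contains a genuine gap. You assert that tracing out $\sL\sR$ from the plain $V$-dynamics ``reproduces the exact output of the adversary querying a Haar-random unitary,'' and conclude (via Haar left/right invariance) that the twirled-$V$ and plain-$V$ reduced states on $\sA\sB$ coincide exactly. That premise is false: $V$ is only an \emph{approximate} purification of the Haar twirl, with error $18k(k+1)/N^{1/8}$ by \cref{theorem:haar-cho-strong}, so the claimed exact equality does not follow. Even downgrading ``exact'' to ``approximate'' and routing through the Haar output as an intermediate would pay the $O(k^2/N^{1/8})$ cost twice, which is quadratic in $k$ rather than the linear $18k/N^{1/8}$ being proved (and risks circularity, since in \cite{ma2025construct} the present lemma is an ingredient in establishing \cref{theorem:haar-cho-strong}, not a consequence of it). The ingredient you are missing is the two-sided unitary invariance of $V$ itself, \cref{claim:two-sided-invariance}: $D_\sA\, V_{\le k}\,(C_\sA \otimes Q[C,D]_{\sL\sR})$ and $Q[C,D]_{\sL\sR}\, V_{\le k}$ agree up to $16\sqrt{2k(k+1)/N}$ in operator norm. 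Telescoping this over the $k$ queries (noting that $Q[C,D]$ acts only on $\sL\sR$ and therefore commutes with the adversary unitaries on $\sA\sB$), and then tracing out $\sL\sR\sC\sD$ to absorb the residual $Q[C,D]$ factor, yields an approximate --- not exact --- agreement of the reduced states with error $O(k^{2}\sqrt{k/N})$, which is subdominant to $k/N^{1/8}$ and is what actually makes the stated bound go through. The looseness you flagged in step (B)'s constant tracking is comparatively minor; step (A) is the part that would fail.
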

\begin{lemma}[Two-sided unitary invariance (Claim 16 in \cite{ma2025construct})]
\label{claim:two-sided-invariance}
    For any integer $0 \leq k \leq N-1$ and any pair of $n$-qubit unitaries $C,D$,
    \begin{align}
        \norm{D_{\gsA} \cdot V_{\leq k} \cdot C_{\gsA} \otimes Q[C,D]_{\gsL \gsR}
        - Q[C,D]_{\gsL \gsR} \cdot V_{\leq k}}_{\infty} & \leq 16\sqrt{\frac{2k(k+1)}{N}},\\
        \norm{C_{\gsA}^\dagger \cdot (V^\dagger)_{\leq k} \cdot D_{\gsA}^\dagger \otimes Q[C,D]_{\gsL \gsR}
        - Q[C,D]_{\gsL \gsR} \cdot (V^\dagger)_{\leq k}}_{\infty} & \leq 16\sqrt{\frac{2k(k+1)}{N}}.
    \end{align}
\end{lemma}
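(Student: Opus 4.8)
\textbf{Proof plan for Lemma~\ref{claim:two-sided-invariance} (two-sided unitary invariance).}

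The statement asserts that the path-recording oracle $V$ intertwines conjugation on the query register $\mathsf{A}$ by unitaries $C,D$ with a corresponding permutation-of-basis operation $Q[C,D]$ on the relation registers $\mathsf{L},\mathsf{R}$, up to an $\infty$-norm error of order $\sqrt{k^2/N}$ on the length-$\le k$ subspace. The plan is to prove the first inequality; the second follows by taking adjoints and relabeling $C\leftrightarrow D^\dagger$, since $V$ and $V^\dagger$ play symmetric roles in the construction $V = V^L(\Id - V^RV^{R,\dagger}) + (\Id-V^LV^{L,\dagger})V^{R,\dagger}$. The operator $Q[C,D]$ should act on a relation state $\ket{R}$ by sending each pair $(x,y)$ to the (superposition over) pairs $(C\text{-transformed }x, D\text{-transformed }y)$; concretely, $Q[C,D]_{\gsL\gsR}$ applies $\bar C \otimes D$ in the appropriate sense to each $(x,y)$ coordinate recorded in the registers, so that $Q$ is the natural lift of the $\mathsf A$-action to the recording registers.

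The key steps, in order: (1) Exact intertwining on the injective subspace. First I would verify that on the part of the domain where no "collision" can occur --- i.e., where adding the next pair never requires renormalizing against an already-present image/domain element in a way that breaks linearity --- the operators $V^L$ and $V^R$ intertwine $C_{\gsA}, D_{\gsA}$ with $Q[C,D]$ \emph{exactly}. This is because on that subspace $V^L$ just takes $\ket{x}\ket{L}\ket{R} \mapsto N^{-1/2}\sum_{y}\ket{y}\ket{L\cup(x,y)}\ket{R}$ with the \emph{full} sum over all $y\in[N]$, and the uniform superposition $\sum_y \ket{y}\ket{y}$-type structure is exactly invariant under $D_{\gsA}\otimes (\text{the }D\text{-action in }Q)$; similarly for $V^R$ with $C$. (2) Control the defect. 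The error arises precisely from the discrepancy between summing over $y\in[N]$ versus $y\notin\Im(L\cup R)$ (and likewise for domains). On the length-$\le k$ subspace, $|\Im(L\cup R)|\le k$, so this discrepancy is a vector of norm $O(\sqrt{k/N})$ per query relative to the normalization $1/\sqrt{N}$ vs $1/\sqrt{N-|\cdot|}$. I would bound $\|V_{\le k} - \widetilde V_{\le k}\|_\infty$ where $\widetilde V$ is the "idealized full-sum" version that intertwines exactly, getting $O(\sqrt{k^2/N})$ after accounting for the cumulative effect across the $\le k$ queries' worth of recorded pairs (this is where the $k(k+1)$ rather than $k$ comes from --- summing $\sum_{j\le k} \sqrt{j/N}$-type contributions). (3) Assemble: $\|D_{\gsA} V_{\le k} C_{\gsA}\otimes Q - Q\, V_{\le k}\|_\infty \le \|D_{\gsA}(V-\widetilde V)_{\le k}C_{\gsA}\otimes Q\|_\infty + \|D_{\gsA}\widetilde V_{\le k} C_{\gsA}\otimes Q - Q\,\widetilde V_{\le k}\|_\infty + \|Q(\widetilde V - V)_{\le k}\|_\infty$, where the middle term vanishes by exact intertwining and the outer two are each $O(\sqrt{k(k+1)/N})$, with the constant $16$ tracked through the triangle inequality and the partial-isometry norm bounds.

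I expect the main obstacle to be step (1)--(2): precisely defining $Q[C,D]$ so that the exact-intertwining claim for $\widetilde V$ holds verbatim, and then bookkeeping the error accumulation across queries carefully enough to land the clean $16\sqrt{2k(k+1)/N}$ bound rather than something weaker. The subtlety is that $V$ is built from $V^L$ and $V^R$ via the projector combination $V = V^L(\Id-V^RV^{R,\dagger}) + (\Id-V^LV^{L,\dagger})V^{R,\dagger}$, so one must check the intertwining is compatible with those projectors --- i.e., that $Q[C,D]$ commutes with $\Pi^{\Im(V^R)}$ and friends --- which uses that $Q[C,D]$ is a genuine unitary permutation-lift that preserves the relevant subspaces (domain/image sizes are $C,D$-invariant since $C,D$ are invertible). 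Once that compatibility is in hand, the two halves of $V$ can be treated separately and the error bound is additive over them.
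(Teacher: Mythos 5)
The paper does not prove this lemma at all: it is stated verbatim as ``Claim~16 in~\cite{ma2025construct}'' and used as a cited ingredient from the reference, so there is no proof here to compare your attempt against. Evaluating your sketch on its own merits, the high-level intuition is right---there is indeed an idealized ``full-sum'' version of the path-recording map, built from $\widetilde V^L\ket{x}\ket{L}\ket{R} = N^{-1/2}\sum_{y\in[N]}\ket{y}\ket{L\cup(x,y)}\ket{R}$ and its right counterpart, that intertwines exactly with $C_\sA,D_\sA$ and a suitable relation-register operator $Q[C,D]$, with the error arising from the restricted output sum and the $(N-|\Im(L\cup R)|)^{-1/2}$ vs.\ $N^{-1/2}$ normalization mismatch.

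However, there are concrete gaps. First, the exact-intertwining step is more delicate than you indicate: because the new entry is a Bell-like pair shared between the $\sA$ register and the $y$-coordinate of the freshly appended slot of $\sL$, the operator $Q[C,D]$ must act with a \emph{transpose} on the $y$-coordinates (schematically $C$ on $\sL_\sX$ and $D^T$ on $\sL_\sY$, with the analogous pattern on $\sR$), not simply ``the natural lift of the $\sA$-action''; without pinning this down, the claimed exact intertwining of $\widetilde V$ cannot even be stated. Second, your defect bound does not produce the right scaling: for a fixed input $\ket{x}\ket{L}\ket{R}$ with $|L\cup R|\le k$, the vector $(V^L-\widetilde V^L)\ket{x}\ket{L}\ket{R}$ has norm $O(\sqrt{k/N})$, so $\norm{(V^L-\widetilde V^L)_{\le k}}_\infty = O(\sqrt{k/N})$, which is \emph{smaller} than the advertised $O(\sqrt{k(k+1)/N}) = O(k/\sqrt N)$; the ``summing $\sqrt{j/N}$ across queries'' heuristic does not apply to a single-operator $\infty$-norm bound, so the $k(k+1)$ must come from elsewhere. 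Third, and most importantly, the elsewhere is exactly the step you flag but do not carry out: the claim that $Q[C,D]$ commutes with the projectors $V^L V^{L,\dagger}$ and $V^R V^{R,\dagger}$ in the decomposition $V = V^L(\Id - V^RV^{R,\dagger}) + (\Id - V^LV^{L,\dagger})V^{R,\dagger}$ is not exactly true. Those projectors depend on whether the recorded relations are injective/bijective, and $Q[C,D]$ generically maps distinct relations to superpositions that include non-distinct ones. Bounding this approximate commutation is where the bulk of the error, and the harder part of the argument, actually lives; your proposal treats it as a routine compatibility check, which would not get you to a complete proof.
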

%
%
\begin{theorem}[$V$ is indistinguishable from a Haar-random unitary (Theorem 8 in \cite{ma2025construct})]\label{theorem:haar-cho-strong}
     For any $k$-query oracle adversary $\calT$ that can can query $\mathcal{O}, \mathcal{O}^\dagger$,
     \begin{align}
        \norm{\E_{\calO \sim \mu_{\mathsf{Haar}}} \ketbra*{\calT_k^{\calO}}_{\gsA \gsB}, \,\,\, \Tr_{\sL \sR}\left( \ketbra*{\calT^{V}_k}_{\gsA \gsB \gsL \gsR} \right)}_1 \leq \frac{18k(k+1)}{N^{1/8}}.
    \end{align}
\end{theorem}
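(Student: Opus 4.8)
The plan is to compare $\E_{\calO\sim\mathsf{Haar}}\ketbra*{\calT_k^{\calO}}_{\gsA\gsB}$ with $\Tr_{\sL\sR}\ketbra*{\calT^{V}_k}_{\gsA\gsB\gsL\gsR}$ by routing through the twirled partial path-recording oracle, taking the twirling ensemble $\mathfrak{C}$ to be the exact Haar measure (in particular an exact unitary $2$-design, so all of the lemmas above apply). By the triangle inequality it suffices to bound separately
\[
\norm{\E_{\calO\sim\mathsf{Haar}}\ketbra*{\calT_k^{\calO}}_{\gsA\gsB} - \Tr_{-\sA\sB}\ketbra*{\calT_k^{W,\mathfrak{C}}}}_1
\quad\text{and}\quad
\norm{\Tr_{-\sA\sB}\ketbra*{\calT_k^{W,\mathfrak{C}}} - \Tr_{\sL\sR}\ketbra*{\calT^{V}_k}_{\gsA\gsB\gsL\gsR}}_1 .
\]
The second term is immediate from Lemma~\ref{lem:closeness-AWD-and-PhiVt}, contributing $18k/N^{1/8}$; all of the work is in the first term, whose target bound is $O(k^2/N^{1/8})$ so that the two contributions sum to $18k(k+1)/N^{1/8}$.

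For the first term, note first that the twirls $C,D$ are invisible on the Haar side: by left/right invariance of the Haar measure $D\calO C$ is again Haar, so $\E_{\calO\sim\mathsf{Haar}}\ketbra*{\calT_k^{\calO}}_{\gsA\gsB} = \E_{\calO\sim\mathsf{Haar},\,C,D\sim\mathfrak{C}}\ketbra*{\calT_k^{D\calO C}}_{\gsA\gsB}$, and it suffices to compare the purified oracle $\scD\cdot W\cdot\scC$ with the twirled Haar oracle $D\calO C$ query by query. Three facts carry the comparison. (i) Lemma~\ref{claim:relate-W-and-V} writes $W = V\cdot\Pi^{\Dom(W)}$ and $W^\dagger = V^\dagger\cdot\Pi^{\Im(W)}$, so $W$ differs from $V$ only by a projection onto bijective relation states. (ii) The normalization $1/\sqrt{N-\abs{L\cup R}}$ in $W$ is calibrated so that, restricted to the distinct sector, its relation-register purification reproduces \emph{exactly} the Haar statistics conditioned on distinct inputs and outputs --- the uniform distribution over length-$k$ injective relations with weight $1/\big(N(N-1)\cdots(N-k+1)\big)$ each --- and the distinct sector carries all but an $O(k^2/N)$ fraction of the $k$-th moment (cf.~Definition~\ref{def:dist subspace}), so tracing out $\sL,\sR$ recovers the Haar twirl up to $O(k^2/N)$. (iii) The two-sided unitary invariance of $V$ (Lemma~\ref{claim:two-sided-invariance}) lets the twirls held in $\gsC,\gsD$ be commuted through $V$ into the relabelling $Q[C,D]$ of the relation registers at cost $O(k/N^{1/2})$ per query, and the twirling lemma (Lemma~\ref{lem:twirling-strongPRU}) bounds the discrepancy between $\Pi^{\Dom(W)}$ (resp.~$\Pi^{\Im(W)}$) and $\Pi^{\bij}_{\leq t,\gsL\gsR}$ after conjugation by the twirl by $O(t^{3/2}/N^{1/2})$, so that on non-distinct inputs the twirl still makes $W$ behave like Haar up to controlled error. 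All these operator- and projection-level discrepancies are converted into a single trace-distance bound using the sequential gentle measurement lemma (Lemma~\ref{lem:seq-gentleM-pure}) together with Fact~\ref{lemma:trace to lp 2}.

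The main obstacle I anticipate is not any single estimate but the composition of the three approximations --- the $V$-to-$W$ restriction, the commuting-through of the twirls, and the distinct-sector reduction --- across $k$ \emph{adaptive} queries, each sandwiched between adversary-chosen unitaries $A_i$ that interact in no simple way with the relation registers. The disciplined way to handle this is to fix once and for all an \emph{ideal} global state obtained by inserting the projector $\Pi^{\bij}_{\leq t,\gsL\gsR}$ (equivalently, the twirled images of $\Pi^{\Dom(W)},\Pi^{\Im(W)}$) at every step, bound its total overlap defect by summing the per-query bounds of Lemmas~\ref{lem:twirling-strongPRU} and~\ref{claim:two-sided-invariance} over the $k$ queries, and only then invoke Lemma~\ref{lem:seq-gentleM-pure} a single time. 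The seemingly weak rate $N^{-1/8}$ is inherited from Lemma~\ref{lem:closeness-AWD-and-PhiVt}, where an internal length-cutoff $t$ in \cite{ma2025construct} is balanced against the tail of long relations; and the summation over queries is exactly what promotes the $18k$ of that lemma to the $18k(k+1)$ of the theorem.
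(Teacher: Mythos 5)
The paper does not prove this statement; it is quoted verbatim as Theorem~8 of \cite{ma2025construct} and cited as background. There is therefore no paper proof to compare against, so I will assess your attempt on its own merits.

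Your overall skeleton --- pass from $\E_{\calO\sim\Haar}\ketbra*{\calT_k^{\calO}}$ to $\Tr_{-\sA\sB}\ketbra*{\calT_k^{W,\mathfrak{C}}}$ to $\Tr_{\sL\sR}\ketbra*{\calT_k^{V}}$ by the triangle inequality, with the second leg supplied by Lemma~\ref{lem:closeness-AWD-and-PhiVt} --- is the right framework, and the second leg is indeed handled. The genuine gap is in the first leg, which you summarize in item~(ii) as a normalization sanity check: you assert that the weight $1/\sqrt{N-\abs{L\cup R}}$ in $W$ is ``calibrated'' so that the trace-out of the twirled $W$ state over $\sL,\sR$ recovers the Haar twirl up to $O(k^2/N)$. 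This is not a calibration check; it is the central combinatorial identity of the path-recording framework. To prove it, one must expand $\Tr_{-\sA\sB}\ketbra*{\calT_k^{W,\mathfrak{C}}}$, match it against the Weingarten expansion of the Haar twirl, and show the coefficients agree term-by-term on the bijective relation sector. For forward-only queries this is already a substantial argument; in the two-sided setting of this theorem (the adversary queries both $\calO$ and $\calO^\dagger$), queries can partially undo earlier ones, creating collision structure that the one-register forward analysis never encounters. The twirl by $C,D$ --- and Lemma~\ref{lem:twirling-strongPRU} --- exist precisely to control that collision structure, which is why the argument cannot be collapsed to ``the normalization works out.'' None of the three facts you list, individually or in combination, derives this trace-out identity; they are ingredients that enter \emph{after} the identity has been established.

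A secondary issue is your accounting for the final constant. You claim the first leg contributes $O(k^2/N^{1/8})$ so the two legs sum to $18k(k+1)/N^{1/8}$, attributing this to ``summation over queries.'' But the per-query lemmas you invoke (Lemmas~\ref{lem:twirling-strongPRU} and~\ref{claim:two-sided-invariance}) give error at rate $N^{-1/2}$, not $N^{-1/8}$; summing gives at most $O(k^{5/2}/N^{1/2})$, which is a different functional form. While that bound is \emph{numerically compatible} with the target for $k\lesssim N^{3/4}$ (and trivially so beyond), it does not produce the clean $18k^2/N^{1/8}$ you advertise, and if the source actually proves the first leg as an \emph{exact} equality then the entire $18k(k+1)$ budget must come from a differently organized argument inside the proof of Lemma~\ref{lem:closeness-AWD-and-PhiVt}. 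Either way, your last paragraph is a plausible guess rather than a derivation, and the proof as written does not close.
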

%

\section{Impossibility of random unitaries from constant-local Hamiltonian dynamics} \label{app:nogo}

In this section, we provide the proof of Theorems~\ref{thm:nogodesign} and~\ref{thm:nogoPRU}, which show that ensembles generated by time-evolution under any constant-local Hamiltonians for any lengths of time cannot form approximate unitary 2-designs nor PRUs. 

%
%


\begin{proof}[Proof of Theorems~\ref{thm:nogodesign} and~\ref{thm:nogoPRU}]
    Let $q$ denote the maximum locality of a term in any Hamiltonian $H$ in our ensemble, i.e.~$q = \max_i w[P_i]$ when we decompose $H$ as a sum of Pauli operators, $H = \sum_i h_i P_i$. We describe a protocol which distinguishes a random constant-local Hamiltonian time evolution from a Haar-random unitary transformation as follows.
    
    First, we select a random stabilizer product state $\ket{u}$ and a random Pauli $P_i$ in $H$ with weight less than $q$.
    We then consider the mean square expectation value of $P_i$ when a random product state $\ket{u}$ is time-evolved under $e^{-iHt}$,
    \begin{equation}
        E_{\mathcal{E}^{(q)}} \equiv \E_{U \sim \mathcal{E}^{(q)}} \E_{i}\E_{u} \bra{u} e^{iHt} P_i e^{-iHt} \ket{u}^2
    \end{equation}
    This is equivalent to computing the expectation value of $P_i \otimes P_i$ on two copies of $e^{-iHt} \ket{u}$.
    In a unitary 2-design with additive error $\varepsilon$, we must have $E_{\mathcal{E}} \leq 2 \varepsilon + \mathcal{O}(1/2^n)$.
    In a pseudorandom unitary with security against any $\poly (n)$-time adversary, we must have $E_{\mathcal{E}} \leq 1/\omega(\poly n)$.

    We lower bound $E_\mathcal{E}$ above these values as follows.
    We first apply a standard formula for the twirl over two copies of a random stabilizer product state~\cite{elben2022randomized}, which yields
    \begin{equation}
        \E_u \bra{u} e^{iHt} P_i e^{-iHt} \ket{u}^2 = \frac{1}{2^n} \Tr( e^{iHt} P_i e^{-iHt} \cdot \calW( e^{iHt} P_i e^{-iHt} ) ),
    \end{equation}
    where $\calW(Q) = (1/3)^{w[Q]} Q$ is a quantum channel that multiplies each Pauli operator $Q$ by a factor that decays exponentially in its weight $w[Q]$.
    Intuitively, this factor corresponds to the probability that the Pauli operator commutes with a random stabilizer product measurement basis.
    We therefore obtain the lower bound
    \begin{equation}
        \frac{1}{2^n} \Tr( e^{iHt} P_i e^{-iHt} \cdot \calW( e^{iHt} P_i e^{-iHt} ) ) \geq (1/3)^q \cdot \frac{1}{2^n} \Tr( e^{iHt} P_i e^{-iHt} \cdot \mathcal{P}_{\leq q}( e^{iHt} P_i e^{-iHt} ) ) ,
    \end{equation}
    where $\mathcal{P}_{\leq q}(Q) = \delta_{w[Q] \leq q} Q$ denotes the superoperator that projects onto Pauli strings of weight less than $q$.
    In addition, we further have the lower bound
    \begin{equation}
        \frac{1}{2^n} \Tr( e^{iHt} P_i e^{-iHt} \cdot \mathcal{P}_{\leq q}( e^{iHt} P_i e^{-iHt} ) )  \geq \frac{\Tr( e^{iHt} P_i e^{-iHt}  H )^2}{\Tr(H^2)},
    \end{equation}
    which follows from applying the superoperator $\mathcal{P}_H(Q) = \Tr(QH) H/\text{tr}(H^2)$, which projects a Pauli string onto its overlap with the Hamiltonian $H$, onto the first copy of $e^{iHt} P_i e^{-iHt}$, and then using the fact that $\Tr(H \mathcal{P}_{\leq q}(\cdot)) = \Tr(H (\cdot))$ to simplify the resulting trace since each term of $H$ has weight at most $q$.
    We now observe that the numerator is time-independent, so that we can simplify it via
    \begin{equation}
        \frac{\Tr( e^{iHt} P_i e^{-iHt}  H )^2}{\Tr(H^2)}  = \frac{\Tr( P_i  H )^2}{\Tr(H^2)}  = \frac{h_i^2}{\sum_j h_j^2}.
    \end{equation}
    Taking the expectation over valid Pauli operators $P_i$ with weight less than or equal to $q$ replaces $h_i^2 \rightarrow \frac{1}{M_q} \sum_{j:w[P_j]\leq q} h_j^2$, where $M_q$ is the number of such Paulis.
    In total, this yields
    \begin{equation}
        \E_{U \sim \mathcal{E}} \E_i \E_{u} \bra{u} e^{iHt} P_i e^{-iHt} \ket{u}^2 \geq (1/3)^{q} \cdot \frac{1}{M_q}.
    \end{equation}
    We have $M_q \leq 4^q n$ in one-dimensional systems and $M_q \leq (4n)^q$ in all-to-all connected systems. 
    Hence, we have $\varepsilon \geq \mathcal{O}(1/(12^qn))$ in one-dimensional systems and $\varepsilon \geq \mathcal{O}(1/(12^q n^q))$ for all-to-all connected systems, which completes our proof for unitary designs.
    To achieve $\varepsilon = 1/\omega(\poly n)$, we require $q = \omega(\log n)$ in one-dimensional systems and $q = \omega(1)$ in all-to-all connected systems, which completes our proof  for pseudorandom unitaries.
\end{proof}

\section{Formation of random unitaries from nearly-local Hamiltonian dynamics}

In this section, we present the proofs of our main results on the formation of unitary designs and PRUs in nearly-local Hamiltonians.
We begin by discussing the non-adaptive setting and presenting the proofs of Theorems~\ref{thm:generic-designs} and~\ref{thm:generic-PRUs} in the main text.
We then turn to the adaptive setting and present the proof of Theorems~\ref{thm:designs} and~\ref{thm:PRUs}.


\subsection{Proof of Theorems~\ref{thm:generic-designs} and~\ref{thm:generic-PRUs}: Indistinguishability in non-adaptive quantum experiments} \label{app:non-adaptive}

Our analysis of the non-adaptive setting is organized as follows.
We first prove that in any non-adaptive quantum experiment making $k$ queries, a unitary $U^\dagger D U$ where $U$ is Haar-random and $D$ is any unitary operation, is indistinguishable from a Haar-random unitary up to measurable error $\mathcal{O}(k^2 |\!\Tr(D)|^2)$.
This allows us to replace $U_{i,i+1}^\dagger (H_i \otimes H_{i+1}) U_{i,i+1}$ for each odd $i$ with a Haar-random unitary on $i,i+1$ in Theorems~\ref{thm:generic-designs} and~\ref{thm:generic-PRUs}.
We then prove a gluing lemma for conjugated random unitaries with non-adaptive security.
Our gluing lemma is strictly weaker than the gluing lemma for adaptive security that we prove in the following section (Appendix~\ref{app:adaptive}), but has the advantage of possessing a more explicit proof.
Finally, we introduce an alternative random Hamiltonian time-evolution ensemble for which the proof of non-adaptive security is especially succinct.



\subsubsection{Moments of the random Hamiltonian dynamics}

We define a model with a Haar-random eigenbasis which is invariant under any unitary transformation, and represents the dynamics on a connected subsystem. We will then show that in the limit of large system size, the polynomial-order moments of these dynamics are close to those of Haar-random transformations. 
\begin{definition}[Randomized Hamiltonian dynamics] \label{def:randomdynamics}
    Consider a family of Hamiltonians on systems of $n$ qubits which is given by
    \begin{equation}
    \mathcal{O}_{\mathsf{inv}} = \{H = U \Lambda U^\dagger \,|\, U \sim \Haar, \, \Lambda \sim \calS\},
    \end{equation}
    where $\mathcal{S}$ is the $N$-variate spectral distribution of the eigenvalues of $\mathcal{O}_{\mathsf{inv}}$. We then define unitary ensembles given by \emph{randomized Hamiltonian dynamics} as those generated by evolving under a Hamiltonian $H$ sampled uniformly from $\mathcal{O}$ for a time distributed according to $t \sim P_H$:
    \begin{equation}
       \mathcal{E}_{\mathsf{inv}} = \left\{\left. e^{-iHt} \, \right\vert \, H \sim \mathcal{O}_{\mathsf{inv}}, t \sim P_H \right\}.
    \end{equation}
    In addition, we denote the distribution of the diagonalization of $\mathcal{E}_{\mathsf{inv}}$ by
    \begin{equation}
        \calD = \left\{\left. e^{-i\Lambda t} \, \right\vert \, \Lambda \sim \mathcal{S}, t \sim P_H \right\}.
    \end{equation}
\end{definition}
Since the eigenbasis and spectral distribution of the ensemble are independent of one another, we can average over each of them separately. Our strategy for proving the statistical closeness of the dynamics of this ensemble to the Haar moments is to first consider the average over the eigenbasis transformation using the path-recording framework of \cite{ma2025construct}. 

It is convenient to introduce a modified version of the path-recording oracle which provides an approximate purification in experiments that only query $k$ copies of a unitary in parallel. This setup is equivalent to an adaptive experiment with input state $\ket{\psi}$ on registers $(\otimes_i^k \gsA_i) \otimes \gsB$, in which each step consists of applying $U$ to $\gsA_i$, then applying a postprocessing step $T$ which is simply the permutation operator which shifts the $\gsA_i$ registers by one.
\begin{definition}[Parallel path-recording oracle] \label{def:parallel oracle}
    The $k$-query \emph{parallel} path recording oracle $V^{(k)}$ is defined via its left and right components
    \begin{equation}
    \begin{aligned}
        V^{(k)}_L \cdot \left[\bigotimes_{i=1}^k \ket{x_i}\right] \ket{L}_{\gsL} \ket{R}_{\gsR} &= \sum_{\substack{\mathbf{y} \in [N]^k_\mathrm{dist}:\\ y_i\not\in \Im(L \cup R)}} \frac{1}{Z_L} \ket{y}_{\gsA} \ket{L \cup \{(x,y_i)\}_i}_{\gsL} \ket{R}_{\gsR}, \\
        V^{(k)}_R \cdot \left[\bigotimes_{i=1}^k \ket{y_i} \right] \ket{L}_{\gsL} \ket{R}_{\gsR} &= \sum_{\substack{\mathbf{x} \in [N]^k_\mathrm{dist}:\\ x_i\not\in \Dom(L \cup R)}} \frac{1}{Z_R} \ket{x}_{\gsA} \ket{L}_{\gsL} \ket{R \cup \{(x_i, y)\}_i }_{\gsR},
    \end{aligned}
    \end{equation}
    where the normalization factors $Z_L, Z_R$ are given by 
    \begin{equation}
    \begin{aligned}
        Z_L &= \frac{1}{\sqrt{(N - \abs{\Im(L \cup R)}) (N - \abs{\Im(L \cup R)} - 1) \cdots (N - \abs{\Im(L \cup R)} - k + 1)}}, \\
        Z_R &= \frac{1}{\sqrt{(N - \abs{\Dom(L \cup R)}) (N - \abs{\Dom(L \cup R)} - 1) \cdots (N - \abs{\Dom(L \cup R)} - k + 1)}}.
    \end{aligned}   
    \end{equation}
\end{definition}
\begin{definition}[Parallel partial path-recording oracle]
    Similar to the adaptive case, the $k$-query \emph{parallel} partial path-recording oracle $W^{(k)}$ is defined via an additional restriction on the domain of the left and right components:
    \begin{equation}
        \begin{aligned}
            W^{(k)}_L &= V^{(k)}_L \cdot \prod_{i=1}^k \Pi^{\Dom(W_{L_i})}, \\
            W^{(k)}_R &= V^{(k)}_R \cdot \prod_{i=1}^k \Pi^{\Dom(W_{R_i})},
        \end{aligned}
    \end{equation}
    where $W_{R_i}$ denotes the partial path-recording oracle applied to the $i$-th copy of the system, and the projectors act on the shared path-recording register as well as the $i$-th copy of the physical system.
\end{definition}
%
%
In the remainder of this section, we use ``path-recording oracle'' to refer to the parallel version, unless otherwise specified. In addition, we will use the following property to simplify our analysis:
%
%
\begin{lemma}[$\widetilde{W}^{(k)}$ is indistinguishable from a Haar-random unitary] \label{lemma:twirled w to haar}
    Let $\mathfrak{C}$ be any exact unitary $2$-design. For any $k$-query oracle adversary $\calT$ that can query $\mathcal{O}, \mathcal{O}^\dagger$,
     \begin{align}
        \norm{\E_{\calO \sim {\mathsf{Haar}}} \ketbra*{\calT_k^{\calO}}_{\gsA \gsB} - \Tr_{\gsL \gsR \gsC \gsD} \ketbra*{\calT^{W, \mathfrak{C}}_k}_{\gsA \gsB \gsL \gsR \gsC \gsD}}_1 \leq \frac{36k(k+1)}{N^{1/8}}.
    \end{align}
\end{lemma}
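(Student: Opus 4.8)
\emph{Proof plan.} The plan is to read this off from two results of~\cite{ma2025construct} that are already recorded above, namely Lemma~\ref{lem:closeness-AWD-and-PhiVt} ($W$ is indistinguishable from $V$ after twirling) and Theorem~\ref{theorem:haar-cho-strong} ($V$ is indistinguishable from a Haar-random unitary), by combining them with the triangle inequality, once one observes that the parallel path-recording framework of Definition~\ref{def:parallel oracle} is merely a special case of the sequential one. Concretely, a $k$-query non-adaptive adversary is an ordinary adaptive $k$-query oracle adversary in the sense of Definition~\ref{def:oracle-adversary} whose post-processing unitaries $T_1,\dots,T_{k+1}$ are the fixed permutations that cycle the $n$-qubit query register through the $k$ copy registers $\gsA_1,\dots,\gsA_k$. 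Under this identification I would check directly from the definitions that one application of the parallel oracle $V^{(k)}$ reproduces $k$ sequential applications of $V$ interleaved with these permutations, that $W^{(k)}$ reproduces the corresponding sequential uses of $W$, and that the twirled parallel oracle appearing in $\ket{\calT^{W,\mathfrak{C}}_k}$ reproduces the sequential $\scD \cdot W \cdot \scC$ construction of Definition~\ref{def:twirled-W-state}. The two facts that make this work are: (i) the distinctness restrictions $\mathbf y,\mathbf x\in[N]^k_{\mathrm{dist}}$ built into Definition~\ref{def:parallel oracle} arise automatically, because each sequential step of $V^L$ forbids the newly recorded label from lying in $\Im(L\cup R)$ (and $V^R$ from $\Dom(L\cup R)$), a set that already contains every previously recorded label; and (ii) starting from the empty relations $\ket{\varnothing}_{\gsL}\ket{\varnothing}_{\gsR}$ — the only situation arising in a non-adaptive experiment — the normalizations $Z_L,Z_R$ are exactly the telescoping products $N^{-1/2}(N-1)^{-1/2}\cdots(N-k+1)^{-1/2}$ of the sequential normalization factors.

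Granting this identification, the proof becomes a two-line triangle inequality. First, Lemma~\ref{lem:closeness-AWD-and-PhiVt} applied to the non-adaptive adversary $\calT$ gives
\begin{equation}
    \norm{\Tr_{-\sA\sB}\ketbra*{\calT^{W,\mathfrak{C}}_k}_{\gsA\gsB\gsL\gsR\gsC\gsD} - \Tr_{-\sA\sB}\ketbra*{\calT^{V}_k}_{\gsA\gsB\gsL\gsR}}_1 \leq \frac{18k}{N^{1/8}},
\end{equation}
while Theorem~\ref{theorem:haar-cho-strong} gives
\begin{equation}
    \norm{\E_{\calO\sim\mu_{\mathsf{Haar}}}\ketbra*{\calT_k^{\calO}}_{\gsA\gsB} - \Tr_{\sL\sR}\ketbra*{\calT^{V}_k}_{\gsA\gsB\gsL\gsR}}_1 \leq \frac{18k(k+1)}{N^{1/8}}.
\end{equation}
Since $\Tr_{-\sA\sB}\ketbra*{\calT^{V}_k} = \Tr_{\sL\sR}\ketbra*{\calT^{V}_k}$ and $\Tr_{-\sA\sB}\ketbra*{\calT^{W,\mathfrak{C}}_k} = \Tr_{\gsL\gsR\gsC\gsD}\ketbra*{\calT^{W,\mathfrak{C}}_k}$, I would add these two inequalities through the common state $\Tr_{\sL\sR}\ketbra*{\calT^{V}_k}$ and use $18k \le 18k(k+1)$ for $k \ge 1$ to obtain the claimed bound $36k(k+1)/N^{1/8}$.

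The only substantive work lies in the identification of the first paragraph, and the delicate bookkeeping there concerns the twirl registers $\gsC,\gsD$. In the sequential construction the controlled operators $\scC$ and $\scD$ act on the single query register immediately before and after each call to $W$; translated through the permutations $T_i$, this means that in the parallel picture each copy register $\gsA_i$ receives $C$ just before its query and $D$ just after, which is exactly the ``twirl the input and the output by independent $2$-design unitaries'' behavior intended for the parallel twirled oracle, with $\ket{\init(\mathfrak{C})}_{\gsC\gsD}$ supplying the coherent copy of $(C,D)$ used at every step. I expect this — confirming that the parallel objects are literally restrictions of the sequential ones, not merely analogues — to be the one place requiring care; if the literal identification were to fail, the fallback is simply to re-run the proofs of Lemmas~9.2 and~9.3 and Theorem~8 of~\cite{ma2025construct} in the parallel setting, which go through unchanged because those arguments use only the partial-isometry structure, the length-truncation projectors, and the unitary $2$-design twirling estimate, all of which $V^{(k)}$ and $W^{(k)}$ inherit. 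Either way, no new quantitative estimate beyond the two black-box invocations above is needed.
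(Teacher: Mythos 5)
Your proof is correct and follows essentially the same route as the paper, whose entire proof reads ``This follows from Lemma~\ref{lem:closeness-AWD-and-PhiVt} and Theorem~\ref{theorem:haar-cho-strong} via an application of the triangle inequality,'' i.e.\ the same two black-box invocations and the same constant bookkeeping ($18k + 18k(k+1) \le 36k(k+1)$). Your first paragraph verifying that the parallel oracles $V^{(k)},W^{(k)}$ are literal restrictions of their sequential counterparts is more careful than what the paper records, and is a reasonable thing to check since the lemma's title refers to $\widetilde{W}^{(k)}$ even though the displayed statement is phrased for a general adaptive adversary.
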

\proof This follows from Lemma~\ref{lem:closeness-AWD-and-PhiVt} and Theorem~\ref{theorem:haar-cho-strong} via an application of the triangle inequality. \qed
\\

We can therefore substitute all instances of the random eigenbasis transformation with applications of $W^{(k)}$, sandwiched by $C, C'$ which are drawn from an exact $2$-design, to obtain an approximate purification of the moment. 
%
We proceed to the main technical result of this section, which is bounding the statistical distance between this purification and the output of $\widetilde{V}^{(k)}$, i.e. the action of $V^{(k)}$ after conjugating by some $C$ drawn from a $2$-design.
\begin{proposition}[Statistics of random eigenbasis dynamics are close to $\widetilde{V}^{(k)}$] \label{prop:random H moments}
    Suppose $D$ is sampled from an arbitrary distribution $\calD$. For any $k$ and input state $\ket{\psi}$ on $nk + m$ qubits, the trace distance between the twirl over the random eigenbasis Hamiltonian dynamics and the action of $\widetilde{V}^{(k)}$ is bounded by
    \begin{equation}
        \norm{\expect_{U \sim \calE_{\mathsf{inv}}} \left[\left(U\right)^{\otimes k} |\psi \rangle \langle \psi | \left(U^\dagger\right)^{\otimes k} \right] - \Tr_{\gsL} |\calT^{\widetilde{V}}_k \rangle \langle \calT^{\widetilde{V}}_k |}_1 \leq \frac{144k(k+1)}{N^{1/8}} + \frac{12k^2}{N} + \expect_{D \sim \calD} \left[ \frac{8k^2}{2^{2n}} \abs{\Tr D}^2 \right],
    \end{equation}
    up to subleading corrections $O(k^2/N^2).$
\end{proposition}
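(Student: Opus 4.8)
The plan is to fix the spectrum, i.e.\ the diagonal unitary $D=e^{-i\Lambda t}$ drawn from $\calD$, write each element of $\calE_{\mathsf{inv}}$ as $UDU^\dagger$ with $U$ Haar-random, compute the $k$-th moment by purifying the Haar eigenbasis average with the path-recording framework, and average over $D\sim\calD$ only at the end. The starting point is that $(UDU^\dagger)^{\otimes k}=U^{\otimes k}\,D^{\otimes k}\,(U^\dagger)^{\otimes k}$, so the map $\ket{\psi}\mapsto(UDU^\dagger)^{\otimes k}\ket{\psi}$ is exactly a $2k$-query experiment on $U$: a round of $k$ parallel queries to $U^\dagger$, then the fixed product unitary $D^{\otimes k}$, then a round of $k$ parallel queries to $U$. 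Writing the Haar $U$ with two-design elements conjugated around a Haar unitary (which is free), the preceding discussion already expresses the eigenbasis average as the twirled partial path-recording state $\ket{\calT^{W,\mathfrak{C}}}$; what remains is to compare its $\gsA\gsB\gsR$ reduction to $\widetilde V^{(k)}$.

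First I would invoke \ref{lemma:twirled w to haar} (equivalently \ref{theorem:haar-cho-strong} with \ref{lem:closeness-AWD-and-PhiVt}) with the number of queries set to $2k$. This swaps the partial oracle $W$ for the full oracle $V$ at the cost of $36\,(2k)(2k+1)/N^{1/8}\le 144\,k(k+1)/N^{1/8}$, the first error term, and installs the bijective / locally-distinct subspace restrictions the rest of the argument needs. The object $\widetilde V^{(k)}$ is then understood as the analogous $V$-based purification with no diagonal middle layer --- equivalently, with $D^{\otimes k}$ replaced by a two-design element that gets absorbed into the conjugating unitaries via the two-sided invariance of \ref{claim:two-sided-invariance} --- so the comparison isolates the effect of the spectrum $D$.

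The structural heart is the following. The $k$ inverse queries record a length-$k$ bijective relation into $\gsR$, whose domain indices have been summed over; the middle layer $D^{\otimes k}$ multiplies those summed indices by phases; and the $k$ forward queries then act on a state whose $\gsR$-register is already maximally populated, so by construction each forward query either (i) removes a pair from $\gsR$ (``cancellation'', possible precisely because the query register still carries the matching domain index) or (ii) adds a fresh pair to $\gsL$. Whenever a cancellation occurs, summing the diagonal phase over the removed index produces a scalar factor $\Tr(D)/N$, up to normalization ratios $(N-|\Im(L\cup R)|)/N$ and $(N-|\Dom(L\cup R)|)/N$ that are each $1-O(k/N)$. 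Branches with different numbers of cancellations $c$ leave different numbers of pairs in $(\gsL,\gsR)$ and are mutually orthogonal by the direct-sum structure of the variable-length registers, so after tracing out $\gsL$ the state is a genuine block-diagonal \emph{mixture} over $c$. The $c=0$ block, after restoring normalization, is exactly $\Tr_\gsL\ketbra{\calT^{\widetilde V}_k}$, up to $O(k^2/N)$ errors from the gentle-measurement restrictions to the bijective subspace (\ref{lemma:gentle measurement}), the exact two-design twirl (\ref{lem:twirling-strongPRU}), and the accumulated $1-O(k/N)$ normalizations --- these give the $12k^2/N$ term. The $c\ge1$ blocks carry total weight $O\!\big(k^2\,|\Tr D|^2/N^2\big)$ --- one factor of $|\Tr(D)/N|^2$ per way a forward query annihilates a recorded pair, counted on the ket and bra copies --- so the trace distance between the full state and its $c=0$ block is at most $8k^2|\Tr D|^2/2^{2n}$. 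Averaging over $D\sim\calD$ and summing the three contributions yields the bound.

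The main obstacle I expect is this structural analysis. One has to verify that the $c=0$ branch reproduces $\widetilde V^{(k)}$ on the nose --- in particular that the content left in $\gsR$ by the inverse round matches what the definition of $\widetilde V^{(k)}$ carries along, and that the conjugating two-design elements sit where \ref{claim:two-sided-invariance} can be applied --- and one must control the combinatorics of which forward queries can cancel which recorded pairs, along with every $(N-|\Im(L\cup R)|)/N$ and $(N-|\Dom(L\cup R)|)/N$ factor accumulated as pairs are added and removed, to pin down the constants and to confirm the cancellation weight scales as $|\Tr D|^2$ rather than $|\Tr D|$ once the partial trace renders the state a mixture. A secondary subtlety is that each query round is parallel, so the parallel path-recording oracle of \ref{def:parallel oracle} must be used within each round while the overall $2k$-query experiment is handled adaptively.
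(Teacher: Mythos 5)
Your proposal matches the paper's proof in both structure and substance. You and the paper both: (i) treat $(UDU^\dagger)^{\otimes k}$ as a $2k$-query experiment and invoke the twirled path-recording purification (\cref{lemma:twirled w to haar}/\cref{theorem:haar-cho-strong} with $2k$ queries) to get the $144k(k+1)/N^{1/8}$ term; (ii) isolate the ``cancellation'' branches in which a forward query annihilates a pair that the inverse round recorded into $\gsR$ — the paper does this by inserting $\Pi^{\not\in\mathrm{Im}(W^R)}$ before the forward $W^{(k)}$, which is literally the parallel-query form of your classification by the number of cancellations $c$; and (iii) compare the surviving ($c=0$) branch to $\Tr_{\gsL}|\calT^{\widetilde V}_k\rangle\langle\calT^{\widetilde V}_k|$ up to normalization differences. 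The only place you are slightly off is the accounting of the $O(k^2/N)$ term: you attribute all of the $12k^2/N$ to gentle-measurement and normalization bookkeeping, but in the paper a $4k^2/2^n$ piece of it comes directly from the collision-branch computation itself, via the $\Tr|D|^2 = N$ term in the rank-two Weingarten integral $\E_{C'}\bigl[|\langle x|C'DC'^\dagger|x\rangle|^2\bigr]\approx(|\Tr D|^2+\Tr|D|^2)/N^2$. Your statement that a cancellation ``produces a scalar factor $\Tr(D)/N$'' is the first moment; the branch weight needs the second moment, which carries the extra $1/N$. This is exactly the kind of detail you already flag as needing verification, and it does not change the structure of the argument or the final bound.
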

\proof 
We first apply Lemma~\ref{lemma:twirled w to haar} to explicitly compute an approximate purification of the average over the random eigenbasis transformation. For any instance of $D$, we obtain a state of the form
\begin{equation} \label{eq:purify haar avg eigenbasis}
    \ket{\calT_k^{\mathsf{inv}}} = \frac{1}{Z} \expect_C \left[ C^{\otimes k} W^{(k)} \sum_{\mathbf{x,} \mathbf{y} \in [N]^k_{\text{dist}}} \ket{\varnothing}_\gsL \ket{\{(x_i, y_i)\}_{i \in [k]}}_\gsR \left[\bigotimes_{j=1}^k \widetilde{D} \ket{x_j} \bra{y_j}\right] (C^\dagger)^{\otimes k} \right]\ket{\psi},
\end{equation}
where $\widetilde{D}$ is the twirl of $C'D(C')^\dagger$ over $C'$, $Z$ is the normalization factor $(N!/(N-k)!)^{-1/2}$, and we have expanded the action of the inverse direction $(W^{(k)})^\dagger$ acting in the inverse direction. Since this includes a total of $2k$ copies of $U$ or $U^\dagger$, we incur an error which is upper bounded by
\begin{equation}
    \norm{\Tr_{\gsL\gsR} |\calT^{\mathsf{inv}}_k \rangle \langle \calT^{\mathsf{inv}}_k | - \expect_{U \sim \Haar} \left[\left(U  D U^\dagger\right)^{\otimes k} |\psi \rangle \langle \psi | \left(U D^\dagger U^\dagger\right)^{\otimes k} \right]}_1 \leq \frac{144k(k+1)}{2^{n/8}}. \label{eq:random H moment purify bound}
\end{equation}
%
Rather than constructing a purification and change of basis corresponding to the twirl over $C, C',$ we directly impose a restriction on the support of $\ket{\calT^{\mathsf{inv}}_k}$. In particular, we insert the projector
\begin{equation}
    \Pi^{\not\in \mathrm{Im}(W^R)} = \mathbbm{1} - \Pi^{\mathrm{Im}(W^R)}
\end{equation}
before the action of $W^{(k)}$ to obtain
\begin{equation}
    \ket{\calT^{\widetilde{\mathsf{inv}}}_k} = \expect_C \left[ C^{\otimes k} W^{(k)} \Pi^{\not\in \mathrm{Im}(W^R)} \widetilde{D} (W^\dagger)^{(k)} (C^\dagger)^{\otimes k} \right]\ket{\psi}.
\end{equation}
The error from applying this restriction is given by 
\begin{equation}
    \begin{aligned}
        \norm{\Tr_{\gsL\gsR} |\calT^{\mathsf{inv}}_k \rangle \langle \calT^{\mathsf{inv}}_k | - \Tr_{\gsL\gsR} |\calT^{\widetilde{\mathsf{inv}}}_k \rangle \langle \calT^{\widetilde{\mathsf{inv}}}_k |}_1 
        &\leq 2 \norm{\expect_C \left[ C^{\otimes k} W^{(k)} \Pi^{\mathrm{Im}(W^R)} \widetilde{D} (W^\dagger)^{(k)}  (C^\dagger)^{\otimes k} \right]\ket{\psi}}_2,
    \end{aligned} \label{eq:restrict random H moment Im W}
\end{equation}
where we have used the property that the trace distance is monotonic under quantum operations, in addition to Fact~\ref{lemma:trace to lp 2}. The expression on the RHS is upper bounded by summing over all overlap terms in the computational basis from the $k$ applications of the partial path-recording oracle corresponding to the forward direction, and the $k$ in the inverse direction with a coincidence of the form $\langle x | \widetilde{D} | x \rangle$, which enables us to derive a condition in terms of $D$:
%
%
\begin{equation} \begin{aligned}
    \frac{1}{2} \, \mathrm{RHS} &\leq \sum_{i,i' \in [k]} \abs{\left\langle \Pi^{\text{eq}}_{i, i'} \right\rangle}^2 = \abs{\expect_C \left[ C^{\otimes k} W^{(k)} \left[ \sum_{i,i' \in [k]} \Pi^{\text{eq}}_{i, i'} \right] \widetilde{D} (W^\dagger)^{(k)}(C^\dagger)^{\otimes k} \right]\ket{\psi}}^2 \\
    &\leq \frac{1}{2^{n}} \sum_{i,i' \in [k]} \sum_{x_i, x_{i'} \in [N]} \abs{\langle x_{i'} | \widetilde{D} | x_i \rangle}^2 \delta(x_i, x_{i'}) \\
    &\leq \frac{k^2}{2^{n}} \sum_{x \in [N]} \langle xx |  \expect_{C'} \left[ (C')^{\otimes 2} ( D \otimes D^\dagger ) (C'^\dagger)^{\otimes 2} \right] | xx \rangle \\
    &\leq \frac{k^2}{2^{n} }\sum_{x \in [N]} \frac{1}{2^{2n}-1} \left[ \abs{\Tr D}^2 + \Tr \abs{D}^2 - \frac{1}{2^n} \Tr \abs{D}^2 - \frac{1}{2^n} \abs{\Tr D}^2 \right] \\
    &\leq \frac{2k^2}{2^{2n}} \abs{\Tr D}^2 + \frac{2k^2}{2^n},
    \end{aligned} \label{eq:D trace power moment bound}
    \end{equation}
where in the last step we have used the fact that $\Tr \abs{D}^2 = 2^n$. Plugging this back into Equation~\ref{eq:restrict random H moment Im W} yields
\begin{equation} \label{eq: D trace power moment bound}
    \norm{\Tr_{\gsL\gsR} |\calT^{\mathsf{inv}}_k \rangle \langle \calT^{\mathsf{inv}}_k | - \Tr_{\gsL\gsR} |\calT^{\widetilde{\mathsf{inv}}}_k \rangle \langle \calT^{\widetilde{\mathsf{inv}}}_k |}_1  \leq \frac{4k^2}{2^{2n}} \abs{\Tr D}^2 + \frac{4k^2}{2^n}.
\end{equation}
%
Expanding the action of the $W^k$ corresponding to the forward direction in $\ket{\calT^{\widetilde{\mathsf{inv}}}_k}$ yields
\begin{equation}
    \begin{aligned}
        \frac{1}{Z'} \sum_{(L,R) \in \calR^{2}_{\text{dist}}} & \ket{\{(x_{i'}', y_{i'}')\}_{{i'} \in [k]}}_\gsL \ket{\{(x_{i}, y_{i})\}_{{i} \in [k]}}_\gsR 
        \bigotimes_{j=1}^k \bra{x_j'} \widetilde{D} \ket{x_j} \cdot \expect_C \left[ C \ketbra{y_j'}{y_j} C^\dagger \right] \ket{\psi},
    \end{aligned}
\end{equation}
where we have included the normalization factor $Z' = (N!/(N-2k)!)^{-1/2}.$ In addition, we have that
\begin{equation}
    |\calT^{\widetilde{V}}_k \rangle = \frac{1}{Z} \sum_{\mathbf{y'} \in [N]^k_\mathrm{dist}} \ket{\{(y_{i}', y_{i})\}_{{i} \in [k]}}_\gsL \ket{\varnothing}_\gsR \bigotimes_{j=1}^k \expect_C \left[ C \ketbra{y_j'}{y_j} C^\dagger \right] \ket{\psi}.
\end{equation}
Computing the partial trace over the relation state registers and expanding up to leading order yields that the outputs of the two are indistinguishable up to
\begin{equation}
    \norm{\Tr_{\gsL \gsR} |\calT^{\widetilde{\mathsf{inv}}}_k \rangle \langle \calT^{\widetilde{\mathsf{inv}}}_k | - \Tr_{\gsL} |\calT^{\widetilde{V}}_k \rangle \langle \calT^{\widetilde{V}}_k |}_1 \leq 2\left[\left(\frac{2k^2}{2^{2n}} \abs{\Tr D}^2 + \frac{2k^2}{2^n}\right) + \left(\frac{2k^2}{2^n}\right) \right] + O\left(\frac{k^2}{2^{2n}}\right), 
    \label{eq:sum D trace power moment Vk bound}
\end{equation}
where we have again applied the triangle inequality to to bound the contributions to the norm from the $\abs{\langle x_{i}'| \widetilde{D} | x_i \rangle}^2$ and $\abs{\expect_C \left[ C \ketbra{y_j'}{y_j} C^\dagger \right] \ket{\psi}}^2$ terms. Finally, applying triangle inequality to Equations~\ref{eq:random H moment purify bound},~\ref{eq: D trace power moment bound},~\ref{eq:sum D trace power moment Vk bound} gives
%
%
\begin{equation}
\begin{gathered}
    \norm{\expect_{D \sim \calD} \expect_{U \sim \Haar} \left[\left(U  D U^\dagger\right)^{\otimes k} |\psi \rangle \langle \psi | \left(U D^\dagger U^\dagger\right)^{\otimes k} \right] - \Tr_{\gsL} |\calT^{\widetilde{V}}_k \rangle \langle \calT^{\widetilde{V}}_k |}_1 \\
    \leq \frac{144k(k+1)}{2^{n/8}} + \expect_{D \sim \calD} \left[ \frac{8k^2}{2^{2n}} \abs{\Tr D}^2 + \frac{12k^2}{2^n} + O\left(\frac{k^2}{N^{2}}\right) \right] \\
    = \frac{144k(k+1)}{N^{1/8}} + \frac{12k^2}{N} + \expect_{D \sim \calD} \left[ \frac{8k^2}{2^{2n}} \abs{\Tr D}^2 \right] + O\left(\frac{k^2}{N^{2}}\right), 
\end{gathered}
\end{equation}
which yields the stated bound up to leading order in $k/N$. \qed

\subsubsection{Gluing argument for nearly-local dynamics}

In order to prove Theorems~\ref{thm:generic-designs} and~\ref{thm:generic-PRUs}, we now present a gluing-style argument for neighboring patches of the form given in Definition~\ref{def:randomdynamics}. It is convenient to introduce a new isometry which \emph{expands} relation states of a given length, when restricted to a particular subspace.
\begin{definition}[Expansion map] \label{def:expand map}
    We construct a map which acts nontrivially on relation states in a single register $\gsL$ of length $k$ whose output is distinct on $S_1 \cup S_2$, so that $\abs{\Im_{S_1 \cup S_2}(\gsL)} = k$ for disjoint physical subsystems $S_1, S_2$ of size $\xi$ each. 
    
    On this subspace, the map ``expands'' the original relation state into superpositions over relation states on two intermediate registers which correspond to non-overlapping regions $S_1' \supseteq S_1, S_2' \supseteq S_2$, sandwiched by a pair of relation states $\gsL_0, \gsR_0$ corresponding to the original physical system:
    \begin{equation}
    \begin{aligned}
        \mathsf{Expand}^k_{\gsL \to \gsL_0 \gsR_0 \gsL_1 \gsL_2} = \sum_{\substack{\mathbf{w^1}, \mathbf{w^2}, \mathbf{z^1}, \mathbf{z^2} \in [2^\xi]^k_\mathrm{dist}}} \frac{1}{Z} & \left[ \ket{\{(x_i^1 x_i^2, w^1_i w^2_i)\}_{i \in [k]}}_{\gsL_0} \ket{\{(z^1_i z^2_i, y_i^1 y_i^2)\}_{i \in [k]}}_{\gsR_0} \right. \\ & \left.\ket{\{(w^1_i x^{\overline{1}}_i, z^1_i y^{\overline{1}}_i)\}_{i \in [k]}}_{\gsL_1} \ket{\{(w^2_i x^{\overline{2}}_i, z^2_i y^{\overline{2}}_i)\}_{i \in [k]}}_{\gsL_2} \right] \\ &\cdot \bra{\{(x_i,y_i)\}_{i \in [k]}}_{\gsL} \cdot \Pi^{\mathrm{dist}}_{S_1 S_2} ,
    \end{aligned}
    \end{equation}
    where $Z$ is the normalization factor $\left[ (2^\xi)!/(2^\xi-k)! \right]^{-2}$, and we have used the shorthand $x^1_i$ to denote the substring of x on $S_1$. This map acts as an isometry on the image of $\Pi^{\mathrm{dist}}_{S_1 S_2}$.
\end{definition}
We now present the formal statement of our bound for the ``glued'' dynamics:
\begin{lemma}[Gluing through conjugation] \label{lemma:glue patch moments}
    Consider the ensemble 
    \begin{equation}
        \calE' = \{U_{AA'} (U_{A\overline{A}} \otimes U_{A'\overline{A'}}) U_{AA'}^\dagger\},   
    \end{equation}
    where $U_{A\overline{A}}, U_{A'\overline{A'}}$ are drawn from from $\epsilon_A$- and $\epsilon_{A'}$-approximate $k$-designs, respectively, on the regions $A\overline{A}$ and $A'\overline{A'}$, and $U_{AA'}$ is a Haar-random transformation on the overlapping region $AA'$. In addition, suppose that $A$ and $A'$ consist of $\xi$ qubits each. Then $\calE'$ is an approximate $k$-design up to additive error
    \begin{equation}
        \norm{\Phi_{\calE'} - \Phi_H}_\diamond \leq \eps_A + \eps_{A'} + \frac{108 k(k+1)}{2^{\xi/8}} + \frac{18 k(k+1)}{2^{\xi/4}} + \frac{2k\sqrt{2}}{2^{\xi/2}} + \frac{k \sqrt{2}}{2^\xi}.
    \end{equation}
\end{lemma}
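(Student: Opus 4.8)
The plan is to interpolate between $\calE'$ and the Haar ensemble in three steps, using the parallel path-recording framework to reduce the comparison to a combinatorial identity encoded by the expansion isometry of Definition~\ref{def:expand map}. First I would use the defining property of approximate $k$-designs to replace the two block unitaries by exact Haar-random unitaries: in the $k$-th moment $\Phi_{\calE'}$ the unitaries $U_{A\overline A}$ and $U_{A'\overline{A'}}$ enter only through $U_{A\overline A}^{\otimes k}(\cdot)(U_{A\overline A}^{\dagger})^{\otimes k}$ and $U_{A'\overline{A'}}^{\otimes k}(\cdot)(U_{A'\overline{A'}}^{\dagger})^{\otimes k}$, while the remaining operations (conjugation by $U_{AA'}$, tensoring with identities, and the other block unitary) act as a fixed CP map that cannot increase the diamond norm; hence the triangle inequality gives a replacement cost of $\eps_A + \eps_{A'}$, and it remains to bound $\norm{\Phi_{\calE''} - \Phi_H}_\diamond$ for $\calE'' = \{ U_{AA'}(U_{A\overline A}\otimes U_{A'\overline{A'}}) U_{AA'}^\dagger \}$ with all three unitaries Haar-random.

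Next I would purify via parallel path-recording. The moment $\Phi$ uses each Haar-random unitary $k$ times forward and $k$ times inverse, in parallel, which is exactly the non-adaptive regime handled by the parallel (partial) path-recording oracles of Definition~\ref{def:parallel oracle}. Applying Lemmas~\ref{lem:closeness-AWD-and-PhiVt} and~\ref{lemma:twirled w to haar} and Theorem~\ref{theorem:haar-cho-strong} to each of the three Haar-random unitaries, and twirling the $\xi$-qubit overlap registers $A$ and $A'$ by auxiliary exact unitary $2$-designs, rewrites both $\Phi_{\calE''}$ and $\Phi_H$ as partial traces of explicit path-recording states. Since the full dimension is at least $2^\xi$, the target $\Phi_H$ becomes $\Tr_\gsL \ketbra*{\calT_k^{V}}$ up to a subleading correction, while each of the three substitutions on the $\calE''$ side costs $\mathcal{O}(k(k+1)/2^{\xi/8})$, which accounts for the leading $108\,k(k+1)/2^{\xi/8}$ term.

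Finally I would glue through the conjugation. On the $\calE''$ side the composed purified state carries a relation register for each block unitary (over $A\overline A$ and $A'\overline{A'}$) and a single shared relation register over $AA'$ for the conjugating unitary, populated by both its forward application $U_{AA'}$ and its inverse application $U_{AA'}^\dagger$. The crux is to recognize this composed state as the image under $\mathsf{Expand}^k_{\gsL \to \gsL_0 \gsR_0 \gsL_1 \gsL_2}$ (Definition~\ref{def:expand map}, with $S_1 = A$, $S_2 = A'$, $S_1' = A\overline A$, $S_2' = A'\overline{A'}$) of the target state $\ketbra*{\calT_k^{V}}$ on the full system, after inserting the projector onto the locally-distinct subspace on $AA'$ (Definition~\ref{def:restrict local distinct}, with only $\mathcal{O}(1)$ patches). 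Composing the three oracles and tracking which computational-basis indices get identified on $A$ and $A'$, the fresh indices created successively by the forward $U_{AA'}$, by the two block unitaries, and by the inverse $U_{AA'}^\dagger$ reproduce exactly the intermediate variables $w^1 w^2$ and $z^1 z^2$ threaded through $\mathsf{Expand}$, and local distinctness makes the normalizations agree. The restriction to the locally-distinct subspace is absorbed by the gentle-measurement lemmas (Lemmas~\ref{lemma:gentle measurement} and~\ref{lem:seq-gentleM-pure}) and Fact~\ref{lemma:trace to lp 2}, producing the $2k\sqrt 2/2^{\xi/2}$ and $k\sqrt 2/2^\xi$ terms, and the mismatch between injective and bijective relations on $AA'$ produces the $18\,k(k+1)/2^{\xi/4}$ term; since $\mathsf{Expand}$ is an isometry on this subspace it preserves the reduced state, so the glued $\calE''$ side equals $\Tr_\gsL \ketbra*{\calT_k^{V}}$ up to these errors. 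Collecting the three steps by the triangle inequality and converting trace distance of purifications back to diamond norm (monotonicity under partial trace) gives the stated bound.

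The hard part will be the last step. In earlier gluing lemmas each unitary appears once and the relation registers simply concatenate, but here the conjugating unitary appears together with its inverse, so its shared relation register is built up by two a priori incompatible processes --- a forward application feeding the block unitaries and an inverse application reading back from them. Showing that these forward and inverse contributions thread consistently, i.e.\ that the composed state genuinely lies in the image of $\mathsf{Expand}$ and matches the target term by term rather than only approximately, is the delicate combinatorial bookkeeping at the heart of the argument, and is precisely what forces the weaker (non-adaptive) conclusion here compared to the adaptive gluing lemma of Appendix~\ref{app:adaptive}.
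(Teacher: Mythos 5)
Your proposal is correct and follows essentially the same three-step outline as the paper's proof: replace the approximate $k$-designs by Haar using the data-processing inequality for the diamond norm (cost $\eps_A+\eps_{A'}$), replace the three Haar-random unitaries by parallel path-recording oracles via Theorem~\ref{theorem:haar-cho-strong}, and then glue through conjugation by inserting locally-distinct projectors and invoking the $\mathsf{Expand}$ isometry of Definition~\ref{def:expand map} to map the restricted target state onto the restricted composed state exactly, with the projector insertions controlled by gentle measurement. The only cosmetic discrepancies are in your bookkeeping of where the error terms arise: the $18k(k+1)/2^{\xi/4}$ contribution is not an ``injective vs.\ bijective mismatch'' in the gluing step but simply the final de-purification of $\Tr_{\gsL}\ketbra*{\calT^V_k}$ back to the Haar twirl via Theorem~\ref{theorem:haar-cho-strong} on the full $\geq 2\xi$-qubit system, and the paper does not invoke the $2$-design twirling machinery of Lemma~\ref{lem:closeness-AWD-and-PhiVt} here but applies the non-twirled $V^{(k)}$ oracle directly.
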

\proof Consider the ensemble given by
\begin{equation}
    \calE_0' = \{\left.U_{AA'} (U_{A\overline{A}} \otimes U_{A'\overline{A'}}) U_{AA'}^\dagger \, \right\rvert \, U_{A\overline{A}}, U_{A'\overline{A'}} \sim \Haar\},
\end{equation}
where the unitaries $U_{A\overline{A}}, U_{A'\overline{A'}}$ are still conjugated by a random overlapping unitary, but we have instead drawn them from the Haar measure on $A\overline{A}$ and $A'\overline{A'}$ rather than from an approximate $k$-design. By assumption, we have that
\begin{equation}
    {\norm{\Phi_{\calE'} - \Phi_{\calE'_0}}_\diamond \leq \eps_A + \eps_{A'}}. \label{eq:nonadaptive glue design to haar}
\end{equation}
We now apply Theorem~\ref{theorem:haar-cho-strong} in order to construct an approximate purification of the moments of $\calE_0'$. For any input state $\ket{\psi}$ on $nk + m$ qubits, the output is approximated by tracing over the relation state registers of
\begin{equation}
    \ket{\calT^{\mathsf{conj}}_k} = V^{(k)}_{AA'} \left(V_{A\overline{A}} \otimes V_{A'\overline{A'}} \right)^{(k)} (V^\dagger_{AA'})^{(k)} \ket{\psi},
\end{equation}
where we have instantiated multiple path-recording oracles $V_S$ with separate relation state registers corresponding to subsystem $S$. We obtain an exponentially small error in trace distance:
\begin{equation}
    \norm{\Tr_{\gsL_{A\overline{A}} \gsL_{A'\overline{A'}} \gsL_{\overline{A} \overline{A'}} \gsR_{\overline{A} \overline{A'}}} \ketbra*{\calT^{\mathsf{conj}}_k} - \expect_{U \sim \calE_0'} U^{\otimes k} \ketbra*{\psi} (U^\dagger)^{\otimes k}}_1 \leq \frac{108 k(k+1)}{2^{\xi/8}}. \label{eq:nonadapative glue purify ensemble}
\end{equation}
We will consider an intermediate comparison to the approximate purification of the Haar moment $\ket{\calT^V_k}$. To do so, we examine the restriction of $\ket{\calT^{\mathsf{conj}}_k}$ to a particular subspace for which we can construct a partial isometry mapping to the image of $V^{(k)}$. In particular, we insert a \emph{doubly} distinct projection before and after the applications of $V_{A\overline{A}} \otimes V_{A\overline{A'}}$: 
\begin{equation}
    \ket{\calT^{\widetilde{\mathsf{conj}}}_k} = V^{(k)}_{AA'} \left[\Pi_{A}^\mathrm{dist} \otimes \Pi_{A'}^\mathrm{dist} \right] \left(V_{A\overline{A}} \otimes V_{A'\overline{A'}} \right)^{(k)} \left[\Pi_{A}^\mathrm{dist} \otimes \Pi_{A'}^\mathrm{dist} \right] (V^\dagger_{AA'})^{(k)} \ket{\psi},
\end{equation}
Since these projectors are all diagonal in the computational basis, it is clear that the support of $\ket{\calT^{\widetilde{\mathsf{conj}}}_k}$ is contained in the support of $\ket{\calT^{\mathsf{conj}}_k}$. The error incurred by enforcing this restriction is therefore given by the magnitude of the remaining portion not contained in the support of $\ket{\calT^{\widetilde{\mathsf{conj}}}_k}$. 
\begin{equation} \label{eq:nonadaptive glue restrict purification}
\begin{aligned}
    &\, \norm{\Tr_{\gsL_{A\overline{A}} \gsL_{A'\overline{A'}} \gsL_{\overline{A} \overline{A'}} \gsR_{\overline{A} \overline{A'}}} \ketbra*{\calT^{\mathsf{conj}}_k} - \Tr_{\gsL_{A\overline{A}} \gsL_{A'\overline{A'}} \gsL_{\overline{A} \overline{A'}} \gsR_{\overline{A} \overline{A'}}} \ketbra*{\calT^{\widetilde{\mathsf{conj}}}_k}}_1 \\
    \leq &\, \norm{V^{(k)}_{AA'} \left[\mathbbm{1} -  \Pi_{A}^\mathrm{dist} \otimes \Pi_{A'}^\mathrm{dist} \right] \left(V_{A\overline{A}} \otimes V_{A'\overline{A'}} \right)^{(k)} \left[\Pi_{A}^\mathrm{dist} \otimes \Pi_{A'}^\mathrm{dist} \right] (V^\dagger_{AA'})^{(k)} \ket{\psi}}_2 \\ 
    &\, + \norm{V^{(k)}_{AA'} \left[\Pi_{A}^\mathrm{dist} \otimes \Pi_{A'}^\mathrm{dist} \right] \left(V_{A\overline{A}} \otimes V_{A'\overline{A'}} \right)^{(k)} \left[\mathbbm{1} - \Pi_{A}^\mathrm{dist} \otimes \Pi_{A'}^\mathrm{dist} \right] (V^\dagger_{AA'})^{(k)} \ket{\psi}}_2 \\
    = &\, \norm{V^{(k)}_{AA'} \left[\mathbbm{1} - \Pi_{A}^\mathrm{dist} \otimes \Pi_{A'}^\mathrm{dist} \right] \left[\Pi_{A \overline{A}}^\mathrm{dist} \otimes \Pi_{A' \overline{A'}}^\mathrm{dist} \right] \left(V_{A\overline{A}} \otimes V_{A'\overline{A'}} \right)^{(k)} \left[\Pi_{A}^\mathrm{dist} \otimes \Pi_{A'}^\mathrm{dist} \right] (V^\dagger_{AA'})^{(k)} \ket{\psi}}_2 \\ 
    &\, + \norm{V^{(k)}_{AA'} \left[\Pi_{A}^\mathrm{dist} \otimes \Pi_{A'}^\mathrm{dist} \right] \left(V_{A\overline{A}} \otimes V_{A'\overline{A'}} \right)^{(k)} \left[\mathbbm{1} - \Pi_{A}^\mathrm{dist} \otimes \Pi_{A'}^\mathrm{dist} \Pi_{AA'}^\mathrm{dist} \right] (V^\dagger_{AA'})^{(k)} \ket{\psi}}_2 
    \leq \frac{2k\sqrt{2}}{2^{\xi/2}},
\end{aligned}
\end{equation}
where in the second step we have used the fact that the output of each $V_S^{(k)}$ corresponding to the physical subsystem $S$ is distinct on $S$, then applied a standard counting argument for the portion of the distinct subspace which is contained on a local distinct subspace as given in Definition~\ref{def:restrict local distinct}. In addition, we consider the corresponding restriction $\Pi^{\mathrm{dist}}_{AA'}$ on the output of $V^{(k)}_{A\overline{A} A' \overline{A'}}$, and similarly bound its distance from the full output of $\ket{\calT^V_k}$: 
\begin{equation} \label{eq:nonadaptive glue restrict vk}
    \norm{\Tr_{\gsL} \ketbra*{\calT^V_k} - \Tr_{\gsL} \left[ \Pi^{\mathrm{dist}}_{AA'} \ketbra*{\calT^V_k} \right]}_1 \leq \frac{k \sqrt{2}}{2^{\xi}}.
\end{equation}
By inspection, we have that the $\mathsf{Expand}$ isometry maps the restriction $\Pi^{\mathrm{dist}}_{AA'} \ket{\calT^V_k}$ to the output of $\ket{\calT^{\widetilde{\mathsf{conj}}}_k}$:
\begin{equation} 
    \mathsf{Expand}^k_{\gsL \to \gsL_{A\overline{A}} \gsL_{A'\overline{A'}} \gsL_{\overline{A} \overline{A'}} \gsR_{\overline{A} \overline{A'}}} \left[ \Pi^{\mathrm{dist}}_{AA'} \ketbra*{\calT^V_k} \right] = \ket{\calT^{\widetilde{\mathsf{conj}}}_k}.
\end{equation}
Since this map is isometric and acts as a change of basis on the relation state registers, we have
\begin{equation} \label{eq:nonadaptive glue expand}
    \Tr_{\gsL} \left[ \Pi^{\mathrm{dist}}_{AA'} \ket{\calT^V_k} \right] = \Tr_{\gsL_{A\overline{A}} \gsL_{A'\overline{A'}} \gsL_{\overline{A} \overline{A'}} \gsR_{\overline{A} \overline{A'}}} \ketbra*{\calT^{\widetilde{\mathsf{conj}}}_k}.
\end{equation}
Finally, we apply Theorem~\ref{theorem:haar-cho-strong} to obtain
\begin{equation} \label{eq:nonadaptive glue vk to haar}
    \norm{\Tr_{\gsL} \ketbra*{\calT^V_k} - \expect_{U \sim \Haar} U^{\otimes k} \ketbra*{\psi} (U^\dagger)^{\otimes k}}_1 \leq \frac{18k(k+1)}{2^{\xi/4}}.
\end{equation}
We complete the proof by applying the triangle inequality to the results of Equations~\ref{eq:nonadaptive glue design to haar},~\ref{eq:nonadapative glue purify ensemble},~\ref{eq:nonadaptive glue restrict purification},~\ref{eq:nonadaptive glue expand},~\ref{eq:nonadaptive glue restrict vk}, and~\ref{eq:nonadaptive glue vk to haar}, which yields the stated bound. \qed
\\

We now restate the main results of this section, which states that the moments of the nearly-local dynamics are close to those of the Haar ensemble, up to any polynomial order.
\begin{theorem}[Theorem~\ref{thm:generic-designs} in main text] \label{thm:generic-designs-restate}
    Suppose $H$ is drawn from a variant of the random two-layer Hamiltonian ensemble in Theorem~\ref{thm:designs} in which each random diagonal term, $\sum_z J^i_z \dyad{z}_i$, is replaced with any fixed Hamiltonian $H_i$.
    The resulting time-evolution $U = e^{-iHt}$ forms an additive-error $\varepsilon$-approximate unitary $k$-design for any $H_i$ and any time $t$ such that $|\! \Tr(e^{-iH_i t})|^2 = o(\varepsilon/nk^2)$ for all $i$.
\end{theorem}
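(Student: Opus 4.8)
The plan is to reduce the $k$-th moment of $e^{-iHt}$ to that of a Haar-random unitary on all $n$ qubits by the same three-stage telescoping used for Theorem~\ref{thm:designs}, with the one step that exploited the randomness of the diagonal phases now replaced by Proposition~\ref{prop:random H moments}. Since that proposition rests on the \emph{parallel} path-recording oracle, this route yields only the additive-error (non-adaptive) guarantee stated here. As in the proof of Theorem~\ref{thm:designs}, write $e^{-iHt}=V^\dagger\,(\bigotimes_i e^{-iH_it})\,V$ with $V=(\bigotimes_{i\in\text{odd}}U_{i,i+1})(\bigotimes_{i\in\text{even}}U_{i,i+1})$. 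First I would replace every strong $k$-design unitary $U_{i,i+1}$ — both layers, $n/\xi$ of them — by a genuinely Haar-random unitary on the same $2\xi$ qubits; by the approximate-design property this costs at most $\mathcal{O}(\varepsilon)$ in diamond norm in total, which is why the design error must be a sufficiently small multiple of $\varepsilon/n$.

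Stage 1: after this replacement, the innermost conjugation factorizes over disjoint patch-pairs, reading $U_{i,i+1}^\dagger\,(e^{-iH_it}\otimes e^{-iH_{i+1}t})\,U_{i,i+1}$ for each odd $i$, with $U_{i,i+1}$ now Haar on $2\xi$ qubits. This is precisely a randomized Hamiltonian-dynamics ensemble (Definition~\ref{def:randomdynamics}) with deterministic spectrum and $D=e^{-iH_it}\otimes e^{-iH_{i+1}t}$, so I would invoke Proposition~\ref{prop:random H moments} with $N=2^{2\xi}$ and then Theorem~\ref{theorem:haar-cho-strong} to pass from $\widetilde V^{(k)}$ to the genuine Haar moment — conjugating $V^{(k)}$ by the $2$-designs built into $\widetilde V^{(k)}$ leaves the Haar moment invariant, since the latter is a combination of permutation operators. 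The per-pair error is $\mathcal{O}(k^2/2^{\xi/4})+\mathcal{O}\big(k^2\,2^{-4\xi}\,|\Tr(e^{-iH_it})|^2\,|\Tr(e^{-iH_{i+1}t})|^2\big)$; applying the hypothesis $|\Tr(e^{-iH_jt})|^2=o(\varepsilon/nk^2)$ to \emph{both} factors, and $\xi=\Omega(\log(nk/\varepsilon))$ for the first term, makes the sum over all $n/2\xi$ pairs $o(\varepsilon)$. What remains is the all-Haar brick-wall-conjugated circuit $(\bigotimes_{i\in\text{even}}U_{i,i+1})^\dagger(\bigotimes_{i\in\text{odd}}U'_{i,i+1})(\bigotimes_{i\in\text{even}}U_{i,i+1})$, with each unitary Haar-random on its $2\xi$-qubit support.

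Stages 2--3: this is exactly the configuration already handled for Theorem~\ref{thm:designs}. I would iterate the conjugation gluing Lemma~\ref{lemma:glue patch moments} $n/2\xi$ times, from left to right, each time gluing the next odd-layer $U'_{i,i+1}$ together with its bridging even-layer unitary into the Haar-random unitary already assembled on the patches to its left; with $\varepsilon_A=\varepsilon_{A'}=0$ each step costs $\mathcal{O}(\poly(k)\,2^{-\Omega(\xi)})$, so the total is again $o(\varepsilon)$ as soon as $\xi=\Omega(\log(nk/\varepsilon))$, leaving a single Haar-random unitary on all $n$ qubits. Collecting the bounds with the triangle inequality and subadditivity of the diamond norm under composition then yields $\norm{\Phi_{\calE}-\Phi_H}_\diamond\le\varepsilon$.

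The argument is largely bookkeeping; the one genuinely new estimate is the Stage-1 application of Proposition~\ref{prop:random H moments}, and two points there deserve care. First, its error carries a $2^{-\xi/4}$ term rather than the $2^{-\xi/2}$ one gets from pure gluing, so one must check that the patch size $\xi=\Omega(\log(nk/\varepsilon))$ inherited from Theorem~\ref{thm:designs} still suffices, perhaps with a larger implied constant. Second, the obstruction in Proposition~\ref{prop:random H moments} is $|\Tr(D)|^2$, which for a two-patch block is the \emph{product} $|\Tr(e^{-iH_it})|^2\,|\Tr(e^{-iH_{i+1}t})|^2$, so one needs the trace-smallness hypothesis to hold on every patch simultaneously — which is exactly what the theorem assumes.
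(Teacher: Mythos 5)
Your proposal is correct and follows essentially the same route as the paper's own proof: replace the small $k$-designs by Haar unitaries, invoke Proposition~\ref{prop:random H moments} on each $2\xi$-qubit patch pair with $D = e^{-iH_it}\otimes e^{-iH_{i+1}t}$ (the paper makes the preparatory observation that $H_i$ may be diagonalized WLOG because the patch-pair conjugation is Haar-invariant, a point you gloss over but which is harmless), and then iterate the non-adaptive gluing Lemma~\ref{lemma:glue patch moments}. Your error bookkeeping and the caveat about the $2^{-\xi/4}$ term forcing a slightly larger constant in $\xi$ are both accurate; the paper simply absorbs this into ``$\xi = \poly\log n$.''
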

\begin{theorem}[Theorem~\ref{thm:generic-PRUs} in main text] \label{thm:generic-PRUs-restate}
    Suppose $H$ is drawn from a variant of the random two-layer Hamiltonian ensemble in Theorem~\ref{thm:designs} in which $\sum_z J^i_z \dyad{z}_i$ is replaced with any fixed Hamiltonian $H_i$.
    The resulting time-evolution $U = e^{-iHt}$ forms a PRU with non-adaptive security for any $H_i$ and any time $t$ such that $|\! \Tr(e^{-iH_i t})|^2 = o(1/\poly n)$ for all $i$.
\end{theorem}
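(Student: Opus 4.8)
The plan is to \emph{reduce} Theorem~\ref{thm:generic-PRUs-restate} to its information-theoretic counterpart, Theorem~\ref{thm:generic-designs-restate}, by a routine computational hybrid argument, followed by a short calculation that upgrades the resulting $1/\poly n$ additive error to a genuinely negligible one. In the ensemble of Theorem~\ref{thm:generic-PRUs-restate} the only randomness sits in the at most $n/\xi$ small blocks $U_{i,i+1}$ on $2\xi$ qubits --- each an independent strong PRU with security against $\poly n$-time adversaries --- since replacing the diagonal term by the fixed $H_i$ eliminates all PRF randomness. Fix a non-adaptive $\poly n$-time distinguisher that makes $k=\poly n$ parallel forward queries to $U=e^{-iHt}$; we bound its advantage in telling this ensemble from $\Haar$ by a negligible function. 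Throughout we use the decomposition $e^{-iHt}=W^\dagger\big(\bigotimes_i e^{-iH_i t}\big)W$, where $W$ is the two-layer brickwork of the blocks $U_{i,i+1}$.

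\textbf{Step 1 (replace the PRU blocks by efficient approximate designs).} Pick a negligible target error $\varepsilon$, to be fixed in Step 2, and, one block at a time, substitute each strong PRU $U_{i,i+1}$ by an independent strong $\tfrac{\varepsilon}{n}$-approximate unitary $2k$-design on $2\xi$ qubits (in the measurable-error sense, so it remains indistinguishable from $\Haar$ even under two-sided adaptive $2k$-query experiments, and which admits $\poly n$-gate circuits, so every hybrid ensemble stays efficiently implementable). The point that requires care is that a single application of $e^{-iHt}$ calls each block once forward (inside $W$) and once as its inverse (inside $W^\dagger$), so over the $k$ parallel outer queries each block is queried $2k$ times with access to both the block \emph{and} its inverse; consequently a single swap is justified by the \emph{strong} PRU guarantee (security under queries to $U_{i,i+1}$ \emph{and} $U_{i,i+1}^\dagger$), routing the rest of the experiment and the remaining blocks into the block-level adversary, at cost $\mathrm{negl}(n)+\varepsilon/n$ per swap (the negligible piece from strong PRU security, the $\varepsilon/n$ piece from the measurable error of the strong approximate design). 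Over the $\le n/\xi$ swaps this changes the acceptance probability by at most $\mathrm{negl}(n)+\varepsilon$.

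\textbf{Step 2 (invoke the design theorem and choose $\varepsilon$).} After Step 1 the ensemble is exactly the one covered by Theorem~\ref{thm:generic-designs-restate}, whose proof supplies the real analytic content: Proposition~\ref{prop:random H moments} absorbs the conjugation $U_{i,i+1}^\dagger(\cdots)U_{i,i+1}$ of the fixed spectrum $e^{-iH_i t}$ --- exactly where the hypothesis on $\abs{\Tr(e^{-iH_i t})}^2$ enters --- and Lemma~\ref{lemma:glue patch moments}, applied $O(n/\xi)$ times, stitches the patches into a global $\Haar$ unitary. By hypothesis $\abs{\Tr(e^{-iH_i t})}^2 = o(1/\poly n)$, i.e.\ it is below $n^{-c}$ for every constant $c$; set $\varepsilon \coloneq \big(\max_i \abs{\Tr(e^{-iH_i t})}^2 \cdot nk^2\big)^{1/2}$, which is negligible (the square root of negligible times $\poly n$), is $o(1)$, and obeys $\abs{\Tr(e^{-iH_i t})}^2 = \varepsilon^2/(nk^2) = o\big(\varepsilon/(nk^2)\big)$, so Theorem~\ref{thm:generic-designs-restate} certifies that the post-Step-1 ensemble is an additive-error $\varepsilon$-approximate unitary $k$-design and hence $\varepsilon$-indistinguishable from $\Haar$ under any $k$-parallel-query experiment. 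Combining with the $\mathrm{negl}(n)+\varepsilon$ from Step 1, the original distinguisher has negligible advantage; since $k=\poly n$ and the distinguisher were arbitrary --- and since $e^{-iHt}$ is efficiently implementable whenever the fixed single-patch evolutions $e^{-iH_i t}$ are --- the ensemble is a PRU with non-adaptive security. (Equivalently: Steps 1--2 show the ensemble is, for every $k=\poly n$, computationally indistinguishable from a negligible-error additive $k$-design, so the earlier ``sufficient condition for computational indistinguishability'' applies in its non-adaptive form.)

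\textbf{Expected main obstacle.} Almost all the hard analysis is already packaged inside Theorem~\ref{thm:generic-designs-restate} (via Proposition~\ref{prop:random H moments} and Lemma~\ref{lemma:glue patch moments}); the real care here is in the reduction. One must notice that the Hermitizing structure $W^\dagger(\cdots)W$ forces every small-block oracle to be exercised both forward and backward, so the computational step has to be powered by the \emph{strong} PRU assumption rather than the ordinary one, and by strong \emph{approximate} (rather than exact) designs so that the hybrids remain efficient and the PRU guarantee actually applies; and one must check that ``$\abs{\Tr(e^{-iH_i t})}^2 = o(1/\poly n)$'', unlike the merely polynomially small bound that suffices for an additive $1/\poly n$-design, is strong enough to drive the total error below every inverse polynomial --- which is what the choice of $\varepsilon$ in Step 2 arranges.
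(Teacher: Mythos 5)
Your proposal is correct and follows essentially the same route as the paper's (joint) proof of Theorems~\ref{thm:generic-designs-restate} and~\ref{thm:generic-PRUs-restate}, which invokes Proposition~\ref{prop:random H moments} per patch followed by $O(n/\xi)$ applications of Lemma~\ref{lemma:glue patch moments}. What you add over the paper's very terse proof is the explicit computational hybrid to swap out the strong PRU blocks --- correctly observing that each block $U_{i,i+1}$ is exercised both forward (inside $W$) and backward (inside $W^\dagger$), so the reduction must lean on the \emph{strong} PRU guarantee, and that passing through efficiently implementable strong approximate designs keeps each hybrid simulable --- together with the bookkeeping choice of a negligible $\varepsilon$ so that the $o(\varepsilon/nk^2)$ hypothesis of Theorem~\ref{thm:generic-designs-restate} is satisfied. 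These are details the paper handles implicitly (the same replacement is spelled out in the proof of Theorems~\ref{thm:designs} and~\ref{thm:PRUs}), so your argument is a more carefully modularized rendering of the same proof rather than a different one.
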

\begin{proof}[Proof of Theorems~\ref{thm:generic-designs-restate} and~\ref{thm:generic-PRUs-restate}]
    We first observe that $H_i$ admits a fixed diagonalization $H_i = U_0 D_0 U_0^\dagger$. By assumption, the Hamiltonian ensemble is invariant under change of basis on each patch. Both statements then follow immediately from substituting the result of Proposition~\ref{prop:random H moments} into each patch for the ensemble of interest, then applying the result of Lemma~\ref{lemma:glue patch moments} to each instance of conjugation by overlapping unitaries, up to $n/\xi$. Setting $\xi = \poly\log(n)$ yields the stated result.
\end{proof}

\subsubsection{Designs via local spectral distribution}

While our bound above holds for $D$ sampled from arbitrary spectral distributions, we are often interested in systems whose distributions have particular properties, such as being generated by an extensive number of local degrees of freedom. We remark that it is possible to obtain superpolynomial-order designs up to superpolynomially small order even for distributions that are generated by single-qubit operators.
\begin{fact} \label{fact: local phases exp}
    Suppose $D$ is sampled from a distribution which is constructed by taking a product of an extensive number of i.i.d. local random phases $D_i \sim \calD_0$. Then 
    \begin{equation}
        \expect_{D \sim \calD} \left[ \abs{\Tr D}^2 \right] = \left(\expect_{D' \sim \calD_0} \abs{\Tr D'}^2 \right)^{n/n_0},
    \end{equation}
    where $n_0$ is the size of the physical subsystem corresponding to $\calD_0$.
\end{fact}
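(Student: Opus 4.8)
The plan is to reduce everything to multiplicativity of the trace across tensor factors together with independence of the blocks. Write $D = \bigotimes_{j=1}^{n/n_0} D_j$, where $D_1,\dots,D_{n/n_0}$ are i.i.d.\ copies of $\calD_0$, each supported on a disjoint patch of $n_0$ qubits tiling the $n$-qubit system. First I would record the elementary identity
\begin{equation}
    \Tr D = \Tr\!\Big(\bigotimes_{j} D_j\Big) = \prod_{j=1}^{n/n_0} \Tr D_j ,
\end{equation}
valid for arbitrary operators on a tensor product of Hilbert spaces (the hypothesis that the $D_j$ are ``phases'', i.e.\ diagonal unitaries, is not needed for this step, though it does ensure each $\abs{\Tr D_j}$ is finite and at most $2^{n_0}$). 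Taking moduli squared gives $\abs{\Tr D}^2 = \prod_j \abs{\Tr D_j}^2$.

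Next I would take the expectation over $D\sim\calD$. Because the blocks $D_j$ are mutually independent, the expectation of the product factorizes, and because they are identically distributed each factor is the same number:
\begin{equation}
    \expect_{D \sim \calD}\big[\abs{\Tr D}^2\big] \;=\; \prod_{j=1}^{n/n_0} \expect_{D_j \sim \calD_0}\big[\abs{\Tr D_j}^2\big] \;=\; \Big(\expect_{D' \sim \calD_0}\big[\abs{\Tr D'}^2\big]\Big)^{n/n_0},
\end{equation}
which is exactly the claimed formula.

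The only point needing a word of care is purely bookkeeping: reading ``an extensive number of local phases'' as precisely $n/n_0$ disjoint blocks (if $n_0 \nmid n$, padding the last partial block changes the exponent by at most an additive constant, which is irrelevant for the $\poly\log n$ scalings invoked downstream), and keeping straight which subsystem carries which copy. I do not anticipate any genuine obstacle here; the statement is a one-line consequence of $\Tr(A\otimes B) = \Tr A \cdot \Tr B$ and independence, and its purpose is merely to let one conclude that the error term $\expect_{D}\abs{\Tr D}^2 / 2^{2n}$ appearing in Proposition~\ref{prop:random H moments} equals $\big(\expect_{D'\sim\calD_0}\abs{\Tr D'}^2 / 2^{2n_0}\big)^{n/n_0}$, hence decays exponentially in $n/n_0$ whenever the single-block quantity $\expect_{D'\sim\calD_0}\abs{\Tr D'}^2$ is bounded away from $2^{2n_0}$ by a constant factor.
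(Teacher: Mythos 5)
Your proof is correct: the identity $\Tr(A\otimes B)=\Tr A\cdot\Tr B$ combined with independence and identical distribution of the blocks gives the claimed factorization immediately. The paper states this as a Fact without proof, and the argument you give is exactly the elementary one the authors have in mind.
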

In particular, this implies that the bound of Proposition~\ref{prop:random H moments} is exponentially small in $n$ for models in which the energy spectrum is generated by single-qubit operators with eigenvalues $E_0$ and $E_1$, and the evolution time is concentrated around
\begin{equation} \label{eq:constant time scale}
    t_0 = \pi\hbar/\Delta E = \pi\hbar/\abs{E_0 - E_1}.
\end{equation}
\begin{corollary}[Statistics of random eigenbasis dynamics are close to Haar-random] \label{cor:random moments}
    For any $k = 2^{o(n)}$, the random eigenbasis Hamiltonian dynamics with phases generated by single-qubit operators with constant energy gap $\Delta E$ forms an approximate $k$-design at constant time scales $t \sim G(t_0,t_0/8)$, up to additive error
    \begin{equation}
        {\norm{\Phi_{\calE_\mathsf{inv}} - \Phi_H}_\diamond} \leq \frac{162k(k+1)}{N^{1/8}} + \frac{12k^2}{N} + \frac{8k^2}{c^{n}} + O\left(\frac{k^2}{N^{2}}\right),
    \end{equation}
    where $t_0$ is given by Equation~\ref{eq:constant time scale}, $G(\mu, \sigma)$ denotes the normal distribution, and $c$ is a positive constant greater than one.
\end{corollary}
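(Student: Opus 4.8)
The plan is to read this corollary off directly from Proposition~\ref{prop:random H moments} by specializing to a spectral distribution built from i.i.d.\ single-qubit phases. First I would note that, since $e^{-iHt} = U\,e^{-i\Lambda t}\,U^\dagger = U D U^\dagger$ with $D = e^{-i\Lambda t}$ sampled from $\calD$, the object $\Phi_{\calE_{\mathsf{inv}}}$ on the left-hand side is exactly the quantity bounded in Proposition~\ref{prop:random H moments}. Combining that proposition with Theorem~\ref{theorem:haar-cho-strong} (which places the intermediate path-recording state $\Tr_{\gsL}\ketbra*{\calT^{\widetilde V}_k}$ within $18k(k+1)/N^{1/8}$ of the Haar $k$-th moment) and using that the diamond norm is attained on pure inputs by convexity of the trace norm, the triangle inequality gives
\begin{equation}
\norm{\Phi_{\calE_{\mathsf{inv}}} - \Phi_H}_\diamond \leq \frac{162k(k+1)}{N^{1/8}} + \frac{12k^2}{N} + \frac{8k^2}{2^{2n}}\,\E_{D\sim\calD}\abs{\Tr D}^2 + O\!\left(\frac{k^2}{N^2}\right).
\end{equation}
So the entire task reduces to showing that the spectral term $\E_{D\sim\calD}\abs{\Tr D}^2$ is at most $c^{-n}$ for some constant $c > 1$.

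For that I would invoke Fact~\ref{fact: local phases exp} with $n_0 = 1$: since the model is set up so that $D$ factorizes as a tensor product of $n$ i.i.d.\ single-qubit phases $D' = \mathrm{diag}(e^{-iE_0 t/\hbar}, e^{-iE_1 t/\hbar})$ with $t\sim G(t_0,t_0/8)$, we get $\E_{D\sim\calD}\abs{\Tr D}^2 = \big(\E_{D'}\abs{\Tr D'}^2\big)^n$. Here $\abs{\Tr D'}^2 = 2 + 2\cos(\Delta E\,t/\hbar)$, and with $t_0 = \pi\hbar/\Delta E$ from Eq.~\eqref{eq:constant time scale} the phase $\Delta E\,t/\hbar$ is Gaussian with mean $\pi$ and standard deviation $\pi/8$, so the Gaussian characteristic function yields $\E_t\cos(\Delta E\,t/\hbar) = \mathrm{Re}\big[e^{i\pi - \pi^2/128}\big] = -e^{-\pi^2/128}$ and hence $\E_{D'}\abs{\Tr D'}^2 = 2 - 2e^{-\pi^2/128} \in (0,2)$. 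Setting $c = 4/(2 - 2e^{-\pi^2/128}) = 2/(1 - e^{-\pi^2/128}) > 1$, the spectral term becomes $8k^2\,(\E_{D'}\abs{\Tr D'}^2/4)^n = 8k^2/c^n$; plugging this back into the displayed inequality produces the stated bound, and $N = 2^n$ together with $k = 2^{o(n)}$ makes every term $o(1)$, so $\calE_{\mathsf{inv}}$ is an approximate $k$-design.

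The step that needs care — the main obstacle — is the \emph{independence} hypothesis of Fact~\ref{fact: local phases exp}: the argument only works if $D$ genuinely factorizes into independent single-qubit phases (e.g.\ each single-qubit term evolved for its own independent Gaussian time). If instead all $n$ qubits shared a single Gaussian evolution time, then $\abs{\Tr D}^2 = (2+2\cos(\Delta E\,t/\hbar))^n$, and because the integrand vanishes quadratically at its minimum $t = t_0$, one finds $\E_t\abs{\Tr D}^2 \sim (2n-1)!!\,(\pi/8)^{2n}$, which grows super-exponentially in $n$ and destroys the bound — so the model must be specified carefully enough to guarantee the factorization. I would also remark that the specific choices $t_0 = \pi\hbar/\Delta E$ and $\sigma = t_0/8$ are not needed merely for $c > 1$ (that holds for any nondegenerate Gaussian time distribution, since $\abs{\Tr D'}^2 = 4$ only on a measure-zero set of times); they simply pin down a concrete constant. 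Everything else — the pure-state reduction, the triangle-inequality bookkeeping, and the $O(k^2/N^2)$ tail — is routine given the already-cited lemmas.
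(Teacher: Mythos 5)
Your proposal is essentially the paper's own proof: both bound $\norm{\Phi_{\calE_{\mathsf{inv}}} - \Phi_H}_\diamond$ by chaining Proposition~\ref{prop:random H moments} to Theorem~\ref{theorem:haar-cho-strong} via the triangle inequality, and both collapse the spectral term $\E_D\abs{\Tr D}^2$ to $c^{-n}$ by invoking Fact~\ref{fact: local phases exp}. Your explicit evaluation of the single-qubit integral --- $\E_{D'}\abs{\Tr D'}^2 = 2 - 2e^{-\pi^2/128}$ via the Gaussian characteristic function, giving $c = 2/(1-e^{-\pi^2/128})$ --- is just the closed form of the paper's displayed integral $\int_0^{2\pi} (2+2\cos\theta)\,dG(\theta;\pi,\pi/8) < 1$, and the two agree. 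The only step you glide past that the paper names explicitly is the use of the two-sided unitary invariance (Claim~\ref{claim:two-sided-invariance}) to transfer Theorem~\ref{theorem:haar-cho-strong}'s bound from $\ketbra*{\calT^V_k}$ to the twirled state $\ketbra*{\calT^{\widetilde V}_k}$, but this is a bookkeeping matter and your error accounting ($144 + 18 = 162$) is consistent with it.

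The more substantive contribution of your proposal is the caveat you raise about independence, and you are right to flag it. Fact~\ref{fact: local phases exp} requires $D$ to factor as a tensor product of \emph{i.i.d.} single-qubit phases, and the paper's proof silently applies the identity $\E_D\abs{\Tr D}^2 = \big(\E_{D'}\abs{\Tr D'}^2\big)^n$. Read literally, however, the corollary's hypothesis --- a deterministic energy gap $\Delta E$ common to all qubits and a single evolution time $t\sim G(t_0,t_0/8)$ --- produces \emph{perfectly correlated} phases $D = (e^{-iH_0 t})^{\otimes n}$, for which $\E_t\abs{\Tr D}^2 = \E_t\big[(2+2\cos(\Delta E t/\hbar))^n\big] = 4^n\,\E_\phi\big[\sin^{2n}(\phi/2)\big]$ with $\phi\sim G(0,\pi/8)$. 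By Jensen this is $\geq \big(\E_t[2+2\cos(\Delta E t/\hbar)]\big)^n$, and a Laplace-method estimate shows $\E_\phi\big[\sin^{2n}(\phi/2)\big]$ saturates at roughly $e^{-\pi^2/2\sigma^2} = e^{-32}$ rather than decaying exponentially in $n$, so the resulting term $8k^2\,\E_\phi\big[\sin^{2n}(\phi/2)\big]$ does not vanish as $n\to\infty$. So the corollary is only correct under the reading that each qubit has its own independent Gaussian phase (e.g.\ an independent evolution time or an independent random gap); the displayed integral in the paper's proof makes clear that this is the intended model, but the corollary's statement as written is ambiguous on this point, and your observation that the result is sensitive to it is a genuine and useful one.
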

\proof We bound the trace moments via Fact~\ref{fact: local phases exp} 
\begin{equation}
    \begin{aligned}
        \expect_{D \sim \calD} \left[ \abs{\Tr D}^2 \right] =&\, 
        \left[\int_{0}^{2\pi} d\theta \left[\left(\cos(x) + 1\right)^2 + \sin^2(\theta) \right] \left[ \frac{1}{\sqrt{2\pi (\pi/8)^2}} e^{-(\theta - \pi)^2/2(\pi/8)^2} \right] \right]^n < 1. \\
    \end{aligned}
\end{equation}
The result then follows immediately from an application of the triangle inequality to the statements of Proposition~\ref{prop:random H moments} and Theorem~\ref{theorem:haar-cho-strong}, and using the fact that $V$ satisfies an approximate two-sided unitary invariance property, as described in Claim~\ref{claim:two-sided-invariance}. \qed
\begin{proposition}[Statistics of nearly-local dynamics with local spectrum are close to Haar-random] \label{thm:quasi-local moments formal}
    When $\calD$ is generated by single-qubit operators with constant energy gap $\Delta E$ and the time scale is taken to be $t \sim G(t_0,t_0/8)$, the ensemble of nearly-local Hamiltonians is an approximate $k$-design up to additive error
    \begin{equation}
        \epsilon = \frac{270nk(k+1)}{2^{\xi/8}\xi} + \frac{8nk^2}{c^{\xi}\xi} + O\left(\frac{k^2}{2^{\xi/4} \xi}\right),
    \end{equation}
    where $c$ is a positive constant greater than one, and we have left out subleading terms in $k/{2^\xi}$. In particular, when $\xi = O(\poly\log n)$, this ensemble forms an approximate design for all $k = \poly(n)$.
\end{proposition}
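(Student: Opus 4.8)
The plan is to obtain Proposition~\ref{thm:quasi-local moments formal} by assembling three ingredients already in hand: the closeness of single-patch randomized dynamics to a Haar-random unitary (Corollary~\ref{cor:random moments}), the conjugation-gluing estimate (Lemma~\ref{lemma:glue patch moments}), and subadditivity of the diamond norm under tensor products and under conjugation by fixed unitaries. The structure mirrors the proof of Theorems~\ref{thm:generic-designs-restate} and~\ref{thm:generic-PRUs-restate}; the only change is that one invokes the single-qubit-spectrum statement, Corollary~\ref{cor:random moments}, whose per-patch error carries the denominators $2^{\xi/8}$ and $c^{\xi}$ rather than a generic $|\!\Tr(e^{-iH_i t})|^2$ dependence.

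First I would write the time-evolution as $e^{-iHt} = L_2^{\dagger} L_1^{\dagger}\,(\bigotimes_i D_i)\, L_1 L_2$, with $L_1 = \bigotimes_{i\text{ odd}} U_{i,i+1}$, $L_2 = \bigotimes_{i\text{ even}} U_{i,i+1}$, and $D_i = e^{-itH_i}$ the diagonal phase on patch $i$ generated by $\xi$ i.i.d.\ single-qubit terms of gap $\Delta E$, with $t \sim G(t_0,t_0/8)$ and $t_0$ as in Equation~\ref{eq:constant time scale}. Since each $U_{i,i+1}$ is Haar on $2\xi$ qubits, the ensemble is invariant under an arbitrary unitary change of basis on any individual patch, exactly as in the proof of Theorem~\ref{thm:generic-designs-restate}: one writes $D_i = V_i (V_i^{\dagger} D_i V_i) V_i^{\dagger}$ with $V_i$ a patch-local Haar unitary and absorbs $V_i$ into the adjacent odd-layer unitary, which stays Haar on $2\xi$ qubits. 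Each factor $V_i^{\dagger} D_i V_i$ is then a copy of the randomized single-patch dynamics $\calE_{\mathsf{inv}}$ on $\xi$ qubits with single-qubit spectrum, so Corollary~\ref{cor:random moments} with $N = 2^{\xi}$ (legitimate since $k = 2^{o(\xi)}$ when $\xi = \poly\log n$) gives that it is $\delta_C$-close to a Haar-random unitary on patch $i$ in diamond norm, where $\delta_C = \frac{162k(k+1)}{2^{\xi/8}} + \frac{12k^2}{2^{\xi}} + \frac{8k^2}{c^{\xi}} + O(k^2/2^{2\xi})$; the exponentially small $c^{\xi}$ term is precisely where $\mathbb{E}[|\!\Tr D_i|^2] < 1$ enters, which is Fact~\ref{fact: local phases exp} evaluated at $t_0$. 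Replacing all $n/\xi$ such factors and summing the errors --- they occupy disjoint tensor factors and sit inside conjugations, under which the diamond-norm distance is subadditive --- costs $\frac{n}{\xi}\,\delta_C$ in total.

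Second I would glue the resulting brickwork of patch-local Haar unitaries into one global Haar-random unitary by iterating Lemma~\ref{lemma:glue patch moments} in two passes. In the first pass, $n/2\xi$ applications glue the two $\xi$-patches inside each odd-layer block $(U'_{i,i+1})^{\dagger}(V_i \otimes V_{i+1})U'_{i,i+1}$ into a Haar-random unitary on $2\xi$ qubits, taking $A,A'$ to be the two patches and $\bar A = \bar{A'} = \varnothing$; since both operands are exact Haar, $\eps_A = \eps_{A'} = 0$ and each application costs only $\delta_{\mathrm{glue}} = \frac{108k(k+1)}{2^{\xi/8}} + \frac{18k(k+1)}{2^{\xi/4}} + \frac{2k\sqrt 2}{2^{\xi/2}} + \frac{k\sqrt 2}{2^{\xi}}$. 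In the second pass, $n/2\xi - 1$ applications sweep left to right along the even layer, each merging one more $2\xi$-qubit Haar block into the accumulated Haar-random unitary on all patches to its left --- taking $A$ to be the rightmost patch of the accumulated region, $A'$ the first patch of the new block, $U_{AA'}$ the corresponding even-layer unitary, and $\bar A, \bar{A'}$ the remaining qubits on each side; here the new operand is exact Haar, so $\eps_{A'} = 0$ and $\eps_A$ equals the accumulated error, making it grow additively by $\delta_{\mathrm{glue}}$ per step. Adding the $\approx n/\xi$ gluing errors to the Step-1 error by the triangle inequality, and collecting the two leading $2^{\xi/8}$ terms ($162 + 108 = 270$), yields $\epsilon \le \frac{n}{\xi}(\delta_C + \delta_{\mathrm{glue}}) + O(\cdot) = \frac{270nk(k+1)}{2^{\xi/8}\xi} + \frac{8nk^2}{c^{\xi}\xi} + O(k^2/(2^{\xi/4}\xi))$, the claimed bound; the final assertion that this is a $k$-design for every $k = \poly n$ when $\xi = O(\poly\log n)$ then follows at once since $2^{\xi/8}$ and $c^{\xi}$ are superpolynomial in $n$.

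Since the genuinely technical work --- the purified analysis behind Corollary~\ref{cor:random moments} and the conjugation-gluing estimate of Lemma~\ref{lemma:glue patch moments} --- is already done, the main obstacle here is careful bookkeeping rather than a new idea. The delicate points are: (i) checking that inserting $V_i$ and absorbing it into $U_{i,i+1}$ leaves the $k$-th moment of the ensemble unchanged and the brickwork intact, so that the per-patch Corollary applies verbatim; (ii) ordering the $\approx n/\xi$ gluing steps so that at every step the two operands satisfy the hypotheses of Lemma~\ref{lemma:glue patch moments} ($|A| = |A'| = \xi$, $U_{AA'}$ Haar on $AA'$) and the newly introduced operand is exact Haar, so that the $\eps_A$-accumulation is exactly linear in the number of steps; and (iii) verifying that the diamond-norm errors propagate additively through conjugation by fixed (or independently random) unitaries and through tensoring with untouched factors, which is what makes the triangle inequality output the coefficients $270$ and $8$ together with the $O(k^2/(2^{\xi/4}\xi))$ remainder.
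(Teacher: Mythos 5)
Your proof is correct and takes essentially the same approach as the paper: apply Corollary~\ref{cor:random moments} once per patch to replace each conjugated diagonal $V_i^\dagger D_i V_i$ with a patch-local Haar unitary (at cost $(n/\xi)\cdot 162k(k+1)/2^{\xi/8}$ plus the $8k^2/c^\xi$ term), then invoke Lemma~\ref{lemma:glue patch moments} roughly $n/\xi$ times --- first gluing the odd-layer blocks, then sweeping left to right along the even layer --- to merge everything into a global Haar-random unitary (adding $108k(k+1)/2^{\xi/8}$ per step), which yields the $162+108=270$ coefficient and the remaining terms. The paper's own proof is a one-line invocation of the same two ingredients; your write-up merely supplies the hybrid-argument and Haar-invariance bookkeeping that the paper leaves implicit.
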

\begin{proof}
    This follows from substituting the result of Corollary~\ref{cor:random moments} into each patch for the ensemble of interest, then applying the result of Lemma~\ref{lemma:glue patch moments} to each instance of conjugation by overlapping unitaries, up to $n/\xi$. Setting $\xi = \poly\log(n)$ yields the stated result.
\end{proof}

\subsubsection{A simple alternative construction and proof}

We conclude our discussion of the non-adaptive setting by introducing an alternative random Hamiltonian time-evolution ensemble and providing an especially short proof of its indistinguishability from Haar-random in any non-adaptive quantum experiment.
We consider the random unitary ensemble,
\begin{equation}
    \mathcal{E}_{\mathbf{F}} = \left( \otimes_{i \in \text{even}} U_{i,i+1} \right)^\dagger  \cdot \left( \otimes_{i \in \text{odd}} F_{i,i+1} \right) \cdot \left( \otimes_{i \in \text{even}} U_{i,i+1} \right),
\end{equation}
where each $U_{i,i+1}$ is a strong PRU on $2\xi = \omega(\log n)$ qubits with security against any $\poly n$-time quantum adversary, and each $F_{i,i+1}$ is a PRF on $2\xi = \omega(\log n)$ qubits with security against any $\poly n$-time quantum adversary. While the ensemble of Hamiltonians and evolution time scale is not made explicit in the definition of $\calE_{\mathbf{F}}$, for any distribution over evolution time which can be efficiently specified, it is also possible to extract the corresponding ensemble of Hamiltonians. Moreover, fixing the time scale to be concentrated around some constant $t_0 = O(1)$ yields a well-defined ensemble of Hamiltonians whose time dynamics can be efficiently computed at any time scale.

We remark that in this ensemble, the basis transformation is itself non-entangling across several cuts of the system. However, it is still possible to obtain approximate designs up to superpolynomial order and error due to the action of the pseudorandom phases:
\begin{theorem}
    The time evolution of the ensemble of Hamiltonians described via $\mathcal{E}_{\mathbf{F}}$ with non-entangling basis transformations and spectral distribution which generates pseudorandom phases on nearly-local patches forms an approximate $k$-design up to additive error
    \begin{equation}
        \epsilon = \frac{36nk(k+1)}{2^{\xi/8} \xi} + \frac{4nk}{2^{\xi/2} \xi} + O\left(\frac{k^2}{N^{1/8}}\right).
    \end{equation}
\end{theorem}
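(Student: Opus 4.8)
\noindent The plan is to collapse $\calE_{\mathbf{F}} = V^\dagger D V$, with $V = \otimes_{i \in \text{even}} U_{i,i+1}$ the conjugating layer and $D = \otimes_{i \in \text{odd}} F_{i,i+1}$ the layer of pseudorandom phases on the complementary pairing of patches, to a Haar-random unitary on all $n$ qubits by iterating the two ingredients of the non-adaptive argument---Proposition~\ref{prop:random H moments} and the conjugation gluing Lemma~\ref{lemma:glue patch moments}---exactly as in the proofs of Theorems~\ref{thm:generic-designs-restate} and~\ref{thm:generic-PRUs-restate}, but now for the simpler single-conjugating-layer geometry. Since querying $\calE_{\mathbf{F}}$ in parallel forces queries to both $V$ and $V^\dagger$, the \emph{strong} $k$-design property of the $U_{i,i+1}$'s is exactly what is needed; I would first pass to the information-theoretic ensemble by replacing each strong PRU $U_{i,i+1}$ with a Haar-random unitary on its $2\xi$-qubit pair and each PRF $F_{i,i+1}$ with an exact $k$-wise independent diagonal phase (exact for the $k$-th moment), the pseudorandom case then following by the standard computational reduction.

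\noindent The first ingredient handles the pairs: the $F_{i,i+1}$'s are i.i.d.-like diagonal phases on $2\xi$ qubits, so Fact~\ref{fact: local phases exp} gives $\mathbb{E}\,|\Tr F_{i,i+1}|^2 = 2^{2\xi}$, hence $\mathbb{E}\,|\Tr F_{i,i+1}|^2/2^{4\xi} = 2^{-2\xi}$ is exponentially small; Proposition~\ref{prop:random H moments}, together with the two-sided invariance of the path-recording oracle (Claim~\ref{claim:two-sided-invariance}), then shows that---once it has acquired a Haar-random eigenbasis from the conjugating unitaries surrounding it---each odd-pair phase is indistinguishable from a Haar-random unitary $W_{i,i+1}$ on patches $\{i,i+1\}$, at cost $O\!\big(k(k+1)/2^{\xi/4}\big)$. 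The second ingredient is the gluing: with the overlap region taken to be an even pair $\{i,i+1\}$ (so that $A,A'$ are single $\xi$-qubit patches, precisely the regime of Lemma~\ref{lemma:glue patch moments}), it merges the Haar-random unitary already assembled on the patches to the left with the next $W$ into a single Haar-random unitary on the enlarged region, at cost $O\!\big(k(k+1)/2^{\xi/8}\big) + O\!\big(k/2^{\xi/2}\big)$. Sweeping these two moves down the chain in the appropriate order, the loop invariant is that the still-unprocessed part of $\calE_{\mathbf{F}}$ retains the form $(V')^\dagger D' V'$ on the remaining patches, so the sweep closes after $\Theta(n/\xi)$ iterations, leaving a Haar-random unitary on all $n$ qubits.

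\noindent Adding the $\Theta(n/\xi)$ per-step errors by the triangle inequality---dominated by the $k(k+1)/2^{\xi/8}$ gluing term and the $k/2^{\xi/2}$ local-distinct-subspace term, with all remaining cross terms collected into the $O(k^2/N^{1/8})$ of the statement---gives $\epsilon = 36nk(k+1)/(2^{\xi/8}\xi) + 4nk/(2^{\xi/2}\xi) + O(k^2/N^{1/8})$, and $\xi = \omega(\log n)$ makes this $1/\omega(\poly n)$ for every $k = \poly n$.

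\noindent The main obstacle is precisely the feature flagged in the remark above: $V$ is a single layer of two-patch unitaries and is therefore non-entangling across every odd cut, so---unlike the two-conjugating-layer construction behind Theorem~\ref{thm:designs}---$V$ on its own cannot supply the entanglement needed for a genuinely global Haar-random unitary. The argument goes through only because the phases live on $2\xi$-qubit pairs rather than single patches, so that their spectral spread, once processed through Proposition~\ref{prop:random H moments}, provides the missing randomness; the delicate point is to choose the order of the absorb/glue moves so that at the moment $F_{i,i+1}$ is promoted to $W_{i,i+1}$ it has already been conjugated by (an approximation of) a Haar-random unitary on its pair, while the block already built on its left is a bona fide $k$-design on its region---which in practice means promoting and gluing the first few pairs near the boundary (or around the ring) together as a unit rather than strictly one at a time.
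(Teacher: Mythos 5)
Your proposal takes a fundamentally different route from the paper, and it has a genuine gap at the very first step of the sweep.

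The paper's proof does \emph{not} iterate Proposition~\ref{prop:random H moments} and Lemma~\ref{lemma:glue patch moments} patch by patch. Instead it performs a single global collapse: it replaces each local $U_{i,i+1}$ with a local path-recording oracle $V_i$ and each $F_{i,i+1}$ with a purified phase oracle $O_{F_{i,i+1}}$, and then uses a gentle-measurement argument (the $\Pi^{\mathrm{dist}}_{\mathrm{loc}}$ equivalence from \cite{cui2025unitary}, accounting for the $4nk/(2^{\xi/2}\xi)$ term) to show that the \emph{product} of the local oracles is indistinguishable from a single \emph{global} path-recording oracle $V^{(k)}$ and a single global phase oracle $O_F^{(k)}$, both restricted to the local distinct subspace. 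Only then is Proposition~\ref{prop:random H moments} invoked, \emph{once}, with $D$ the global diagonal $F$ on all $n$ qubits. The ensemble is thus treated as a single global $\tilde V^\dagger \tilde F \tilde V$, and the conjugating unitary $\tilde V$ \emph{is} Haar-random on the full support of $\tilde F$, as Proposition~\ref{prop:random H moments} requires.

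Your iterative approach cannot reproduce this because it never has a Haar-random conjugator on the full support of any single $F_{i,i+1}$. In $\mathcal{E}_{\mathbf{F}}$ the conjugating layer $V = \bigotimes_{i \in \text{even}} U_{i,i+1}$ is staggered against the phase layer $D = \bigotimes_{i \in \text{odd}} F_{i,i+1}$: a phase $F_{i,i+1}$ on the odd pair $\{i,i+1\}$ sees $U_{i-1,i}$ acting on patch $i$ and $U_{i+1,i+2}$ acting on patch $i+1$, each of which straddles the boundary of the pair and acts on only $\xi$ of its $2\xi$ qubits. The effective conjugation that $F_{i,i+1}$ experiences on its own support is therefore a \emph{tensor product} of two marginals of $2\xi$-qubit Haar unitaries, which is non-entangling across the cut between patches $i$ and $i+1$ and nowhere near Haar-random on the pair. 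Proposition~\ref{prop:random H moments} (whose $\calE_{\mathsf{inv}}$ requires $U \sim \Haar$ on the full pair) simply does not apply, so the ``promote $F_{i,i+1}$ to a Haar-random $W_{i,i+1}$'' step that your sweep needs as its base case is unavailable. Your own remark about $V$ being non-entangling across the odd cuts identifies exactly this problem, but the fix you gesture at---that the phases' spectral spread supplies the missing randomness---is precisely what can \emph{only} be seen after the global collapse, not patch by patch: the gluing Lemma~\ref{lemma:glue patch moments} takes as input two already-Haar-random blocks conjugated by a Haar-random overlap, and here neither the blocks nor the overlap are available in the form the lemma demands. Moreover, ``promoting and gluing the first few pairs together as a unit'' does not escape the issue---whatever finite block you start with still has boundary $U$'s that entangle it with the unprocessed remainder, so its internal conjugation is never genuinely Haar-random on the block either. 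The iterative scheme and the global scheme are not interchangeable here; only the latter closes.
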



%
\begin{proof}
    Consider any input state $\ket{\psi}$ over $nk + m$ qubits. For simplicity, we will assume that $n = n'\xi,$ where $\xi = \omega(\log n)$.
    Recall the path-recording oracle $V_i$ for simulating Haar-random unitaries $U_i$ under both forward and inverse queries and the purification $O_{F_{i, i+1}}$ of Haar-random diagonal unitaries $F_{i, i+1}$ as in \cite{ma2025construct}. As with the construction of $V^{(k)}$ given in Definition~\ref{def:parallel oracle}, we can consider a parallel version of the random phase oracle $O_{F_{i, i+1}}^{(k)}$ which acts equivalently to the original oracle in a restricted experiment where each of the $k$ queries do not use any adaptive postprocessing on the previous queries.
    We define the following pure states:
    \begin{align}
        &\ket{\calT^{\mathbf{V} \mathbf{F} \mathbf{V}^\dagger}_k} = \left[ \bigotimes_{i=1}^{n'} V^{(k)}_i \bigotimes_{i \in \text{even}} O_{F_{i, i+1}}^{(k)} \bigotimes_{i \in \text{odd}} O_{F_{i, i+1}}^{(k)} \bigotimes_{i=1}^{n'} (V^\dagger_i)^{(k)} \right] \ket{\psi}_\gsA \ket{\varnothing^{n'}}_{\bm{\gsL}} \ket{\varnothing^{n'}}_{\bm{\gsR}} \ket{0^{n'}}_{\bm{\gsE}},\\
        &\ket{\calT^{VFV^\dagger}_k} = \left[ V^{(k)} O_{F}^{(k)} (V^\dagger)^{(k)} \right] \ket{\psi}_\gsA \ket{\varnothing^{n'}}_{{\gsL}} \ket{\varnothing^{n'}}_{{\gsR}} \ket{0^{n'}}_{{\gsE}}.
    \end{align}
    Applying Theorem~\ref{theorem:haar-cho-strong} to each patch and using triangle inequality yields 
    \begin{equation}
        \norm{\E_{U \sim \mathcal{E}_{\mathbf{F}}} \left[ U^{\otimes k} \ketbra{\psi}{\psi} (U^\dagger)^{\otimes k}  \right] - \Tr_{\bm{\gsL \gsR \gsE}} \ketbra*{\calT^{\mathbf{V} \mathbf{F} \mathbf{V}^\dagger}_k}}_1 \leq \frac{36nk(k+1)}{2^{\xi/8} \xi}.
    \end{equation}
    Substituting in the definitions of $V_i$ and $O^{F_{i, i+1}}$ from \cite{ma2025construct}, then using a similar argument as \cite{cui2025unitary} enables us to compare $\ket{\calT^{\mathbf{V} \mathbf{F} \mathbf{V}^\dagger}_k}$ to an ensemble constructed via random unitary and phase transformations on the entire system. 
    In particular, we observe that the action of the $\otimes_{i \in [n']} V^{(k)}_i$ and $\otimes_{i \in \text{even}} O_{F_{i, i+1}}^{(k)} \otimes_{i \in \text{odd}} O_{F_{i, i+1}}^{(k)}$ is equivalent to that of $V^{(k)} \cdot \Pi^{\mathrm{dist}}_{\mathrm{loc}}$, and $O_{F}^{(k)} \cdot \Pi^{\mathrm{dist}}_{\mathrm{loc}}$, respectively, after tracing over the purifying register, where the projector onto the local distinct subspace is given via Definition~\ref{def:restrict local distinct} and $F$ is drawn from a family of PRFs on $n$ bits. An application of the gentle measurement lemma then yields
    \begin{equation}
        \norm{\Tr_{\bm{\gsL \gsR \gsE}} \ketbra*{\calT^{\mathbf{V} \mathbf{F} \mathbf{V}^\dagger}_k} - \Tr_{{\gsL \gsR \gsE}} \ketbra*{\calT^{{V} {F} {V}^\dagger}_k}}_1 \leq \frac{4nk}{2^{\xi/2} \xi},
    \end{equation}
    where we have used the fact that the restrictions can be applied after all oracle queries. 
    Applying triangle inequality and collecting error terms yields
    \begin{equation}
        \norm{\E_{U \sim \mathcal{E}_{\mathbf{F}}} \left[\left(U\right)^{\otimes k} |\psi \rangle \langle \psi | \left(U^\dagger\right)^{\otimes k} \right] - \Tr_{\gsL} \ketbra*{\calT^{{V} {F} {V}^\dagger}_k}}_1 \leq \frac{36nk(k+1)}{2^{\xi/8} \xi} + \frac{4nk}{2^{\xi/2} \xi}.
    \end{equation}
    We can now analyze $\ket{\calT^{{V} {F} {V}^\dagger}_k}$ following the approach of Proposition~\ref{prop:random H moments}. We observe that by assumption, the value of the trace moments are bounded via
    \begin{equation}
        \abs{\expect_{F} \left[\frac{1}{N} \abs{\Tr F}^2\right] - \expect_{U \sim \Haar} \left[\frac{1}{N} \abs{\Tr U}^2\right]} \leq \frac{1}{\omega(\poly n)},
    \end{equation}
    since the normalized moments correspond to physical measurements. 
    We therefore have that substituting the intermediate bounds from Proposition~\ref{prop:random H moments} and Theorem~\ref{theorem:haar-cho-strong} yields
    \begin{equation}
        \norm{\Tr_{\gsL} \ketbra*{\calT^{{V} {F} {V}^\dagger}_k} - \Tr_{\gsL} |\calT^{\widetilde{V}}_k \rangle \langle \calT^{\widetilde{V}}_k |}_1 \leq \frac{36k(k+1)}{N^{1/8}} + \frac{12k^2}{N} + \frac{8k^2}{N}.
    \end{equation}
    As a result, by a final triangle inequality, we have that
    \begin{equation}
        \norm{\E_{U \sim \mathcal{E}_{\mathbf{F}}} \left[ U^{\otimes k} \ketbra{\psi}{\psi} (U^\dagger)^{\otimes k}  \right] - \E_{U \sim \Haar} \left[ U^{\otimes k} \ketbra{\psi}{\psi} (U^\dagger)^{\otimes k} \right] }_{1} \leq \frac{36nk(k+1)}{2^{\xi/8} \xi} + \frac{4nk}{2^{\xi/2} \xi} + O\left(\frac{k^2}{N^{1/8}}\right),
    \end{equation}
    where we have again used the bound from Theorem~\ref{theorem:haar-cho-strong} to compare $|\calT^{\widetilde{V}}_k \rangle$ when traced over the purifying register, to the action of a Haar-random unitary. This yields the stated result.
\end{proof}
%


\subsection{Proof of Theorems~\ref{thm:designs} and~\ref{thm:PRUs}: Indistinguishability in adaptive quantum experiments} \label{app:adaptive}

Having demonstrated our simpler proofs in the non-adaptive setting, we now turn to the adaptive setting and provide the complete proof of Theorems~\ref{thm:designs} and~\ref{thm:PRUs}.
We consider the following random unitary ensemble, which is generated by evolving the random Hamiltonian ensemble described in the main text for an evolution time $t= \pi$, 
\begin{equation}
    U = \left( \bigotimes_{i \in \text{even}} U_{i,i+1}^\dagger \right) \left( \bigotimes_{i \in \text{odd}} U_{i,i+1}^\dagger \right) \left( \bigotimes_{i \in \text{odd}} F_{i,i+1} \right) \left( \bigotimes_{i \in \text{odd}} U_{i,i+1} \right) \left( \bigotimes_{i \in \text{even}} U_{i,i+1} \right).
\end{equation}
In Theorem~\ref{thm:designs}, each small random unitary $U_{i,i+1}$ and each small PRF $F_{i,i+1}$ are chosen to be indistinguishable from a Haar-random unitary and a truly random function by any $k$-query quantum experiment.
In Theorem~\ref{thm:PRUs}, the are chosen to be indistinguishable by any polynomial-time quantum experiment.

As described in the main text, our proof of Theorems~\ref{thm:designs} and~\ref{thm:PRUs} follows from a ``gluing'' argument.
To this end, we have the following two lemmas, which we prove in the subsequent subsections.
\begin{lemma}[Conjugated random phases are random unitaries~\cite{gu2024simulating}] \label{lemma:conj-phase}
    Let $\mathcal{E}$ be equal to the product $U^\dagger F U$, where $U$ is Haar-random and $F$ is a random continuous phase gate.
    Then $\mathcal{E}$ is indistinguishable from a Haar-random unitary up to measurable error $\mathcal{O}(k/N^{1/4})$.
\end{lemma}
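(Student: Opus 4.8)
The key observation is that $\mathcal E = U^\dagger F U$ is precisely a unitary whose eigenbasis $\{U^\dagger \ket z\}_z$ is Haar-random (since $U^\dagger$ is itself Haar-random) and whose eigenvalue phases are drawn i.i.d.\ uniformly from the circle, whereas a genuine Haar-random unitary has the same Haar-random eigenbasis but eigenvalue phases drawn from the Weyl (level-repelling) measure; so the lemma asserts that, as seen by a $k$-query experiment, these two eigenvalue distributions are interchangeable. To prove it, the plan is to work in the path-recording framework. First I would unfold each query to $\mathcal E$ into a forward query to $U$, an application of the diagonal $F$ on the query register, and an inverse query to $U$, so that a $k$-query experiment for $\mathcal E$ becomes a particular $2k$-query experiment for the Haar-random $U$; building the approximate purification by replacing $U$ and $U^\dagger$ with the path-recording oracle $V$ and $V^\dagger$ then yields a state in which each $\mathcal E$-block is the composite $V^\dagger\, F_{\gsA}\, V$ acting on the query register $\gsA$ and the shared relation registers $\gsL,\gsR$. (A black-box application of Theorem~\ref{theorem:haar-cho-strong} at the $2k$-query level already gives indistinguishability, which is all that the downstream Theorems~\ref{thm:designs} and~\ref{thm:PRUs} require; the sharper $O(k/N^{1/4})$ rate requires the finer accounting below.)

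\textbf{The crux} is to show that, after tracing out the relation registers, $k$ applications of $V^\dagger F V$ produce the same reduced state as $k$ applications of a single path-recording oracle for a \emph{fresh} Haar-random unitary. When $V$ acts on $\ket x$ it records a pair $(x,z)$ in $\gsL$ with a fresh, uniformly random ``intermediate'' index $z$; $F$ attaches the phase $e^{i\theta_z}$; and $V^\dagger$ must then ``close off'' $z$ into an output. I would restrict to the subspace on which all intermediate indices and all previously produced outputs are pairwise distinct --- its complement has small weight by a counting argument and can be inserted using $k$ rounds of the gentle-measurement lemmas (Lemma~\ref{lemma:gentle measurement}, Lemma~\ref{lem:seq-gentleM-pure}) --- and on that subspace exhibit a relation-register isometry, in the spirit of the $\mathsf{Expand}$ map of Definition~\ref{def:expand map}, under which $V^\dagger F V$ becomes exactly the path-recording update $\ket x \mapsto (N-\#\text{used})^{-1/2}\sum_{y\ \mathrm{new}} \ket y \ket{\text{record }(x,y)}$ of a Haar unitary. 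The i.i.d.-uniformity of the phases $\{\theta_z\}_z$ is what makes the intermediate index forgettable: the $\E_F$ average annihilates every cross term except those in which the intermediate indices are matched, and on the distinct subspace this matching is trivial; the comparison with the Weyl spectrum enters only through a single low-degree moment identity, and the resulting per-query error --- controlled by quantities such as $N^{-2}\,\E\,|\!\Tr F|^2$ --- is $O(N^{-1/2})$ in the regime relevant to the applications (e.g.\ when $F$ acts nontrivially only on a subsystem).

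\textbf{Assembling and the main obstacle.} A triangle inequality over (i) the cost of replacing $U$ by $V$, (ii) the $O(k/N^{1/4})$ cost of the $k$ sequential distinctness restrictions, and (iii) a final application of Theorem~\ref{theorem:haar-cho-strong} recognizing the reduced state as the genuine $k$-query output of a Haar-random unitary, yields the claimed measurable-error bound. I expect the genuinely new difficulty --- and the reason this cannot merely quote the existing path-recording lemmas, nor reuse the non-adaptive argument of Proposition~\ref{prop:random H moments} --- to be the inverse unitary $U^\dagger$: the composite $V^\dagger F V$ does not self-cancel, and $\gsL$ and $\gsR$ become entangled in a way with no analogue in the analysis of a plain Haar unitary. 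Pinning down the ``consistent'' subspace on which this composite collapses to a single path-recording step, and constructing the isometry that matches it to the standard $V$ while controlling the collision probabilities of the intermediate indices, is the technical heart of the argument; the remaining steps are triangle inequalities and gentle measurement.
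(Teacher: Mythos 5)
Your first sentence correctly identifies the key structural fact — a Haar-random unitary is exactly $U^\dagger F_{\mathrm{CUE}} U$ with $U$ Haar-random and $F_{\mathrm{CUE}}$ the CUE-distributed diagonal — but you then abandon the advantage this buys. The paper compares $U^\dagger F U$ and $U^\dagger F_{\mathrm{CUE}} U$ directly and observes, by data processing (the distinguisher can sample its own $U$ and conjugate), that the measurable error of this pair is at most the measurable error between the two diagonal unitaries $F$ and $F_{\mathrm{CUE}}$, with no path-recording whatsoever. Both phase distributions are exchangeable, so a $k$-query experiment on a diagonal unitary is sensitive only to the $2k$-body marginal of the eigenvalue distribution, and Lemma~23 of~\cite{gu2024simulating} bounds the total variation distance between the uniform and CUE $2k$-body marginals by $\mathcal O(k/N^{1/4})$. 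That is the entire argument.

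Your route — unfolding each $\mathcal E$-query into a $U$-query, an $F$-step, and a $U^\dagger$-query and rebuilding everything via path-recording — has two genuine problems. The rate: replacing $U$ by the path-recording oracle over $2k$ forward-and-inverse queries already costs $\mathcal O(k^2/N^{1/8})$ by Theorem~\ref{theorem:haar-cho-strong}, and a second invocation at the end costs the same; $k^2/N^{1/8}$ strictly dominates $k/N^{1/4}$ for every $k\geq 1$, so your triangle inequality cannot deliver the lemma's bound without reproving the path-recording theorem with sharper constants tailored to this setting, which you do not do. And the obstacle you flag as the ``technical heart'' — that $V^\dagger F V$ does not self-cancel and entangles $\gsL$ with $\gsR$ — is an artifact of your decomposition, not intrinsic to the problem: the paper never treats $U$ and $U^\dagger$ as separate oracle queries, so this difficulty never arises. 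The real crux is a purely classical comparison of two exchangeable phase ensembles, and the right move is to factor the eigenbasis out by data processing \emph{before} you begin tracking collisions, not after.
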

\noindent The lemma is not stated explicitly in~\cite{gu2024simulating}, but is derived in several steps spread between different proofs of their work.
We provide a concise step-by-step summary of the proof in Section~\ref{sec:conj-phase} below.
\begin{lemma}[Gluing via conjugation] \label{lemma:gluing}
    Let $\mathsf{a}$, $\mathsf{b}$, $\mathsf{c}$, $\mathsf{d}$ be four subsystems.
    Consider the ensemble $\mathcal{E}$ on $\mathsf{abcd}$ given by the product of Haar-random unitaries, $U_{\mathsf{bc}}^\dagger (U_{\mathsf{ab}} \otimes U_{\mathsf{cd}}) U_{\mathsf{bc}}$.
    The ensemble $\mathcal{E}$ is indistinguishable from a Haar-random unitary $U_\mathsf{abcd}$ up to measurable error 
    $\mathcal{O}(t^2/N_{\mathsf b}^{1/2})+\mathcal{O}(t^2/N_{\mathsf c}^{1/2}) + \mathcal{O}(t^2/N_{\mathsf{bc}}^{1/8})$.
\end{lemma}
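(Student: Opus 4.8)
The plan is to run the whole argument inside the path-recording framework of \cite{ma2025construct}, upgrading the non-adaptive gluing argument of \cref{lemma:glue patch moments} so that it controls fully adaptive $t$-query experiments that may query both $U$ and $U^\dagger$. Fix such an experiment $\mathcal{T}$ probing $\mathcal{E} = U_{\mathsf{bc}}^\dagger(U_{\mathsf{ab}}\otimes U_{\mathsf{cd}})U_{\mathsf{bc}}$. First I would replace $U_{\mathsf{ab}}$, $U_{\mathsf{cd}}$, and $U_{\mathsf{bc}}$ by their path-recording oracles $V_{\mathsf{ab}}$, $V_{\mathsf{cd}}$, $V_{\mathsf{bc}}$ (\cref{def:symmetric-V}), where $V_{\mathsf{bc}}$ is queried in both directions and hence acts on a pair of variable-length relation registers $(\mathsf{L}_{\mathsf{bc}},\mathsf{R}_{\mathsf{bc}})$. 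By \cref{theorem:haar-cho-strong} applied to each factor, the reduced state on the query and ancilla registers of the resulting purified state $\ket{\mathcal{T}_t^{\mathsf{conj}}}$ — obtained by interleaving $T_1,\dots,T_{t+1}$ with the composite isometry $V_{\mathsf{bc}}^\dagger(V_{\mathsf{ab}}\otimes V_{\mathsf{cd}})V_{\mathsf{bc}}$ — differs from the true output of $\mathcal{T}$ on $\mathcal{E}$ by $\mathcal{O}(t^2/N_{\mathsf{bc}}^{1/8})$ in trace distance, together with the analogous smaller contributions from $V_{\mathsf{ab}}$ and $V_{\mathsf{cd}}$.

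Next, following \cref{lemma:glue patch moments}, I would sandwich the inner layer $V_{\mathsf{ab}}\otimes V_{\mathsf{cd}}$ between projectors onto the locally-distinct subspaces on $\mathsf{b}$ and on $\mathsf{c}$ (\cref{def:restrict local distinct}), and correspondingly restrict the target oracle $V_{\mathsf{abcd}}$ to relations that are distinct on $\mathsf{b}$ and on $\mathsf{c}$. Since the image of each path-recording oracle is already distinct on the subsystem it records, the only cost is the probability that a uniformly random length-$t$ bijective relation on $\mathsf{abcd}$ fails to be distinct on $\mathsf{b}$ (resp.\ $\mathsf{c}$), which is $\mathcal{O}(t^2/N_{\mathsf{b}})$ (resp.\ $\mathcal{O}(t^2/N_{\mathsf{c}})$); combining the gentle measurement lemma (\cref{lemma:gentle measurement}, in the sequential form of \cref{lem:seq-gentleM-pure}) with \cref{lemma:trace to lp 2} converts this into a trace-distance error $\mathcal{O}(t^2/N_{\mathsf{b}}^{1/2})+\mathcal{O}(t^2/N_{\mathsf{c}}^{1/2})$, which is where the two leading terms of the claimed bound originate. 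On this doubly-distinct subspace I would then exhibit an explicit relation-register isometry — a variable-length, two-sided analog of the $\mathsf{Expand}$ map of \cref{def:expand map}, now also acting on the $\mathsf{R}$-registers that record the inverse queries — carrying the restricted purification of $V_{\mathsf{abcd}}$ onto $\ket{\mathcal{T}_t^{\widetilde{\mathsf{conj}}}}$: a length-$t$ bijective relation on $\mathsf{abcd}$ is expanded into a relation on $\mathsf{L}_{\mathsf{ab}}$ and on $\mathsf{L}_{\mathsf{cd}}$ recording how $V_{\mathsf{ab}},V_{\mathsf{cd}}$ act, glued to matched forward/inverse relations on $(\mathsf{L}_{\mathsf{bc}},\mathsf{R}_{\mathsf{bc}})$ recording the two copies of $V_{\mathsf{bc}}$. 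Because this map is an isometry that only changes the basis of the recording registers, it leaves the query-plus-ancilla reduced state invariant, and it reduces the lemma to checking that the composite $V_{\mathsf{bc}}^\dagger(V_{\mathsf{ab}}\otimes V_{\mathsf{cd}})V_{\mathsf{bc}}$ genuinely acts the way this $\mathsf{Expand}$-image prescribes.

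The hard part will be exactly this verification, because of the inverse oracle $V_{\mathsf{bc}}^\dagger$: after $t$ adaptive queries the registers $(\mathsf{L}_{\mathsf{bc}},\mathsf{R}_{\mathsf{bc}})$ carry $t$ forward and $t$ inverse records, interleaved across queries with the $\mathsf{ab}/\mathsf{cd}$ layer and the adversary's operations $T_i$, and one must show that tracing them out leaves exactly the state produced by $V_{\mathsf{abcd}}$. Unlike prior gluing lemmas, which only compose forward records, here I expect to pass through the partial path-recording oracle $W$ and use the twirling-by-2-design estimate of \cref{lem:twirling-strongPRU} to move between the bijective-relation restriction $\Pi^{\bij}$ used above and the domain/image projectors $\Pi^{\Dom(W)},\Pi^{\Im(W)}$, and to invoke the two-sided unitary invariance of the path-recording oracle (\cref{claim:two-sided-invariance}) to verify that conjugating the inner oracles by the recorded $V_{\mathsf{bc}}$ basis change — together with the matching rotation of $(\mathsf{L}_{\mathsf{bc}},\mathsf{R}_{\mathsf{bc}})$ — is inert up to $\mathcal{O}(t^2/N_{\mathsf{bc}}^{1/2})$; keeping the forward-versus-inverse bookkeeping consistent at every query, and applying the invariance with matching arguments throughout, is the genuinely new ingredient relative to \cref{lemma:glue patch moments} and to \cref{lemma:conj-phase}. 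Finally I would replace $V_{\mathsf{abcd}}$ by a genuine Haar-random unitary using \cref{theorem:haar-cho-strong} once more, at cost $\mathcal{O}(t^2/N_{\mathsf{abcd}}^{1/8})\le\mathcal{O}(t^2/N_{\mathsf{bc}}^{1/8})$, and collect all the estimates by the triangle inequality; the dominant contributions are the $\mathcal{O}(t^2/N_{\mathsf{b}}^{1/2})$ and $\mathcal{O}(t^2/N_{\mathsf{c}}^{1/2})$ restriction errors and the $\mathcal{O}(t^2/N_{\mathsf{bc}}^{1/8})$ purification errors, matching the stated bound.
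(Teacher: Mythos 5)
Your high-level plan — path-record every factor, restrict to locally-distinct subspaces on $\mathsf{b}$ and $\mathsf{c}$, and exhibit an isometry between the recording registers of the glued composite and those of $V_{\mathsf{abcd}}$ — matches the structure of the paper's proof. But the proposal is missing the key technical move that makes the isometry work, and the substitute route you sketch for it is not what the paper does and is not shown to close the gap.

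The paper imposes \emph{two} modifications on $V_{\mathsf{ab}}$ (and symmetrically $V_{\mathsf{cd}}$ and $V_{\mathsf{abcd}}$), not one. Beyond restricting the outputs to be locally distinct on $\mathsf{b}$ (which you do mention and which costs $\mathcal{O}(t^2/N_{\mathsf b})$, not $\mathcal{O}(t^2/N_{\mathsf b}^{1/2})$), it additionally projects the outputs to be locally distinct on $\mathsf{b}$ from all previous \emph{inputs} to $V_{\mathsf{ab}}$; that second projection is what the sequential gentle measurement lemma is applied to, yielding the $\mathcal{O}(t^2/N_{\mathsf b}^{1/2})+\mathcal{O}(t^2/N_{\mathsf c}^{1/2})$ term. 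This second restriction is not cosmetic: it is exactly what guarantees that every input to $V_{\mathsf{bc}}^\dagger$ is fresh — distinct from all prior outputs of $V_{\mathsf{bc}}$ and from all prior inputs to $V_{\mathsf{bc}}^\dagger$ — so that $V_{\mathsf{bc}}^\dagger$ can be replaced by its creation-only restriction $W_{\mathsf{bc},R}$. Without it, $V_{\mathsf{bc}}^\dagger$ can act by annihilation in an uncontrolled way and the relation registers for the conjugating layer do not take the chain form of Eqs.~(\ref{eq: U aux}) and (\ref{eq: U' aux}), which is the whole basis for the one-to-one correspondence. You do not mention this restriction, nor the chain structure, nor the argument that local distinctness of the $y^{i,\tau}_{\mathsf b}$ and $y^{i,\tau}_{\mathsf c}$ forces each relation-state on the $\mathsf{abcd}$ side to determine a unique counterpart on the $\mathsf{ab}/\mathsf{cd}/\mathsf{bc}$ side.

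In its place you propose to pass through the twirling-by-2-design estimate (Lemma~\ref{lem:twirling-strongPRU}) and the two-sided unitary invariance of $V$ (Claim~\ref{claim:two-sided-invariance}), budgeting an extra $\mathcal{O}(t^2/N_{\mathsf{bc}}^{1/2})$ to make the conjugation ``inert.'' The paper does not need either ingredient for this lemma, and for good reason: two-sided invariance is an operator-norm statement about commuting fixed $C,D$ past $V$, not about $V_{\mathsf{bc}}$ itself being a path-recording oracle whose recorded entries evolve and interleave with the inner layer across $t$ adaptive queries; it is not clear how it would let you absorb $V_{\mathsf{bc}}^\dagger(\cdot)V_{\mathsf{bc}}$ without first establishing the chain structure that the second distinctness restriction provides. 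Moreover, your extra $\mathcal{O}(t^2/N_{\mathsf{bc}}^{1/2})$ term would appear in the final bound and is not present in the statement, whereas the paper's isometry is exact on the restricted subspace, incurring no further error. So the proposal as written does not prove the lemma: the missing second restriction is the load-bearing idea, and the substitute mechanism is not demonstrated to work.
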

\noindent The proof of this lemma is significantly more involved and is completed via the path-recording framework in Section~\ref{sec:gluing}.

Our proof of Theorems~\ref{thm:designs} and~\ref{thm:PRUs} follows simply from Lemma~\ref{lemma:conj-phase} and Lemma~\ref{lemma:gluing}.
\begin{proof}[Proof of Theorems~\ref{thm:designs} and~\ref{thm:PRUs}]
    By definition, each strong PRU is indistinguishable from a Haar-random unitary by any $\poly n$-time quantum experiments, and each PRF is indistinguishable from a uniform continuous random phase by any $\poly n$-time quantum experiment.
    Hence, we will assume $U_{i,i+1}$ and $F_{i,i+1}$ are truly random from hereon.

    From Lemma~\ref{lemma:conj-phase}, each product $U_{i,i+1}^\dagger F_{i,i+1} U_{i,i+1}$ for $i$ odd is indistinguishable from a Haar-random unitary $U'_{i,i+1}$ up to measurable error $\mathcal{O}(k/2^{\xi/2})$ in any $k$-query quantum experiment.
    There are $n/2\xi$ such $i$, which leads to a total measurable error $\mathcal{O}((n/2\xi)k/2^{\xi/2})$.
    Then, applying Lemma~\ref{lemma:gluing} $n/2\xi$ times sequentially from left to right, we find that the entire product, $(\otimes_{i \in \text{even}} U_{i,i+1})^\dagger (\otimes_{i \in \text{even}} F_{i,i+1}) (\otimes_{i \in \text{even}} U_{i,i+1})$, is indistinguishable from a Haar-random unitary on all $n$ qubits up to measurable error $\mathcal{O}(k^2/2^{\xi/4})$ in any $k$-query quantum experiment.
    Hence, the random nearly-local Hamiltonian time-evolution ensemble is indistinguishable from a Haar-random unitary up to measurable error $\mathcal{O}(k^2/2^{\xi/4})$.
    For any $\xi = \omega(\log n)$, the measurable error is super-polynomially small in $n$ for any $k = \poly n$, and hence the ensemble forms a PRU with security against any $\poly n$-time quantum adversary.
\end{proof}

\subsubsection{Proof of Lemma~\ref{lemma:conj-phase}: Conjugated random phases are random unitaries} \label{sec:conj-phase}

    We provide a concise summary of the proof in~\cite{gu2024simulating}, and refer to~\cite{gu2024simulating} for further details.
    We can diagonalize a Haar-random unitary $U'$ as $U' = U^\dagger F_{CUE} U$, where $U$ is Haar-random and $F_{CUE}$ is randomly drawn from the CUE spectral distribution~\cite{gu2024simulating}.
    Note that both the probability distribution $p$ of $N$ uniform random phases and the probability distribution $p_{CUE}$ of $N$ CUE-distributed random phases are invariant under any permutation of the $N$ bitstrings.
    From this property, Eq.~(B.34) of~\cite{gu2024simulating} shows that the measurable error of any quantum experiment that queries $F$ versus $F_{CUE}$ up to $k$ times, is bounded above by the total variation distance between the $2k$-body marginal $p^{(2k)}$ of $p$ and the $2k$-body marginal $p_{CUE}^{(2k)}$ of $p_{CUE}$.
    From Lemma~23 of~\cite{gu2024simulating}, the $2k$-body marginals of the uniform and CUE distributions are close up to total variation distance $\mathcal{O}(k/N^{1/4})$.
    Hence, the measurable error between $F$ and $F_{CUE}$ is at most $\mathcal{O}(k/N^{1/4})$.
    This immediately implies that the measurable error between $U^\dagger F U$ and $U' = U^\dagger F_{CUE} U$ is at most $\mathcal{O}(k/N^{1/4})$ as well. \qed 

\subsubsection{Proof of Lemma~\ref{lemma:gluing}: Gluing random unitaries via conjugation} \label{sec:gluing}

%

%

    Let $U \equiv U_{\mathsf{bc}}^\dagger (U_{\mathsf{ab}} \otimes U_{\mathsf{cd}}) U_{\mathsf{bc}}$ be the unitary of interest, and $U' \equiv U_{\mathsf{bc}}^\dagger (U_{\mathsf{abcd}}) U_{\mathsf{bc}} = U_{\mathsf{abcd}}$ a Haar-random unitary.
    In the latter expression, we have used the invariance of the Haar measure under any unitary rotation to eliminate $U_{\mathsf{bc}}, U_{\mathsf{bc}}^\dagger$.

    From~\cite{ma2025construct}, we can replace each query to $U$ with the product of path-recording oracles,
    $V \equiv V_{\mathsf{bc}}^\dagger (V_{\mathsf{ab}} \otimes V_{\mathsf{cd}}) V_{\mathsf{bc}}$,
    acting on auxiliary spaces $\mathsf{L}_{\mathsf{ab}}$, $\mathsf{L}_{\mathsf{cd}}$,
    $\mathsf{L}_{\mathsf{bc}}$,
    $\mathsf{R}_{\mathsf{bc}}$.
    In particular, we take $V_{\mathsf{bc}}$, $V_{\mathsf{bc}}^\dagger$ identical to~\cite{ma2025construct}.
    From~\cite{ma2025construct}, this incurs a measurable error $\mathcal{O}(t^2/N_{\mathsf{bc}}^{1/8})$.
    On the other hand, we make two modifications to the path-recording oracle $V_{\mathsf{ab}}$.
    First, following Appendix~B of~\cite{ma2025construct}, we take $V_{\mathsf{ab}}$ to output bitstrings $y_{\mathsf{ab}}$ that are locally distinct on $\mathsf{b}$, and $V_{\mathsf{cd}}$ to output bitstrings $y_{\mathsf{cd}}$ that are locally distinct on $\mathsf{c}$.
    This incurs a measurable error $\mathcal{O}(t^2/N_{\mathsf b})+\mathcal{O}(t^2/N_{\mathsf c})$ following Appendix~B of~\cite{ma2025construct}.
    Second, we further enforce that the output bitstrings are locally distinct on $\mathsf{b}$ from all previous \emph{inputs} to $V_{\mathsf{ab}}$, and similar for $\mathsf{c}$ and $V_{\mathsf{cd}}$.
    Inserting such a projector reduces the normalization of the state by at most $\mathcal{O}(t/N_{\mathsf{b}})$ per application, and hence by at most $\mathcal{O}(t^2/N_{\mathsf{b}})$ after $t$ applications (and similar for $\mathsf{c}$).
    From the sequential gentle measurement lemma~\cite{ma2025construct}, this incurs a total measurable error of at most $\mathcal{O}(t^2/N_{\mathsf{b}}^{1/2}) + \mathcal{O}(t^2/N_{\mathsf{c}}^{1/2})$.
    
    We proceed in a similar manner for $U'$.
    We replace each query to $U'$ with the product, 
    $V' \equiv V_{\mathsf{bc}}^\dagger (V_{\mathsf{abcd}}) V_{\mathsf{bc}}$,
    acting on auxiliary spaces $\mathsf{L}_{\mathsf{abcd}}$,
    $\mathsf{L}_{\mathsf{bc}}$,
    $\mathsf{R}_{\mathsf{bc}}$.
    We take $V_{\mathsf{bc}}$, $V_{\mathsf{bc}}^\dagger$ as in~\cite{ma2025construct} as before.
    We then take $V_{\mathsf{abcd}}$ as in the previous paragraph, to output bitstrings $y_{\mathsf{abcd}}$ that are locally distinct on both $\mathsf{b}$ and $\mathsf{c}$, and which are also subsequently projected to also be locally distinct from all previous inputs to $V_{\mathsf{abcd}}$ on $\mathsf b$ and $\mathsf c$.
    In total, these replacements accrue a measurable error $\mathcal{O}(t^2/N_{\mathsf{bc}}^{1/8})+\mathcal{O}(t^2/N_{\mathsf b}^{1/2})+\mathcal{O}(t^2/N_{\mathsf c}^{1/2})$, which has the same scaling as measurable error accrued in the previous paragraph.

    From these definitions, for both $V$ and $V'$, the input to $V_{\mathsf{bc}}^\dagger$ is always distinct from all previous outputs of $V_{\mathsf{bc}}$.
    This follows from our projection on the output of $V_{\mathsf{ab}} \otimes V_{\mathsf{cd}}$ (or $V_{\mathsf{abcd}}$) onto bitstrings that are locally distinct from all previous inputs; every output of $V_{\mathsf{bc}}$ is an input to the proceeding $V_{\mathsf{ab}} \otimes V_{\mathsf{cd}}$ (or $V_{\mathsf{abcd}}$).
    The input to $V_{\mathsf{bc}}^\dagger$ is also always distinct from all previous inputs to $V_{\mathsf{bc}}^\dagger$, since the output of $V_{\mathsf{ab}} \otimes V_{\mathsf{cd}}$ (or $V_{\mathsf{abcd}}$) is locally distinct from all previous such outputs.
    Hence, we can replace $V_{\mathsf{bc}}^\dagger$ with its restriction on the distinct subspace, $W_{\mathsf{bc},R}$~\cite{ma2025construct}.
    This guarantees that $V_{\mathsf{bc}}^\dagger$ always creates a new entry in the $\mathsf{R}_{\mathsf{bc}}$ register whenever it is queried.

    From this property, it follows that the states on the auxiliary registers when we apply $U$ correspond to ``chains'' of the form,
    \begin{equation} \label{eq: U aux}
    \begin{split}
        & \ket{
        \cup_{i=1}^m
        \{
        (x^i_{\mathsf b}x^i_{\mathsf c},y^{i,0}_{\mathsf b}y^{i,0}_{\mathsf c})
        \}
        }_{\mathsf{L}_{\mathsf{bc}}} \\
        & \ket{
        \cup_{i=1}^m
        \{
        (x^{i,1}_{\mathsf a}y^{i,0}_{\mathsf b},
        z^{i,1}_{\mathsf a}y^{i,1}_{\mathsf b}),
        \ldots,
        (x^{i,t_i}_{\mathsf a} y^{i,t_i-1}_{\mathsf b},
        z^{i,t_i}_{\mathsf a} y^{i,t_i}_{\mathsf b})
        \}}_{\mathsf{L}_{\mathsf{ab}}} \\
        & \ket{
        \cup_{i=1}^m
        \{
        (y^{i,0}_{\mathsf c} x^{i,1}_{\mathsf d},
        y^{i,1}_{\mathsf c} z^{i,1}_{\mathsf d}),
        \ldots,
        (y^{i,t_i-1}_{\mathsf c} x^{i,t_i}_{\mathsf d} ,
        y^{i,t_i}_{\mathsf c} z^{i,t_i}_{\mathsf d})
        \}}_{\mathsf{L}_{\mathsf{cd}}} \\
        & \ket{
        \cup_{i=1}^m
        \{
        (y^{i,t_i}_{\mathsf b}y^{i,t_i}_{\mathsf c},z^i_{\mathsf b}z^i_{\mathsf c})
        \}
        }_{\mathsf{R}_{\mathsf{bc}}}.
    \end{split}
    \end{equation}
    Each chain $i = 1,\ldots,m$ of length $t_i$ involves $t_i-1$ applications of the annihilation branch of $V_{\mathsf{bc}}$.
    Each annihilation undoes the action of a previous creation by $W_{\mathsf{bc},R}$, and thereby ``chains'' together the output of the $V_{\mathsf{ab}} \otimes V_{\mathsf{cd}}$ that preceded that $W_{\mathsf{bc},R}$ with the input of the $V_{\mathsf{ab}} \otimes V_{\mathsf{cd}}$ that proceeded the $V_{\mathsf{bc}}$.
    That is, the output of the earlier $V_{\mathsf{ab}} \otimes V_{\mathsf{cd}}$ is stored in pairs with the input of the later $V_{\mathsf{ab}} \otimes V_{\mathsf{cd}}$ in the $\mathsf{L}_{\mathsf{ab}}$ and $\mathsf{L}_{\mathsf{cd}}$ relation registers.
    We have $\sum_i t_i = t$, the total number of applications of the unitary.
    From our definitions, all of the $y^{i,\tau_i}_\mathsf{b}$ for any $i$ and any $0 \leq \tau_i \leq t_i$ are distinct, and similar for the $y^{i,\tau_i}_\mathsf{c}$.

    Meanwhile, the the states on the auxiliary registers when we apply $U'$ correspond to chains of the very similar form,
    \begin{equation} \label{eq: U' aux}
    \begin{split}
        & \ket{
        \cup_{i=1}^m
        \{
        (x^i_{\mathsf b}x^i_{\mathsf c},y^{i,0}_{\mathsf b}y^{i,0}_{\mathsf c})
        \}
        }_{\mathsf{L}_{\mathsf{bc}}} \\
        & \ket{
        \cup_{i=1}^m
        \{
        (x^{i,1}_{\mathsf a}y^{i,0}_{\mathsf b} y^{i,0}_{\mathsf c} x^{i,1}_{\mathsf d},
        z^{i,1}_{\mathsf a}y^{i,1}_{\mathsf b} y^{i,1}_{\mathsf c} z^{i,1}_{\mathsf d}),
        \ldots,
        (x^{i,t_i}_{\mathsf a} y^{i,t_i-1}_{\mathsf b} y^{i,t_i-1}_{\mathsf c} x^{i,t_i}_{\mathsf d},
        z^{i,t_i}_{\mathsf a} y^{i,t_i}_{\mathsf b} y^{i,t_i}_{\mathsf c} z^{i,t_i}_{\mathsf d})
        \}}_{\mathsf{L}_{\mathsf{abcd}}} \\
        & \ket{
        \cup_{i=1}^m
        \{
        (y^{i,t_i}_{\mathsf b}y^{i,t_i}_{\mathsf c},z^i_{\mathsf b}z^i_{\mathsf c})
        \}
        }_{\mathsf{R}_{\mathsf{bc}}},
    \end{split}
    \end{equation}
    where similar to before all of the $y^{i,\tau_i}_\mathsf{b}$ are distinct and all of the $y^{i,\tau_i}_\mathsf{c}$ are distinct.
    Moreover, the coefficients of each auxiliary register as above are identical to those of the auxiliary registers for $U$.
    That is, the only difference between the path-recording state in the experiment involving $U$ and that in the experiment involving $U'$ is the replacement of each auxiliary state in Eq.~(\ref{eq: U aux}) with the corresponding auxiliary state in Eq.~(\ref{eq: U' aux}).

    It remains only to show that there is a one-to-one isometry between the auxiliary states in Eq.~(\ref{eq: U aux}) and those in Eq.~(\ref{eq: U' aux}).
    Note that this is not a priori guaranteed, owing to the symmetrization of each set of pairs in each register.
    To show that there exists an isometry, we must show that for every state in Eq.~(\ref{eq: U' aux}) there is a unique associated state in Eq.~(\ref{eq: U aux}).
    The reverse direction follows trivially, since there is less symmetrization in Eq.~(\ref{eq: U aux}) than in Eq.~(\ref{eq: U' aux}). 
    This claim follows immediately from the fact that all $y^{i,\tau_i}_\mathsf{b}$ are distinct and all $y^{i,\tau_i}_\mathsf{c}$ are distinct. 
    %
    %
    Hence, for each value of $z^i_{\mathsf{b}}z^i_{\mathsf{c}}$ on the $\mathsf{R}_{\mathsf{bc}}$ register, there are a unique two values of $y^{i,t_i}_{\mathsf{b}}$ and $y^{i,t_i}_{\mathsf{c}}$, and in turn a unique two values of $y^{i,t_i-1}_{\mathsf{b}}$ and $y^{i,t_i-1}_{\mathsf{c}}$, and so on to $y^{i,0}_{\mathsf{b}}$ and $y^{i,0}_{\mathsf{c}}$, on the $\mathsf{L}_{\mathsf{ab}}$ and $\mathsf{L}_{\mathsf{cd}}$ registers, and thus a unique value of $y^{i,0}_\mathsf{b} y^{i,0}_\mathsf{c}$ and $x^i_\mathsf{b} x^i_\mathsf{c}$ on the $\mathsf{L}_{\mathsf{bc}}$ register.
    The local distinctness of $y^{i,\tau_i}_\mathsf{b}$ and $y^{i,\tau_i}_\mathsf{c}$ guarantees that for each $y^{i,\tau_i+1}_\mathsf{b}$ and $y^{i,\tau_i+1}_\mathsf{c}$ there is a unique $y^{i,\tau_i}_\mathsf{b}$ and $y^{i,\tau_i}_\mathsf{c}$ on $\mathsf{L}_{\mathsf{ab}}$ and $\mathsf{L}_{\mathsf{cd}}$.
    The distinctness of $y^{i,\tau_i}_\mathsf{b} y^{i,\tau_i}_\mathsf{c}$ from $y^{i,0}_\mathsf{b} y^{i,0}_\mathsf{c}$ guarantees that the step in process above where one jumps from the $\mathsf{L}_{\mathsf{ab}}$, $\mathsf{L}_{\mathsf{cd}}$ registers to the $\mathsf{L}_{\mathsf{bc}}$ register is uniquely determined.
    This completes our proof. \qed

\subsection{Proof of Proposition~\ref{prop:temp}: Impossibility of unitary designs from efficient temporal ensembles}

We consider the state $\Phi_\mathcal{E}(\dyad{0^n}^{\otimes k})$, in which the twirl over $k$ copies of a unitary drawn from $\mathcal{E}$ is applied to $k$ copies of the zero state on $n$ qubits.
For a Haar-random unitary, the state $\Phi_H(\dyad{0^n}^{\otimes k})$ has a flat spectrum across the symmetric subspace, which contains ${2^n + k - 1 \choose k}$ elements~\cite{schuster2024random}.
For any approximate unitary $k$-design with additive error $\varepsilon$, $\Phi_\mathcal{E}(\dyad{0^n}^{\otimes k})$ must be $\varepsilon$-close in trace distance to this state.

Let us now analyze the rank of the state $\Phi_\mathcal{E}(\dyad{0^n}^{\otimes k})$. We begin by considering the case of a single fixed Hamiltonian; our results will trivially extend to a bounded number $L$ of random Hamiltonians. 
Let us discretize the time interval $[0,T]$ into steps $\tau,2\tau,\ldots,(T/\tau)\tau$ of a small fixed size $\tau$.
For any probability distribution $\mathcal{D}$ over the evolution time $t \in [0,T]$, we can decompose the state $\Phi_\mathcal{E}(\dyad{0^n}^{\otimes k})$ into a mixture of contributions from nearby each time step,
\begin{equation} \label{eq:decompose PhiE EPR}
    \Phi_\mathcal{E}(\dyad{0^n}^{\otimes k})
    =
    \sum_{\ell=1}^{T/\tau} \mathcal{D}(\ell)
    \int_{(\ell-1) \tau}^{\ell \tau} dt  \mathcal{D}(t | \ell) \left( e^{-iH t} \dyad{0^n} e^{i H t} \right)^{\otimes k},
\end{equation}
where $\mathcal{D}(\ell)$ is the total probability to select a time in the $\ell$-th step, and $\mathcal{D}(t | \ell) = \mathcal{D}(t)/\mathcal{D}(\ell)$ is the conditional probability to select a time $t$ within the step $\ell$.
We can then apply the general bound,
\begin{equation} \nonumber
\begin{split}
	& \left\lVert 
	e^{-iH t} \dyad{0^n} e^{i H t} 
	- 
	e^{-iH \ell \tau} \dyad{0^n} e^{i H \ell \tau} 
    \right\rVert_1 \\
    & \quad \quad \leq 
    2 
    \left\lVert 
	(e^{-iH t} - 
	e^{-iH \ell \tau})
    \ket{0^n} 
    \right\rVert_2   \leq 
    2 
    \left\lVert 
	(e^{-iH (t-\ell \tau)} - 
	1)
    \ket{0^n} 
    \right\rVert_2  \leq 
    2 \lVert H \rVert_\infty  |t-\ell \tau|. \\
\end{split}
\end{equation}
where the first inequality follows from the inequality $\lVert \dyad{u} - \dyad{v} \rVert_1 \leq 2 \lVert \ket{u} - \ket{v} \rVert_2$~\cite{ma2025construct}.
In fact, we will need instead an identical version of this bound for the $k$-th power of both states,
\begin{equation} \nonumber
\begin{split}
	& \left\lVert 
	\left( e^{-iH t} \dyad{0^n} e^{i H t} \right)^{\otimes k} 
	- 
	\left( e^{-iH \ell \tau} \dyad{0^n} e^{i H \ell \tau} \right)^{\otimes k} 
    \right\rVert_1 \\
    & \quad \quad \quad \quad\quad \quad\quad \quad\quad \quad \leq 
    2 
    \left\lVert 
	((e^{-iH t})^{\otimes k} - 
	(e^{-iH \ell \tau})^{\otimes k})
    \ket{0^n}^{\otimes k} 
    \right\rVert_2   \leq 
    2 k \lVert H \rVert_\infty  |t-\ell \tau|. \\
\end{split}
\end{equation}
Inserting this bound into Eq.~(\ref{eq:decompose PhiE EPR}) and applying the triangle inequality to the integral over $t$, we find
\begin{equation} 
    \left\lVert \Phi_\mathcal{E}(\dyad{0^n}^{\otimes k})
    -
    \sum_{\ell=1}^{T/\tau} \mathcal{D}(\ell)
    \cdot  \left( e^{-iH \ell \tau} \dyad{0^n} e^{i H \ell \tau} \right)^{\otimes k} \right\rVert_1 \leq 2k \lVert H \rVert_\infty \tau.
\end{equation}
This shows $\Phi_\mathcal{E}(\dyad{0^n}^{\otimes k})$ can be approximated by a state with rank $T/\tau$ up to a small trace-norm error $2k\lVert H\rVert_\infty \tau$.

The proof is now complete. 
The state $\Phi_\mathcal{E}(\dyad{0^n}^{\otimes k})$ is $2k\lVert H \rVert_\infty \tau$-close to a state with rank $T/\tau$.
The state $\Phi_H(\dyad{0^n}^{\otimes k})$ has a flat spectrum with rank ${2^n + k - 1 \choose k}$.
A simple computation shows that any rank $r$ quantum state can be at most $2(1-r/r_0)$-close to a state with a flat spectrum of rank $r_0 > r$.
Hence, for the two states of interest to be $\varepsilon$-close, we must have
\begin{equation}
     2 \left(1-\frac{T}{\tau} {2^n+k-1 \choose k}^{-1} \right)
     -
     2k \lVert H \rVert_\infty \tau
    \leq \varepsilon.
\end{equation}
Let $\varepsilon = 1/4$ and set $\tau$ such that $2k \lVert H \rVert_\infty \tau = 1/4$ as well.
Hence, we require
\begin{equation}
     1-\frac{T}{\tau} {2^n+k-1 \choose k}^{-1}
    \leq \frac{1}{4},
\end{equation}
which requires
\begin{equation}
     T \geq \frac{3}{4} \tau {2^n+k-1 \choose k} = \frac{3}{32 k \lVert H \rVert_\infty} {2^n+k-1 \choose k} \geq \frac{3}{32 k \lVert H \rVert_\infty} \frac{2^{nk}}{k!}.
\end{equation}
If the ensemble involves $N_H$ Hamiltonians, the lower bound decreases by a factor of $N_H$.
This completes the proof. \qed






\end{document}